\newcommand{\ii}{\mathrm{i}}
\newcommand{\ee}{\mathrm{e}}
\newcommand{\T}{\mathrm{T}}
\newtheorem{rhp}{Riemann-Hilbert Problem}
\newtheorem{theorem}{Theorem}
\newtheorem{definition}{Definition}
\newtheorem{remark}{Remark}
\titleformat{\section}{\centering\LARGE\bfseries}{\thesection}{1em}{}
\titleformat{\subsection}{\Large\bfseries}{\thesubsection}{1em}{}
\begin{document}

\title{On the large-order asymtptics of Kuznetsov-Ma breathers}

\author{Liming Ling}
\address{School of Mathematics, South China University of Technology, Guangzhou, China 510641}
\email{linglm@scut.edu.cn}
\author{Xiaoen Zhang}
\address{School of Mathematics, South China University of Technology, Guangzhou, China 510641}
\email{zhangxiaoen@scut.edu.cn}

\begin{abstract}
We study the large-order asymptotics for the Kuznetsov-Ma breather of the nonlinear Schr\"{o}dinger equation in the far-field regime. With the aid of Darboux transformation, we first derive the corresponding Riemann-Hilbert representation for the high-order Kuznetsov-Ma breathers. Under the far-field limit, there are five asymptotical regions in the space-time plane where the breathers behave differently, the genus-two region, the algebraic-decay region, and three distinct genus-zero regions. With the aid of the Deift-Zhou nonlinear steepest decent method, we give the leading order term for each region and verify the consistency between the exact solution and the asymptotic solution numerically. Compared to the previous studies about the large-order asymptotic analysis of rogue waves and solitons,
we find a novel genus-two asymptotic region, which further enriches the research of large-order dynamics.

{\bf Keywords:} Nonlinear Schr\"odinger equation, high-order Kuznetsov-Ma breathers, asymptotic analysis, Riemann-Hilbert problem, Darboux transformation.

{\bf 2020 MSC:} 35Q55, 35Q51, 37K10, 37K15, 35Q15, 37K40.
\end{abstract}

\date{\today}

\maketitle

\section{Introduction}
The well-known one-dimensional focusing nonlinear Schr\"{o}dinger (NLS) equation
\begin{equation}\label{eq:nls}
\ii q_t+\frac{1}{2}q_{xx}+\left|q\right|^2q=0,
\end{equation}
is a completely integrable equation and can be studied via the inverse scattering transform \cite{shabat1972exact,novikov1984theory}. For an appropriate initial datum, we can calculate the scattering data explicitly. As to the standard $N$-soliton solutions, the scattering data consist of $N$ distinct first-order poles, and the real and imaginary parts of the poles represent the velocity and amplitude of soliton respectively. If these $N$ distinct poles degenerate into $N$-th order pole, we will get the $N$-th order solitons, each of which shares the same amplitude and velocity and is separated from each other with logarithmic type. These $N$-th order soliton solutions have evident different behaviors from the $N$-soliton solutions. Besides the soliton solutions, under the non-zero background, the NLS equation \eqref{eq:nls} also has rich family solutions, such as the rational rogue wave \cite{peregrine1983water,guo2012nonlinear}, the Kuznetsov-Ma breather (KMB)\cite{kuznetsov1977solitons,ma1979perturbed}, the Akhmediev breather (AB)\cite{akhmediev1986modulation} and the Tajiri-Watanabe breather \cite{tajiri1998breather}.
In general, breathers develop due to the instability of small amplitude chaotic perturbation, and they are also related to the modulational instability and the interference effects between a bright soliton and a plane-wave background \cite{dudley2009modulation,zhao2018mechanism}. Analogously, under the non-zero background, we can also get the high-order rogue waves, KMBs, ABs, Tajiri-Watanabe breathers \cite{kedziora2012second,wang2017generation}, and their mixtures. One direct physical explanation about the high-order breathers is that they are related to the high-order modulational instability of the plane waves. In the recent works \cite{Bilman-Duke-2019,Bilman-JDE-2021,Bilman-JNS-2019,Bilman-arxiv-2021}, the authors analyzed the large-order asymptotics of solitons and rogue waves in the near-field and far-field regimes, thus it is natural to consider what the large-order asymptotics of breathers will be.

In this paper, we would like to study the large-order asymptotics of KMBs under the far-field limit, and the analysis of the ABs will be performed in the near future. In a recent literature \cite{Bilman-2019-CPAM}, Bilman and Miller put forward the robust inverse scattering method and then give the Riemann-Hilbert problem (RHP) for the high-order rogue waves. Through two different scale transformations, they studied both the near-field \cite{Bilman-Duke-2019} and far-field asymptotics \cite{Bilman-arxiv-2021} for the rogue waves via the Deift-Zhou nonlinear steepest-descent method \cite{deift1993steepest,Deift1992steepest}. Meanwhile, Bilman, Buckingham and Wang also analyzed the near-field and far-field asymptotics for the high-order solitons \cite{Bilman-JDE-2021,Bilman-JNS-2019}. The results in these articles indicate a fact, in the near-field limit, the asymptotics of high-order rogue waves degenerate into the infinite-order solitons \cite{Bilman-Duke-2019}. But their far-field large-order asymptotic behaviors are quite different due to various formulas of their corresponding Riemann-Hilbert representations in the far-field regime, which can be verified from the Ref. \cite{Bilman-arxiv-2021}. In this paper, Bilman and Miller formulated a RHP for both the solitons and rogue waves, and the exponent phase term is written as $-\ii M\vartheta(\lambda; \chi, \tau)$, where the order $M$ can vary continuously. For the soliton solutions, $M=\frac{1}{2}k, k\in \mathbb{Z}_{\geq 0}$ and for the rogue waves, $M=\frac{1}{2}k+\frac{1}{4}, k\in \mathbb{Z}_{\geq 0}$. Recently, we analyzed the large-order asymptotics of high-order two-solitons with identical real parts \cite{ling2022large}, which can also be called the high-order breathers on the vanishing background. And we found a new genus-three region compared with the high-order solitons with a single spectrum. Although the rogue wave can be regarded as a limitation of the KMB, their dynamics are completely different. Their high-order counterparts also have distinct dynamics.  So it is meaningful to analyze the large-order dynamics of KMBs.  In contrast to the large-order asymptotics of high-order rogue waves \cite{Bilman-arxiv-2021}, we will confront some challenges for the KMB. A major difficulty comes from the special form of the phase term. On the one hand, compared to the large-order rogue waves, there are two kinds of singularities in the KMB phase term, one is the spectrum $\lambda=|\alpha|\ii\,(\alpha>1),$ the other one is $\lambda=\pm\ii$. On the other hand, compared to the large-order breathers on the vanishing background \cite{ling2022large}, there appears a new factor $\frac{\ii r}{4n}\log\left(\frac{\lambda-\ii}{\lambda+\ii}\right)$ in the phase term, where $n$ is the order number of KMB. Consequently, there appears a new branch cut $[-\ii, \ii]$. During the deformation of the RHP, an additional jump condition from this new cut leads to difficulties in the analysis.

Another motivation for this work is coming from the study of breather gas and integrable turbulence \cite{akhmediev2016breather,el2020spectral}. In general, turbulence can be expressed by the nonlinear modes of integrable systems. And the state of the turbulence is determined by the majority of excitations about the solitons or the breathers.
Since the breather is related to modulation instability, it is important in the formation of the chaotic wave field. Especially, when studying the chaotic wave field, we always assume that there are an infinite number of breathers. In the literature \cite{el2020spectral}, the authors gave a description of the soliton gas and breather gas with the finite-gap theory. For large-order solitons, we have obtained the high-genus region, thus we think for the large-order breathers, we can also get the high-genus region, the leading term in this region will be expressed with the Riemann-Theta function, which may help us understand the breather gas to a certain extent.

Before analysis, we first present some preliminaries about the NLS equation.
\subsection{The Riemann-Hilbert representation of large-order Kuznetsov-Ma breathers}\label{sec:rhp}
The Lax pair for the NLS equation \eqref{eq:nls} is
\begin{equation}\label{eq:laxpair}
\begin{aligned}
\pmb{\Phi}_{x}&=\mathbf{U}(\lambda; x, t)\pmb{\Phi},\quad \mathbf{U}=-\ii\lambda\sigma_3+\mathbf{Q},\\
\pmb{\Phi}_{x}&=\mathbf{V}(\lambda; x, t)\pmb{\Phi},\quad \mathbf{V}=-\ii\lambda^2\sigma_3+\lambda \mathbf{Q}+\frac{1}{2}\ii\sigma_3\left(\mathbf{Q}_x-\mathbf{Q}^2\right),
\end{aligned}
\end{equation}
where $\lambda$ is the spectral parameter and $\mathbf{Q}$ is given by
$$\mathbf{Q}=\begin{bmatrix}0&q\\-q^*&0
\end{bmatrix}.$$
Let $\Sigma_c$ be a vertical segment connecting $-\ii$ to $\ii$ with upward orientation.
With the seed solution $q=\ee^{\ii t}$, we can get the fundamental solution matrix
\begin{equation}\label{eq:funda-sol}
\pmb{\Phi}_{\rm bg}(\lambda; x, t)=\ee^{\frac{\ii t}{2}\sigma_3}n(\lambda)\begin{bmatrix}
1&\ii \lambda-\ii\rho(\lambda)\\\ii \lambda-\ii \rho(\lambda)&1
\end{bmatrix}\ee^{-\ii\theta(\lambda; x, t)\sigma_3}:=\ee^{\frac{\ii t}{2}\sigma_3}\mathbf{E}(\lambda)\ee^{-\ii\theta(\lambda; x, t)\sigma_3},
\end{equation}
where $\rho(\lambda)$ and $n(\lambda)$ are two analytic functions for $\lambda\notin \Sigma_c$ satisfying the conditions
$\rho^2(\lambda)=1+\lambda^2, \,\,\, n^2(\lambda)=\frac{\lambda+\rho(\lambda)}{2\rho(\lambda)}$ respectively,
and $\theta(\lambda; x, t)=\rho(\lambda)\left(x+\lambda t\right)$. Thus, to obtain a holomorphic matrix solution in $\mathbb{C}$ for Lax pair \eqref{eq:laxpair} with $q=\ee^{\ii t}$,
we normalize the above solution to $\pmb{\Phi}_{\rm bg}^{\rm in}(\lambda; x, t)=\ee^{\frac{\ii t}{2}\sigma_3}\mathbf{E}(\lambda)\ee^{-\ii\theta(\lambda; x, t)\sigma_3}\mathbf{E}^{-1}(\lambda)$.
One of the simplest methods to derive the high-order KMB solutions is using the high-order Darboux transformation\cite{terng2000backlund,ling2016multi}, which is shown in the following theorem.
\begin{theorem}
Suppose there exists a smooth solution $q\in L^{\infty}(\mathbb{R}^2)\cup C^{\infty}(\mathbb{R}^2)$, the Lax solution $\pmb{\Phi}(\lambda; x, t)$ is a holomorphic function in the whole complex plane $\mathbb{C}$, then the Darboux transformation for the linear system \eqref{eq:laxpair} can be given by
\begin{equation}\label{eq:DT-breather}
\begin{aligned}
\mathbf{T}_{n}(\lambda; x, t)&=\mathbb{I}+\mathbf{Y}_n\mathbf{M}^{-1}\mathbf{D}\mathbf{Y}_{n}^{\dagger}, \qquad \mathbf{M}=\mathbf{X}^{\dagger}\mathbf{S}\mathbf{X},\\
\end{aligned}
\end{equation}
where ${\mathbf{Y}}_{n}=\left[\pmb{\phi}_{1}^{[0]},\pmb{\phi}_{1}^{[1]}, \cdots, \pmb{\phi}_{1}^{[n-1]}\right]$ and
\begin{equation*}
\begin{aligned}
{\mathbf{D}}&=\begin{bmatrix}\frac{1}{\lambda-\lambda_1^*}&0&\cdots&0\\
\frac{1}{(\lambda-\lambda_1^*)^2}&\frac{1}{\lambda-\lambda_1^*}&\cdots&0\\
\vdots&\vdots&\ddots&\vdots\\
\frac{1}{\left(\lambda-\lambda_1^*\right)^{n}}&\frac{1}{\left(\lambda-\lambda_1^* \right)^{n-1}}&\cdots&\frac{1}{\lambda-\lambda_1^*}
\end{bmatrix},\qquad
{\mathbf{X}}=\begin{bmatrix}\pmb{\phi}_{1}^{[0]}&\pmb{\phi}_{1}^{[1]}&\cdots&\pmb{\phi}_{1}^{[n-1]}\\
0&\pmb{\phi}_{1}^{[0]}&\cdots&\pmb{\phi}_{1}^{[n-2]}\\
\vdots&\vdots&\ddots&\vdots\\
0&0&\cdots&\pmb{\phi}_{1}^{[0]}
\end{bmatrix},\\
{\mathbf{S}}&=\left( \binom{i+j-2}{i-1}\frac{(-1)^{i-1}\mathbb{I}_2}{(\lambda_1^*-\lambda_1)^{i+j-1}}\right)_{1\leq i,j\leq n},
\end{aligned}
\end{equation*}
with $\pmb{\phi}_1^{[k]}=\frac{1}{k!}\left(\frac{\rm d}{{\rm d}\lambda}\right)^{k}\pmb{\phi}_1|_{\lambda=\lambda_1}$ and $$\pmb{\phi}_1=\ee^{\frac{\ii t}{2}\sigma_3}
\left\{\frac{\sin(\rho(\lambda)(x+\lambda t))}{\rho(\lambda)}\begin{bmatrix}
-\ii\lambda c_1+c_2\\
\ii\lambda c_2-c_1 \\
\end{bmatrix}+\cos(\rho(\lambda)(x+\lambda t))\begin{bmatrix}
c_1\\
c_2 \\
\end{bmatrix}\right\},$$
$c_1, c_2$ are arbitrary complex constants, $\lambda_1\in \ii\mathbb{R}$ and $|\lambda_1|>1$.  The corresponding B\"acklund transformation between $q$ and $q^{[n]}$ is represented in terms of determinant form:
\begin{equation}\label{eq:bt}
q^{[n]}=\ee^{\ii t}+2\ii \mathbf{Y}_{n,1}\mathbf{M}^{-1}\mathbf{Y}_{n,2}^{\dag}=\frac{\ee^{\ii t}\det(\mathbf{M}+2\ii\ee^{-\ii t} \mathbf{Y}_{n,1}\mathbf{Y}_{n,2}^{\dag})}{\det(\mathbf{M})}.
\end{equation}
\end{theorem}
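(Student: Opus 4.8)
\medskip
\noindent\textbf{Proof strategy.} The first assertion --- that $\mathbf{T}_n$ in \eqref{eq:DT-breather} is a Darboux matrix for the Lax pair \eqref{eq:laxpair} --- is the confluent (generalized) Darboux transformation and holds for any smooth $q$ with an entire column Lax solution $\pmb{\phi}_1$ at $\lambda_1$; the B\"acklund formula \eqref{eq:bt} is its specialization to the seed $q=\ee^{\ii t}$ with the explicit $\pmb{\phi}_1$ displayed. The plan is to follow the classical route of \cite{terng2000backlund,ling2016multi} and verify four points: (i) $\mathbf{T}_n$ is rational in $\lambda$ with $\mathbf{T}_n(\infty;x,t)=\mathbb{I}$ and a single pole, of order at most $n$, at $\lambda=\lambda_1^*$ --- immediate, since the only $\lambda$-dependence sits in $\mathbf{D}$; (ii) $\mathbf{T}_n$ satisfies the unitarity-type symmetry $\mathbf{T}_n(\lambda^*;x,t)^{\dagger}\mathbf{T}_n(\lambda;x,t)=\mathbb{I}$, so $\mathbf{T}_n^{-1}$ is rational with a single pole of order at most $n$ at $\lambda=\lambda_1$; (iii) the gauge-transformed matrices $\mathbf{U}^{[n]}:=(\mathbf{T}_{n,x}+\mathbf{T}_n\mathbf{U})\mathbf{T}_n^{-1}$ and $\mathbf{V}^{[n]}:=(\mathbf{T}_{n,t}+\mathbf{T}_n\mathbf{V})\mathbf{T}_n^{-1}$ are again polynomials in $\lambda$, of degrees $1$ and $2$, with the same structure as $\mathbf{U},\mathbf{V}$; and (iv) read $q^{[n]}$ off the expansion of $\mathbf{T}_n$ at $\lambda=\infty$. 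Throughout one needs $\det\mathbf{M}\neq0$ (true off a measure-zero set in $(x,t)$, and everywhere for the focusing KMB because $\lambda_1\in\ii\mathbb{R}$, $|\lambda_1|>1$); the same restrictions keep $\lambda_1,\lambda_1^*$ off the branch cut $\Sigma_c$, so that $\rho,n$ and hence $\pmb{\phi}_1$ are well defined there. A one-line substitution into \eqref{eq:laxpair} using $\rho^2=1+\lambda^2$ and $\theta=\rho(x+\lambda t)$ confirms that the displayed $\pmb{\phi}_1$ is a Lax solution at $\lambda=\lambda_1$ for $q=\ee^{\ii t}$, and it is entire in $\lambda$ because $\sin(\rho z)/\rho$ and $\cos(\rho z)$ are power series in $\rho^2=1+\lambda^2$; this is what makes it an admissible seed.

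For (ii) the key observation is that $\mathbf{S}$ is exactly the matrix of Taylor coefficients of the Cauchy kernel at $(\lambda_1^*,\lambda_1)$: setting $\lambda_j=\lambda_1+\varepsilon_j$ and expanding $(\lambda_i^*-\lambda_j)^{-1}=(\lambda_1^*-\lambda_1)^{-1}\sum_{m\geq0}(-1)^m(\varepsilon_i^*-\varepsilon_j)^m(\lambda_1^*-\lambda_1)^{-m}$, then $(\varepsilon_i^*-\varepsilon_j)^m$ by the binomial theorem, the coefficient of $(\varepsilon_i^*)^{i-1}\varepsilon_j^{j-1}$ is $\binom{i+j-2}{i-1}(-1)^{i-1}(\lambda_1^*-\lambda_1)^{-(i+j-1)}$, i.e.\ the $(i,j)$ block of $\mathbf{S}$. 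Hence $\mathbf{M}=\mathbf{X}^{\dagger}\mathbf{S}\mathbf{X}$ is the confluent limit $\lambda_2,\dots,\lambda_n\to\lambda_1$ of the Cauchy matrix $\bigl(\pmb{\phi}_1(\lambda_i)^{\dagger}\pmb{\phi}_1(\lambda_j)/(\lambda_i^*-\lambda_j)\bigr)$ attached to an ordered product of $n$ elementary Darboux factors, and $\mathbf{T}_n$ is the corresponding confluent limit of that product. The symmetry $\mathbf{T}_n(\lambda^*)^{\dagger}\mathbf{T}_n(\lambda)=\mathbb{I}$ then descends from the elementary cancellation $\alpha+\beta+\alpha\beta=0$ for a single factor $\mathbb{I}+\tfrac{\lambda_1^*-\lambda_1}{\lambda-\lambda_1^*}\,\pmb{\phi}_1(\lambda_1)\pmb{\phi}_1(\lambda_1)^{\dagger}\big/\bigl(\pmb{\phi}_1(\lambda_1)^{\dagger}\pmb{\phi}_1(\lambda_1)\bigr)$, with $\beta=(\lambda_1^*-\lambda_1)/(\lambda-\lambda_1^*)$ and $\alpha=(\lambda_1-\lambda_1^*)/(\lambda-\lambda_1)$.

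The heart of the proof --- and the step I expect to be the main obstacle --- is (iii): removing the apparent poles of $\mathbf{U}^{[n]},\mathbf{V}^{[n]}$ at $\lambda_1$ and $\lambda_1^*$. By (ii) it suffices to treat $\lambda_1$, where $\mathbf{T}_n^{-1}$ carries the pole while $\mathbf{T}_{n,x}+\mathbf{T}_n\mathbf{U}$ stays holomorphic. The precise shapes of $\mathbf{D},\mathbf{X},\mathbf{S}$ are rigged so that $\mathbf{T}_n(\lambda)\pmb{\phi}_1(\lambda)=O\bigl((\lambda-\lambda_1)^{n}\bigr)$ near $\lambda_1$; differentiating the jet relations $\bigl(\tfrac{\dd}{\dd\lambda}\bigr)^{k}\bigl[\mathbf{T}_n(\lambda)\pmb{\phi}_1(\lambda)\bigr]\big|_{\lambda=\lambda_1}=0$ ($k=0,\dots,n-1$) in $x$, and using the hierarchy $\pmb{\phi}_{1,x}^{[k]}=\mathbf{U}(\lambda_1)\pmb{\phi}_1^{[k]}-\ii\sigma_3\pmb{\phi}_1^{[k-1]}$ obtained by $\lambda$-differentiating $\pmb{\phi}_{1,x}=\mathbf{U}(\lambda)\pmb{\phi}_1$, shows that $\mathbf{T}_{n,x}+\mathbf{T}_n\mathbf{U}$ annihilates the principal part of $\mathbf{T}_n^{-1}$ at $\lambda_1$, so the would-be pole of $\mathbf{U}^{[n]}$ cancels; the same computation with $\mathbf{V}$ and the time flow handles $\mathbf{V}^{[n]}$. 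Being rational with no finite poles and growing like $-\ii\lambda\sigma_3$ (resp.\ $-\ii\lambda^2\sigma_3$) at $\infty$, a Liouville argument forces $\mathbf{U}^{[n]},\mathbf{V}^{[n]}$ to be polynomials of degrees $1$ and $2$; the inherited symmetry $\mathbf{U}^{[n]}(\lambda)^{\dagger}=-\mathbf{U}^{[n]}(\lambda^*)$ then gives $\mathbf{U}^{[n]}=-\ii\lambda\sigma_3+\mathbf{Q}^{[n]}$ with $\mathbf{Q}^{[n]}$ off-diagonal and $(\mathbf{Q}^{[n]})^{\dagger}=-\mathbf{Q}^{[n]}$, while zero-curvature compatibility fixes $\mathbf{V}^{[n]}$ in terms of the same $\mathbf{Q}^{[n]}$; thus the NLS structure is preserved and $q^{[n]}=(\mathbf{Q}^{[n]})_{12}$ solves \eqref{eq:nls}.

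Finally, for (iv), expand $\mathbf{T}_n(\lambda)=\mathbb{I}+\lambda^{-1}\mathbf{T}_n^{[1]}+O(\lambda^{-2})$; since $\lambda\mathbf{D}\to\mathbb{I}_n$ as $\lambda\to\infty$ one obtains $\mathbf{T}_n^{[1]}=\mathbf{Y}_n\mathbf{M}^{-1}\mathbf{Y}_n^{\dagger}$. Matching the $O(1)$ term of $\mathbf{U}^{[n]}=(\mathbf{T}_{n,x}+\mathbf{T}_n\mathbf{U})\mathbf{T}_n^{-1}$ (the $\mathbf{T}_{n,x}\mathbf{T}_n^{-1}$ piece being $O(\lambda^{-1})$) gives $\mathbf{Q}^{[n]}=\mathbf{Q}+\ii[\sigma_3,\mathbf{T}_n^{[1]}]$, whose $(1,2)$ entry, after splitting $\mathbf{Y}_n$ into its rows $\mathbf{Y}_{n,1},\mathbf{Y}_{n,2}$ and inserting $q=\ee^{\ii t}$, is $q^{[n]}=\ee^{\ii t}+2\ii\,\mathbf{Y}_{n,1}\mathbf{M}^{-1}\mathbf{Y}_{n,2}^{\dagger}$. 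The determinant form in \eqref{eq:bt} is then immediate from $q^{[n]}=\ee^{\ii t}\bigl(1+2\ii\ee^{-\ii t}\,\mathbf{Y}_{n,1}\mathbf{M}^{-1}\mathbf{Y}_{n,2}^{\dagger}\bigr)$ together with the matrix determinant lemma $1+\mathbf{a}^{\T}\mathbf{M}^{-1}\mathbf{b}=\det(\mathbf{M}+\mathbf{b}\mathbf{a}^{\T})/\det\mathbf{M}$.
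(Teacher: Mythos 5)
The paper itself gives no proof of this theorem: it is stated as imported from the cited works on B\"acklund/Darboux transformations \cite{terng2000backlund,ling2016multi}, so there is no in-paper argument to compare against. Judged on its own, your outline is the standard and correct route for the confluent Darboux transformation, and the individual steps you describe check out: the normalization and pole structure of $\mathbf{T}_n$, the symmetry $\mathbf{T}_n(\lambda^*)^{\dagger}\mathbf{T}_n(\lambda)=\mathbb{I}$ via the elementary cancellation $\alpha+\beta+\alpha\beta=0$, the hierarchy $\pmb{\phi}_{1,x}^{[k]}=\mathbf{U}(\lambda_1)\pmb{\phi}_1^{[k]}-\ii\sigma_3\pmb{\phi}_1^{[k-1]}$, the Liouville argument for $\mathbf{U}^{[n]},\mathbf{V}^{[n]}$, the expansion $\mathbf{T}_n^{[1]}=\mathbf{Y}_n\mathbf{M}^{-1}\mathbf{Y}_n^{\dagger}$ giving $\mathbf{Q}^{[n]}=\mathbf{Q}+\ii[\sigma_3,\mathbf{T}_n^{[1]}]$, and the matrix determinant lemma for the determinant form (note the paper's $\mathbf{Y}_{n,1}\mathbf{Y}_{n,2}^{\dag}$ inside the determinant should be read as the rank-one $n\times n$ matrix $\mathbf{Y}_{n,2}^{\dag}\mathbf{Y}_{n,1}$, as your version of the lemma correctly has it).

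The one place where you assert rather than prove is exactly the step you yourself flag as the crux: that the displayed $\mathbf{D}$, $\mathbf{X}$, $\mathbf{S}$ are ``rigged'' so that $\mathbf{T}_n(\lambda)\pmb{\phi}_1(\lambda)=O\bigl((\lambda-\lambda_1)^n\bigr)$. Your identification of $\mathbf{S}$ as the Taylor-coefficient matrix of the Cauchy kernel at $(\lambda_1^*,\lambda_1)$ is correct and is the right lemma, but you still need to close the loop: either (a) show directly from $\mathbf{M}=\mathbf{X}^{\dagger}\mathbf{S}\mathbf{X}$ and the lower-triangular Toeplitz structure of $\mathbf{D}$ that the jet relations $\bigl(\tfrac{\dd}{\dd\lambda}\bigr)^{k}\bigl[\mathbf{T}_n\pmb{\phi}_1\bigr]\big|_{\lambda=\lambda_1}=0$ hold for $k=0,\dots,n-1$ (a finite linear-algebra computation with the principal parts of $\mathbf{D}$ at $\lambda_1^*$ and the expansion of $\pmb{\phi}_1$ at $\lambda_1$), or (b) show that the confluent limit of the ordered product of $n$ elementary factors actually produces the closed form $\mathbb{I}+\mathbf{Y}_n\mathbf{M}^{-1}\mathbf{D}\mathbf{Y}_n^{\dagger}$, after which the kernel property is inherited from the non-degenerate case. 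Without one of these, the pole cancellation in step (iii) — and hence the whole theorem — rests on an unverified formula. Your non-degeneracy remark on $\det\mathbf{M}$ is likewise only sketched; for $\lambda_1\in\ii\mathbb{R}_{>1}$ the definiteness of the underlying Cauchy kernel on the upper half-plane is what makes $\mathbf{M}$ invertible for all $(x,t)$, and that deserves a line. Everything else in your proposal is sound.
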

\begin{remark}
By choosing different parameters $\lambda_1$, we can get different kinds of breathers. If $\lambda_1=\alpha \ii$ with $0<\alpha<1$,
we can get the so-called AB \cite{akhmediev1986modulation}, which is localized in the $t$ direction and periodic in the $x$ direction.
If $\lambda_1=\alpha \ii$ with $\alpha>1$, then we can get the so-called KMB \cite{kuznetsov1977solitons,ma1979perturbed}, which is localized in the $x$ direction and periodic in the $t$ direction.
If $\lambda_1=\ii$, then this special peregrine breather \cite{peregrine1983water} is also called the rogue wave solution. If $\lambda_1+\lambda_1^*\neq 0$, then we can obtain the Tajiri-Wantanbe breather \cite{tajiri1998breather}.
\end{remark}
Especially, for $n=1$, the complex constants $c_1, c_2$ can be absorbed by the phase term $\theta(\lambda; x, t)$, then the fundamental solution $\phi^{[0]}_1$ can be written as another equivalent formula,
\begin{equation}
\phi_{1}^{[0]}=\ee^{\frac{\ii t}{2}\sigma_3}\left\{(\ii\lambda_1-1)\frac{\sin(\rho(\lambda_1)(x+\lambda_1 t+c))}{\rho(\lambda_1)}\begin{bmatrix}
-1\\
1\\
\end{bmatrix}+\cos(\rho(\lambda_1)(x+\lambda_1 t+c))\begin{bmatrix}
1\\
1 \\
\end{bmatrix}\right\},
\end{equation}
where the parameters $c$ and $c_1, c_2$ satisfy the following relations,
\begin{equation}
c_1=-\frac{(\ii\lambda_1-1)\sin(\rho(\lambda_1)c)}{\rho(\lambda_1)}+\cos(\rho(\lambda_1)c), \quad c_2=\frac{(\ii\lambda_1-1)\sin(\rho(\lambda_1)c)}{\rho(\lambda_1)}+\cos(\rho(\lambda_1)c).
\end{equation}
Under the choice of the above parameters, the first-order KMB reads
\begin{equation}
q^{[1]}=\ee^{\ii t}\left(\frac{\cos\left[\sinh(2\phi)(t-t_0)-2\ii\phi\right]-\cosh(\phi)\cosh\left[2\sinh(\phi)(x-x_0)\right]}{\cosh(\phi)\cosh\left[2\sinh(\phi)(x-x_0)\right]
-\cos\left[\sinh(2\phi)(t-t_0)\right]}\right),
\end{equation}
where
$\phi={\rm arccosh}(-\ii\lambda_1), x_0=-\Re(c), t_0=-\frac{\Im(c)}{\cosh(\phi)}$. It is clear that the KMB $q^{[1]}$ is localized in the $x$ direction and periodic in the $t$ direction.
Choosing proper parameters $c_1$ and $c_2$, we exhibit the second up to the fourth order breathers by the computer graphics (Fig. \ref{fig:asym-KM}). For $|c_1|\neq |c_2|$, these solutions present asymmetric features.
As the increasing of order, the expressions for the high-order breathers are enormous,
so it is hard to analyze the dynamic behavior for these solutions directly.  An alternative way to deal with large-order solutions is utilizing the Riemann-Hilbert representation.
\begin{figure}
\centering
\includegraphics[width=0.9\textwidth]{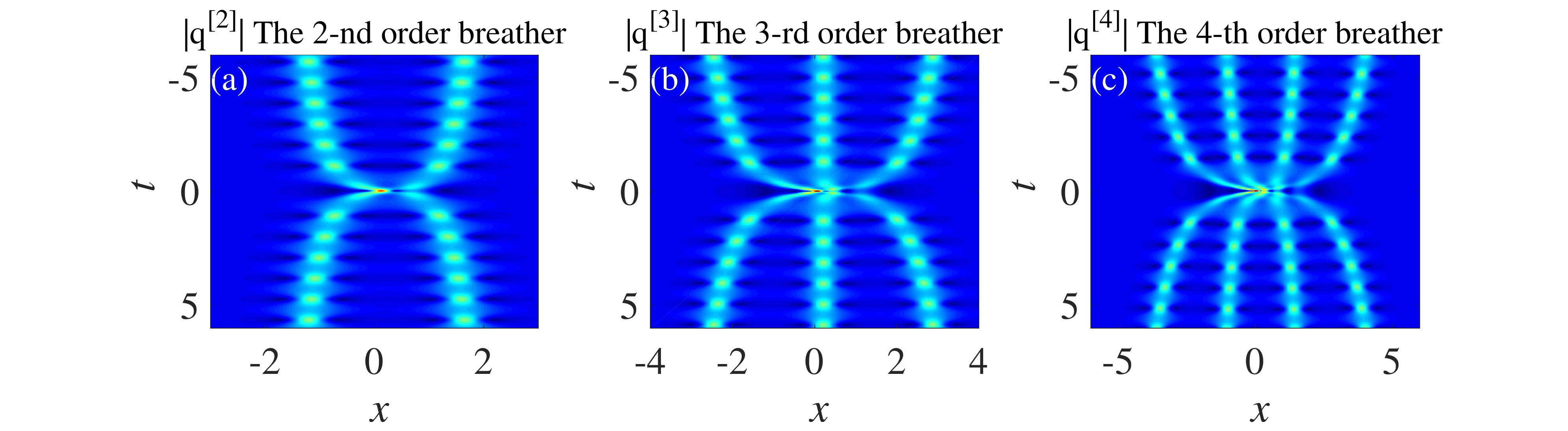}
\caption{The asymmetric soliton solutions by choosing $\lambda_1=2\ii, c_1=1, c_2=5.$}
\label{fig:asym-KM}
\end{figure}
\begin{remark}
For the fundamental solution Eq.\eqref{eq:funda-sol}, the phase term is written as $\theta(\lambda; x, t)=\rho(\lambda)\left(x+\lambda t\right).$
Actually, we can multiply a diagonal matrix independent of the variables $x, t$ from the righthand side of $\mathbf{E}(\lambda)$.
In other words, we can add a polynomial of $\lambda$ into the phase term. In that case, the breather shape will change correspondingly.
But in the following study, we still choose the phase term $\theta(\lambda; x, t)$ such that the Darboux matrix $\mathbf{T}_{n}(\lambda; x, t)$ has a good decomposition at the point $(x,t)=(0,0)$.
If $c_1$ and $c_2$ are independent of spectral parameters $\lambda_1$, it follows that, by a direct calculation, this decomposition of Darboux matrix is given as follows,
\begin{equation}
	\left(\frac{\lambda-\lambda_1}{\lambda-\lambda_1^*}\right)^{-n/2}\mathbf{T}_{n}(\lambda; 0,0)=\mathbf{Q}_{c}\left(\frac{\lambda-\lambda_1}{\lambda-\lambda_1^*}\right)^{\frac{1}{2}n\sigma_3}\mathbf{Q}_{c}^{-1}, \quad \mathbf{Q}_{c}=\frac{1}{|\mathbf{c}|}\begin{bmatrix}c_1&- c_2^*\\c_2&c_1^*
	\end{bmatrix},\,|\mathbf{c}|=\sqrt{|c_1|^2+|c_2|^2}.
\end{equation}
\end{remark}
With the aid of the above decomposition of Darboux transformation, the solution $q^{[n]}(x,t)$ at the point $(x,t)=(0,0)$ is given easily,
\begin{equation}
	q^{[n]}(0,0)=1+\frac{4\Im(\lambda_1)nc_1c_2^*}{|\mathbf{c}|^2}.
\end{equation}
If $c_1=c_2$, the origin point is also the location of the maximal value of norm $|q^{[n]}(x,t)|$, which can be proved by the mean value inequality.

	From the definition of $\mathbf{E}(\lambda)$ in Eq.\eqref{eq:funda-sol}, we know that it can be decomposed into $\mathbf{E}(\lambda)=\mathbf{Q}_{d}\left(\frac{\lambda-\ii}{\lambda+\ii}\right)^{\frac{1}{4}\sigma_3}\mathbf{Q}_{d}^{-1}$, where $\mathbf{Q}_{d}=\frac{1}{\sqrt{2}}\begin{bmatrix}1&-1\\
		1&1
	\end{bmatrix}$. We would like to establish a RHP to study the large-order asymptotics of KMBs in the far-field regime. Fortunately, we can get two types of RHPs under some constraints to the parameters $c_1, c_2$, one is $c_1=c_2$ and the other is $c_1=-c_2$. That is, $\mathbf{Q}_c=\mathbf{Q}_d$ and $\mathbf{Q}_c=\mathbf{Q}_d^{-1}$.
For the general $c_1$ and $c_2$, we just exhibit some exact solutions figures in Fig.\ref{fig:asym-KM}. The corresponding large-order asymptotics has some additional difficulties to be overcome. We have not yet thought of a suitable RHP to deal with the general case, where the difficulty is that the matrix $\mathbf{Q}_{c}$ in the Darboux matrix $\mathbf{T}_n(\lambda; 0,0)$ and $\mathbf{Q}_{d}$ appearing in $\mathbf{E}(\lambda)$ have not evident relations. Thus the RHP given in the current paper will not be available anymore. Next, we merely construct the RHP for the above-mentioned two special cases.

Based on the idea of normalization, by using the fundamental solution \eqref{eq:funda-sol} and the Darboux matrix \eqref{eq:DT-breather},
we define two sectional analytic matrices
\begin{equation}
\mathbf{M}^{[n]}(\lambda; x, t):=\left\{\begin{aligned}&r^{\frac{1}{2}\sigma_3}\mathbf{T}_{n}(\lambda; x, t)\ee^{\frac{\ii t}{2}\sigma_3}\mathbf{E}(\lambda)\ee^{-\ii\theta(\lambda; x, t)\sigma_3}\mathbf{E}^{-1}(\lambda)\mathbf{T}_{n}(\lambda; 0, 0)^{-1}
\mathbf{Q}_{d}^{r}\ee^{\ii\hat\theta(\lambda; x, t)\sigma_3}r^{-\frac{1}{2}\sigma_3},\,\,\lambda\, \text{inside }\, D_0,
\\&\left(\frac{\lambda-\lambda_1}{\lambda-\lambda_1^*}\right)^{-n/2}r^{\frac{1}{2}\sigma_3}\mathbf{T}_{n}(\lambda; x, t)\ee^{\frac{\ii t}{2}\sigma_3}\mathbf{E}(\lambda)
\ee^{\ii\left[\hat\theta(\lambda; x, t)-\theta(\lambda; x, t)\right]\sigma_3}\\ &\qquad \qquad \qquad \qquad \qquad\qquad  \times
\left(\frac{\lambda-\lambda_1}{\lambda-\lambda_1^*}\right)^{-n/2\sigma_3}\left(\frac{\lambda-\ii}{\lambda+\ii}\right)^{-1/4 r\sigma_3}r^{-\frac{1}{2}\sigma_3},
\lambda\, \text{exterior to} \,D_0,\end{aligned}\right.
\end{equation}
where $\hat{\theta}(\lambda; x, t):=\lambda\left(x+\lambda t\right), r=\pm 1.$ And $r=1$ represents the case $c_1=c_2$; $r=-1$ corresponds to the other case $c_1=-c_2$. $D_0$ is a big closed contour involving the spectra $\pm\lambda_1, \pm\ii$.
Then the newly defined matrix $\mathbf{M}^{[n]}(\lambda; x, t)$ satisfies the following RHP.
\begin{rhp}\label{rhp:reform}(KMB of order $n$-reformulation)
Let $(x, t)\in\mathbb{R}^2$ be arbitrary parameters, and $n\in\mathbb{Z}_{>0}$. Then we can find a $2\times 2$ matrix function $\mathbf{M}^{[n]}(\lambda; x, t)$ with the following properties:
\begin{itemize}
\item \textbf{Analyticity}: $\mathbf{M}^{[n]}(\lambda; x, t)$ is analytic for $\lambda\in\mathbb{C}\setminus \partial D_0$. It takes the continuous boundary values from the interior and exterior of $\partial D_0$.
\item \textbf{Jump condition}: The boundary values on the jump contour $\partial D_0$ are related by
    \begin{multline}
    \mathbf{M}^{[n]}_{+}(\lambda; x, t)=\mathbf{M}^{[n]}_{-}(\lambda; x,t)\ee^{-\ii\hat{\theta}(\lambda; x, t)\sigma_3}\left(\frac{\lambda-\lambda_1}{\lambda-\lambda_1^*}\right)^{\frac{n}{2}\sigma_3}
    \left(\frac{\lambda-\ii}{\lambda+\ii}\right)^{\frac{1}{4}r\sigma_3}\\\times
    \mathbf{Q}_{d}^{-1}\left(\frac{\lambda-\ii}{\lambda+\ii}\right)^{- \frac{1}{4}r\sigma_3} \left(\frac{\lambda-\lambda_1}{\lambda-\lambda_1^*}\right)^{-\frac{n}{2}\sigma_3}\ee^{\ii\hat{\theta}(\lambda; x, t)\sigma_3},\quad  \lambda\in\partial D_0.
    \end{multline}
\item \textbf{Normalization}: $\mathbf{M}^{[n]}(\lambda; x, t)=\mathbb{I}+\mathcal{O}(\lambda^{-1}),$ as $\lambda\to\infty.$
\end{itemize}
The potential $q^{[n]}(x, t)$ can be recovered with
\begin{equation}
	q^{[n]}(x, t)=2\ii r\lim \limits_{\lambda\to\infty}\lambda \mathbf{M}^{[n]}(\lambda; x, t)_{12}.
	\end{equation}
\end{rhp}
The existence and uniqueness of above RHP can be proved by the Zhou's vanishing lemma \cite{zhou1989riemann}.
With this RHP \ref{rhp:reform}, we prepare to study the large-order asymptotics of KMBs in the far-field regime. Before discussing it, we first introduce a scale transformation of $x$ and $t$
such that they have the same order with the factor $\left(\frac{\lambda-\lambda_1}{\lambda-\lambda_1^*}\right)^{\pm \frac{n}{2}\sigma_3}$,
\begin{equation}\label{eq:scale-trans}
x=n \chi,\quad t=n \tau,
\end{equation}
then the jump matrix in RHP \ref{rhp:reform} changes into
\begin{equation}
\mathbf{M}^{[n]}_{+}(\lambda; n\chi, n\tau)=\mathbf{M}^{[n]}_{-}(\lambda; n\chi, n\tau)\ee^{-\ii n\vartheta(\lambda; \chi, \tau)\sigma_3}\mathbf{Q}_{d}^{-1}\ee^{\ii n\vartheta(\lambda; \chi, \tau)\sigma_3},\quad  \lambda\in\partial D_0,
\end{equation}
where
\begin{equation}\label{eq:phase-term}
\vartheta(\lambda; \chi, \tau)=\lambda\chi+\lambda^2 \tau+\frac{1}{2}\ii\log\left(\frac{\lambda-\lambda_1}{\lambda-\lambda_1^*}\right)
+\frac{\ii r}{4n}\log\left(\frac{\lambda-\ii}{\lambda+\ii}\right).
\end{equation}
Compared to the large-order solitons with single spectrum, there adds a factor $\frac{\ii r}{4n}\log\left(\frac{\lambda-\ii}{\lambda+\ii}\right)$ in
the large-order KMBs. When $n\to\infty$, this factor will vanish, but for the large-order asymptotics, these two solutions have distinct behaviors.
Next, we give several decompositions to this constant matrix $\mathbf{Q}_{d}^{-1}$:
\begin{equation}\label{remark:decom}
	\begin{aligned}
		\mathbf{Q}_{d}^{-1}&=\begin{bmatrix}\frac{\sqrt{2}}{2}&0\\0&\sqrt{2}
		\end{bmatrix}\begin{bmatrix}1&0\\-\frac{1}{2}&1
		\end{bmatrix}\begin{bmatrix}1&1\\0&1
		\end{bmatrix}:=\mathbf{Q}_{L}^{[1]}\mathbf{Q}_C^{[1]}\mathbf{Q}_{R}^{[1]},&\quad \left(``{\rm DLU}"\right),\\
		\mathbf{Q}_{d}^{-1}&=\begin{bmatrix}\sqrt{2}&0\\0&\frac{\sqrt{2}}{2}
		\end{bmatrix}\begin{bmatrix}1&\frac{1}{2}\\0&1
		\end{bmatrix}\begin{bmatrix}1&0\\-1&1
		\end{bmatrix}:=\mathbf{Q}_{L}^{[2]}\mathbf{Q}_C^{[2]}\mathbf{Q}_{R}^{[2]},&\quad \left(``{\rm DUL}"\right),\\
		\mathbf{Q}_{d}^{-1}&=\mathbf{Q}_{L}^{[2]}\begin{bmatrix}1&-\frac{1}{2}\\0&1
		\end{bmatrix}\begin{bmatrix}0&1\\-1&0
		\end{bmatrix}\begin{bmatrix}1&-1\\0&1
		\end{bmatrix}:=\mathbf{Q}_{L}^{[2]}\mathbf{Q}_{L}^{[3]}\mathbf{Q}_C^{[3]}\mathbf{Q}_{R}^{[3]},&\quad \left(``\rm{DUTU}"\right),\\
		\mathbf{Q}_{d}^{-1}&=\mathbf{Q}_{L}^{[1]}\begin{bmatrix}1&0\\\frac{1}{2}&1
		\end{bmatrix}\begin{bmatrix}0&1\\-1&0
		\end{bmatrix}\begin{bmatrix}1&0\\1&1
		\end{bmatrix}:=\mathbf{Q}_{L}^{[1]}\mathbf{Q}_{L}^{[4]}\mathbf{Q}_C^{[4]}\mathbf{Q}_{R}^{[4]},&\quad \left(``\rm{DLTL}"\right),
	\end{aligned}
\end{equation}
which will be useful in the following analysis.

In this paper, we are focusing on the study of the large-order asymptotics of KMBs, without loss of generality, we choose two types of parameters $\lambda_1=\frac{3}{2}\ii, c_1=-c_2=1,$ and $ \lambda_1=2\ii, c_1=c_2=1$, and give the density plots in Fig. \ref{fig:KM}.
\begin{figure}
\centering
\includegraphics[width=1\textwidth]{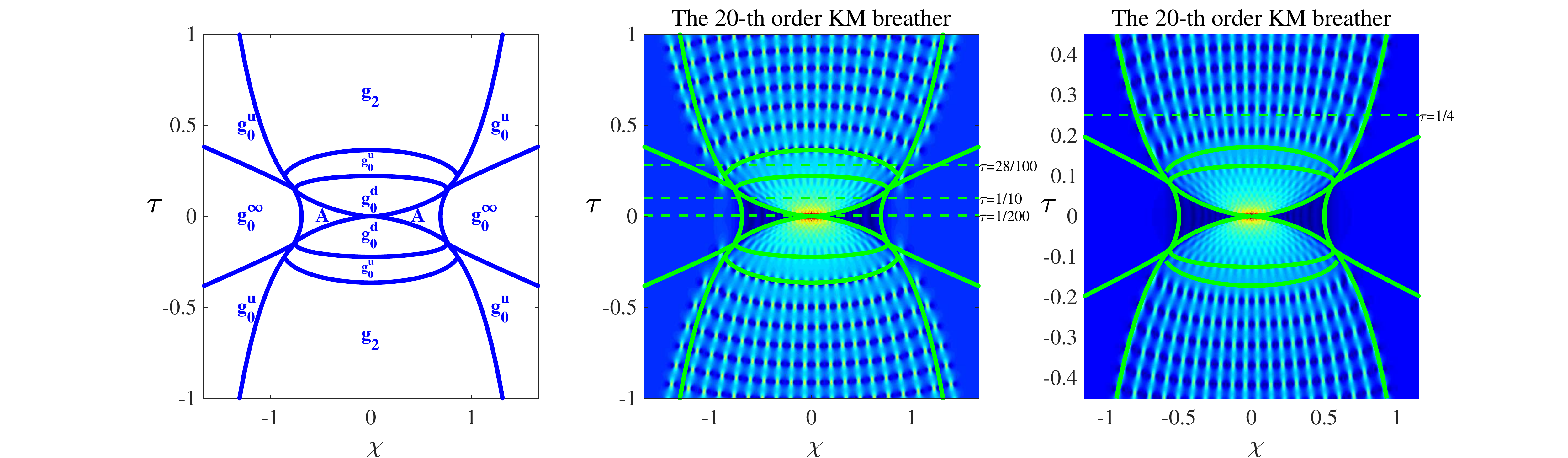}
\caption{The boundary between different regions(left), the $20$-th order KMBs by choosing $\lambda_1=\frac{3}{2}\ii, c_1{=}-c_2{=}1$(middle), $\lambda_1=2\ii, c_1{=}c_2{=}1$(right).}
\label{fig:KM}
\end{figure}
From these two figures, we can see that there are five different asymptotic regions, which are called the genus-two region ($g_2$ in Fig. \ref{fig:KM}), the genus-zero-up region ($g_0^u$ in Fig. \ref{fig:KM}),
the genus-zero-down region ($g_0^d$ in Fig. \ref{fig:KM}), the genus-zero-infinity region ($g_0^\infty$ in Fig. \ref{fig:KM})
and the algebraic-decay region (A in Fig. \ref{fig:KM}).
In the reference \cite{Bilman-JNS-2019}, the authors gave a detailed description for calculating the boundaries between different regions. Similarly, we will give a brief description of the boundaries for the high-order KMBs.
With the definition of $\vartheta(\lambda; \chi, \tau)$ in Eq.\eqref{eq:phase-term}, for convenience, we set the spectral parameter $\lambda_1=\alpha\ii(\alpha>1)$. The critical points of $\vartheta(\lambda; \chi, \tau)$ satisfy the following algebraic equation:
\begin{equation}\label{eq:cri-poi}
2n\left(\chi+2\tau\lambda\right)\left(\lambda^2+1\right)\left(\lambda^2+\alpha^2\right)-\lambda^2(r+2\alpha n)-\alpha(\alpha r+2n)=0.
\end{equation}
If the discriminant of Eq.\eqref{eq:cri-poi} about $\lambda$ is greater than $0$, the quintic polynomial Eq.\eqref{eq:cri-poi} has at least three real critical points, which corresponds to the algebraic-decay region.
Next, we give the boundary between $g_2$ and $g_0^\infty$ regions, which is given by the condition $\Im\left(\vartheta(\lambda^{\pm})\right)=0, $ where $\lambda^{\pm}$ are two critical points of $\vartheta(\lambda; \chi, \tau)$.
The rest of curves (Fig. \ref{fig:KM}) are the boundaries between $g_0^u$ and $g_0^d$ as well as $g_2$, and the boundary between $g_0^u$ and $g_0^{\infty}$, these two boundaries all depend on an algebraic curve of genus-zero.
For these three genus-zero regions, we need introduce a $g$-function \cite{deift1997new} defined as
\begin{equation}
g'(\lambda):=\frac{R(\lambda)}{2}\left(\frac{\ii r}{2n\left(\lambda-\ii\right)R(\ii)}-\frac{\ii r}{2n (\lambda+\ii)R(-\ii)}+\frac{\ii}{(\lambda-k\ii)R(k\ii)}-\frac{\ii}{(\lambda+k\ii)R(-k\ii)}+4\tau\right)-\vartheta'(\lambda; \chi, \tau),
\end{equation}
where
$$g(\lambda)\equiv g(\lambda; \chi, \tau),\quad R(\lambda)\equiv R(\lambda; \chi, \tau)=\sqrt{(\lambda-a_1(\chi, \tau))(\lambda-a_1^*(\chi, \tau))}.$$
At this time, the controlling phase term becomes $h(\lambda)\equiv h(\lambda; \chi, \tau):=g(\lambda)+\vartheta(\lambda; \chi, \tau)$ rather than $\vartheta(\lambda; \chi, \tau)$.
From $g'(\lambda)$, we know that $h'(\lambda)$ equals to
\begin{equation}
h'(\lambda)=\frac{R(\lambda)}{2}\left(\frac{\ii r}{2n\left(\lambda{-}\ii\right)R(\ii)}{-}\frac{\ii r}{2n(\lambda{+}\ii)R({-}\ii)}{+}\frac{\ii}{(\lambda{-}k\ii)R(k\ii)}{-}\frac{\ii}{(\lambda{+}k\ii)R({-}k\ii)}{+}4\tau\right).
\end{equation}
On the one hand, in the genus-zero-down region, $h'(\lambda)$ should have at least two real roots such that the singularities $\lambda=\pm \alpha\ii, \lambda=\pm \ii$ are all in a closed curve given by the $\Im(h(\lambda))$. As $\tau$ increases, these two real roots coincide into one double root,
then the variable $(\chi, \tau)$ will transfer into the genus-zero-up region. On the other hand, in the genus-zero-infinity region, $h'(\lambda)$ has the real root and in the genus-zero-up region, $h'(\lambda)$ only has complex roots, thus the boundary between genus-zero-infinity region and the genus-zero-up region satisfies the same condition. Moreover, similar to the boundary between $g_2$ and $g_0^\infty$ regions, the boundary between $g_0^u$ and $g_2$
regions satisfies the condition $\Im\left(h(\hat{\lambda}^{\pm})\right)$, where $\hat{\lambda}^{\pm}$ are the critical points of $h(\lambda)$. Then the boundaries given in Fig. \ref{fig:KM} have been given completely. In the following, we would like to derive the asymptotic expressions of high-order KMBs for the above mentioned five different regions.

\section{The large-order asymptotics in the genus-two region}\label{sec:genus-2}
Firstly, we prepare to study the asymptotics in the genus-two region. Before studying it, we introduce a $g_{2}(\lambda)$-function satisfying the following RHP.
\begin{rhp}\label{RHP:genus-2}
Let $(\chi, \tau)\in\mathbb{R}^2$, we can find a  $g_{2}(\lambda):=g_{2}(\lambda; \chi, \tau)$-function with following conditions.
\begin{itemize}
\item
{\bf Analyticity}: $g_{2}(\lambda)$ is analytic in $\mathbb{C}\setminus \Sigma_{g_{2}}^{\pm}\cup \Sigma_{g}\cup\Gamma_{g_{2}}^{\pm}$, where these arcs are to be determined,
and it takes the continuous boundary conditions from the left and right sides of each arc.
\item
{\bf Jump Condition}: The jump conditions on these arcs are related by
\begin{equation}
\begin{aligned}
&g_{2,+}(\lambda)+g_{2,-}(\lambda)+2\vartheta(\lambda; \chi, \tau)=\kappa_{2},&\qquad\lambda\in\Sigma_{g_{2}}^{\pm},\\
&g_{2,+}(\lambda)+g_{2,-}(\lambda)+\vartheta_{+}(\lambda; \chi, \tau)+\vartheta_{-}(\lambda; \chi, \tau)=l_{2},&\qquad\lambda\in\Sigma_{g},\\
&g_{2,+}(\lambda)-g_{2,-}(\lambda)=\varpi_2,&\qquad\lambda\in\Gamma_{g_{2}}^{\pm}.
\end{aligned}
\end{equation}
\item
{\bf Normalization}: As $\lambda\to\infty$, $g_2(\lambda)$ satisfies
\begin{equation}
g_{2}(\lambda)\to \mathcal{O}(\lambda^{-1}).
\end{equation}
\item
{\bf Symmetry}: $g_{2}(\lambda)$ satisfies the Schwartz symmetric condition,
\begin{equation}
g_{2}(\lambda)=g_{2}(\lambda^*)^*.
\end{equation}
\end{itemize}
\end{rhp}

From the definition of phase term $\vartheta(\lambda; \chi, \tau)$ in Eq.\eqref{eq:phase-term},
we differentiate $g_{2}(\lambda)$ with respect to $\lambda$ to remove the logarithm terms and the integral constants $\kappa_2, l_2, \varpi_2$ simultaneously, then we have
\begin{equation}
g'_{2,+}(\lambda)+g'_{2,-}(\lambda)=-2\chi-4\lambda\tau-\frac{\ii}{\lambda-\lambda_1}+\frac{\ii}{\lambda-\lambda_1^*}-\frac{\ii r}{2n(\lambda-\ii)}+\frac{\ii r}{2n(\lambda+\ii)},\quad\lambda\in\Sigma_{g_{2}}^{\pm}\cup\Sigma_{g}.
\end{equation}
To solve this scalar RHP, we introduce a square root function $R_{2}(\lambda)\equiv R_{2}(\lambda; \chi, \tau)$ with the definition
\begin{multline}\label{eq:R1}
R_{2}(\lambda):=\sqrt{(\lambda-a_2)(\lambda-a_{2}^*)(\lambda-b_{2})(\lambda-b_{2}^*)(\lambda-d_{2})(\lambda-d_{2}^*)}\\
:=\sqrt{\lambda^6-s_1\lambda^5+s_2\lambda^4-s_3\lambda^3+s_4\lambda^2-s_5\lambda+s_6},
\end{multline}
the parameters $a_2,b_2,d_2$ and $s_{i}(i=1,\cdots,6)$ have the following relationship,
\begin{equation}
\begin{aligned}
s_1&=2\left(a_{2R}+b_{2R}+d_{2R}\right), \, s_2=|a_{2}|^2+|b_{2}|^2+|d_{2}|^2+4\left(a_{2R}b_{2R}+a_{2R}d_{2R}+b_{2R}d_{2R}\right),\\
s_3&=2|a_{2}|^2\left(b_{2R}+d_{2R}\right)+2|b_{2}|^2\left(a_{2R}+d_{2R}\right)+2|d_{2}|^2\left(a_{2R}+b_{2R}\right)+8a_{2R}b_{2R}d_{2R},\\
s_4&=|a_{2}|^2\left(|d_{2}|^2+4b_{2R}d_{2R}\right)+|b_{2}|^2\left(|a_{2}|^2+4a_{2R}d_{2R}\right)+|d_{2}|^2\left(|b_{2}|^2+4a_{2R}b_{2R}\right),\\
s_5&=2a_{2R}|b_{2}|^2|d_{2}|^2+2b_{2R}|a_{2}|^2|d_{2}|^2+2d_{2R}|a_{2}|^2|b_{2}|^2,\quad s_6=|a_{2}|^2|b_{2}|^2|d_{2}|^2,
\end{aligned}
\end{equation}
where $a_{2R}, b_{2R}, d_{2R}$ are the real parts of $a_{2}, b_{2}, d_{2}$ respectively.
Divide $g_2'(\lambda)$ by the $R_{2}(\lambda)$ function, we have
\begin{equation}
\left(\frac{g'_{2}(\lambda)}{R_{2}(\lambda)}\right)_+-\left(\frac{g'_{2}(\lambda)}{R_{2}(\lambda)}\right)_-=\frac{-2\chi-4\lambda\tau-\frac{\ii }{\lambda-\lambda_1}+\frac{\ii }{\lambda-\lambda_1^*}-\frac{\ii r}{2n(\lambda-\ii)}+\frac{\ii r}{2n(\lambda+\ii)}}{R_{2,+}(\lambda)}.
\end{equation}
With the Plemelj formula and the generalized residue theorem, $g'_{2}(\lambda)$ can be expressed into an explicit formula:
\begin{equation}\label{eq:dG}
\begin{aligned}
g'_{2}(\lambda)&=R_{2}(\lambda)\left(\mathop{\rm Res}\limits_{s=\lambda}+\mathop{\rm Res}\limits_{s=\lambda_1}+\mathop{\rm Res}\limits_{s=\lambda_1^*}{+}\mathop{\rm Res}\limits_{s=\infty}\right)
\left(\frac{-\chi-2 s\tau-\frac{\ii}{2(s-\lambda_1)}+\frac{\ii}{2(s-\lambda_1^*)}-\frac{1}{4n}\frac{\ii r}{(s-\ii)}+\frac{1}{4n}\frac{\ii r}{(s+\ii)}}{R_{2}(s)\left(s-\lambda\right)}\right)\\
&+R_{2}(\lambda)\left(\mathop{\rm Res}\limits_{s=\ii}+\mathop{\rm Res}\limits_{s=-\ii}\right)\left(\frac{-\chi-2 s\tau-\frac{\ii}{2(s-\lambda_1)}+\frac{\ii}{2(s-\lambda_1^*)}-\frac{1}{4n}\frac{\ii r}{(s-\ii)}+\frac{1}{4n}\frac{\ii r}{(s+\ii)}}{R_{2}(s)\left(s-\lambda\right)}\right)\\
&=R_{2}(\lambda)\left[-\frac{\ii}{2R_{2}(\lambda_1)(\lambda_1-\lambda)}+\frac{\ii}{2R_{2}(\lambda_1^*)(\lambda_1^*-\lambda)}-\frac{\ii r}{4nR_{2}(\ii)(\ii-\lambda)}+\frac{\ii r}{4nR_{2}(-\ii)(-\ii-\lambda)}\right]\\
&-\chi-2\lambda\tau-\frac{\ii}{2}\frac{1}{\lambda-\lambda_1}+\frac{\ii}{2}\frac{1}{\lambda-\lambda_1^*}-\frac{\ii r}{4n}\frac{1}{\lambda-\ii}-\frac{\ii r}{4n}\frac{1}{\lambda+\ii}.
\end{aligned}
\end{equation}
By adding the $g_2(\lambda)$-function into the phase term $\vartheta(\lambda; \chi, \tau)$, the phase term can be modified as $h_{2}(\lambda)\equiv h_{2}(\lambda; \chi, \tau):=g_{2}(\lambda)+\vartheta(\lambda; \chi, \tau)$, thus we have
\begin{equation}\label{eq:dh}
h'_{2}(\lambda)=R_{2}(\lambda)\left[-\frac{\ii}{2R_{2}(\lambda_1)(\lambda_1-\lambda)}+\frac{\ii}{2R_{2}(\lambda_1^*)(\lambda_1^*-\lambda)}-\frac{\ii r}{4nR_{2}(\ii)(\ii-\lambda)}+\frac{\ii r}{4nR_{2}(-\ii)(-\ii-\lambda)}\right].
\end{equation}
For fixed $(\chi, \tau)$ in this region, there are six parameters $s_i(i=1,\cdots, 6)$ to be determined.
From the normalization condition of $g_{2}(\lambda)$ at $\lambda=\infty$, we get four relations about these unknown parameters,
\begin{equation}\label{eq:four-relation}
\begin{aligned}
\mathcal{O}(\lambda^2):&\frac{1}{R_{2}(\lambda_1)}-\frac{1}{R_{2}(\lambda_1^*)}+\frac{r}{2nR_{2}(\ii)}-\frac{r}{2nR_{2}(-\ii)}=0,\\
\mathcal{O}(\lambda):&\frac{\ii\lambda_1}{R_{2}(\lambda_1)}-\frac{\ii \lambda_1^*}{R_{2}(\lambda_1^*)}-\frac{r}{2nR_{2}(\ii)}-\frac{r}{2nR_{2}(-\ii)}-4\tau=0,\\
\mathcal{O}(1):&\frac{\ii\lambda_1^2}{R_{2}(\lambda_1)}-\frac{\ii (\lambda_1^*)^2}{R_{2}(\lambda_1^*)}-\frac{\ii r}{2n R_{2}(\ii)}+\frac{\ii r}{2nR_{2}(-\ii)}-2\chi-2\tau s_1=0,\\
\mathcal{O}(\lambda^{-1}):&\frac{\ii\lambda_1^3}{R_{2}(\lambda_1)}-\frac{\ii(\lambda_1^*)^3}{R_{2}(\lambda_1^*)}+\frac{r}{2nR_{2}(\ii)}+\frac{r}{2nR_{2}(-\ii)}+\frac{1}{2}\left(4s_2-3s_1^2\right)\tau-s_1\chi=0.
\end{aligned}
\end{equation}
From the first relation in Eq.\eqref{eq:four-relation}, we have
\begin{equation}\label{eq:first}
\frac{1}{R_{2}(\lambda_1)}=\frac{1}{R_{2}(\lambda_1^*)}-\frac{r}{2n R_{2}(\ii)}+\frac{r}{2nR_{2}(-\ii)}.
\end{equation}
Substituting the above relation into the second equation of Eq. \eqref{eq:four-relation}, we get
\begin{equation}\label{eq:second}
\frac{1}{R_{2}(\lambda_1^*)}=\frac{2\tau}{-\Im(\lambda_1)}-\frac{\ii\lambda_1+1}{4\Im(\lambda_1)n}\frac{r}{R_{2}(\ii)}+\frac{\ii\lambda_1-1}{4\Im(\lambda_1)n}\frac{r}{R_{2}(-\ii)}.
\end{equation}
Similarly, substitute Eq.\eqref{eq:first} and Eq.\eqref{eq:second} into the third equation and the fourth equation in Eq.\eqref{eq:four-relation}, we have
\begin{equation}\label{eq:last}
\begin{aligned}
\frac{1}{R_{2}(\ii)}&=-nr\frac{\left(8|\lambda_1|^2-8\Re(\lambda_1)s_1+3s_1^2-4s_2-16\Re(\lambda_1)\ii+4\ii s_1\right)\tau+\left(2s_1-8\Re(\lambda_1)+4\ii\right)\chi}{2(\ii-\lambda_1)(\ii-\lambda_1^*)},\\
\frac{1}{R_{2}(\lambda_1)}&=\frac{\left(4s_2-3s_1^2+4s_1\lambda_1^*-8\right)\tau+\left(4\lambda_1^*-2s_1\right)\chi}{4(1+\lambda_1^2)\Im(\lambda_1)}.
\end{aligned}
\end{equation}
Separating the real and the imaginary parts of these two equations in Eq.\eqref{eq:last}, then we get four relations about the unknown parameters $s_{i} (i=1,2,\cdots, 6)$.
Moreover, substitute Eq.\eqref{eq:last} into the Eq.\eqref{eq:dh}, we get
\begin{equation}
h'_{2}(\lambda){=}R_{2}(\lambda)\frac{8\tau\lambda^2{+}\left(4\chi{+}4s_1\tau{-}16\Re(\lambda_1)\tau\right)\lambda{+}8|\lambda_1|^2\tau{-}8\Re(\lambda_1)s_1\tau{+}3s_1^2\tau
{-}4s_2\tau{+}8\tau{-}8\Re(\lambda_1)\chi{+}2s_1\chi}{4(\lambda-\lambda_1)(\lambda-\lambda_1^*)(\lambda^2+1)}.
\end{equation}
Obviously, $h_2'(\lambda)$ has eight roots, six of which are the branch points and the remaining two are a pair of conjugate complex roots.
By integrating $g_2'(\lambda)$ function, we get the $g_{2}(\lambda)$ function, which is shown in theorem \ref{theo:g2}.

\begin{theorem}\label{theo:g2}
With the explicit formula of $g'_{2}(\lambda)$ in Eq.\eqref{eq:dG}, the $g_{2}(\lambda)$-function defined by
$$g_{2}(\lambda)=\int_{\infty}^{\lambda}g'_{2}(s)ds$$
satisfies all the jump conditions in RHP \ref{RHP:genus-2}, and the integrated constants can also be determined.
\end{theorem}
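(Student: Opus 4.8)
The plan is to verify the four items of RHP~\ref{RHP:genus-2} — analyticity, jump conditions, normalization, and symmetry — directly from the closed form \eqref{eq:dG} of $g_2'(\lambda)$ together with the four constraints \eqref{eq:four-relation}, and then to identify the integration constants.

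First I would record the structural facts that are immediate from \eqref{eq:dG} and \eqref{eq:dh}. Since $g_2'=h_2'-\vartheta'$ and $h_2'$ is $R_2(\lambda)$ times a rational function whose only finite singularities are simple poles at $\lambda=\pm\lambda_1,\pm\ii$, the residues of $h_2'$ at these four points — built from $R_2(\lambda_1),R_2(\lambda_1^*),R_2(\pm\ii)$ — are precisely arranged to cancel the simple poles of $-\vartheta'$ there, so $g_2'$ is in fact holomorphic at $\pm\lambda_1,\pm\ii$. As $\vartheta'$ is rational and the only branch cuts of $h_2'$ are the three cuts of $R_2$ — realized as $\Sigma_{g_2}^{+}$, its conjugate $\Sigma_{g_2}^{-}$, and the self-conjugate arc $\Sigma_g$, which together join the conjugate pairs $a_2,b_2,d_2$ to their conjugates — the function $g_2'$ is analytic on $\mathbb{C}\setminus(\Sigma_{g_2}^{+}\cup\Sigma_{g_2}^{-}\cup\Sigma_g)$, in particular across $\Gamma_{g_2}^{\pm}$; on each of the three cuts $R_{2,+}=-R_{2,-}$, whence $g_{2,+}'+g_{2,-}'=-2\vartheta'$. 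Finally $g_2'(\lambda^*)^*=g_2'(\lambda)$, because $R_2(\lambda^*)^*=R_2(\lambda)$ (conjugate branch points and conjugation-invariant cuts) and the remaining factors, built from $\lambda_1,\lambda_1^*,\pm\ii$, are conjugation symmetric.

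Next I would integrate. The identities \eqref{eq:four-relation} are exactly the vanishing of the $\mathcal{O}(\lambda^2),\mathcal{O}(\lambda),\mathcal{O}(1),\mathcal{O}(\lambda^{-1})$ coefficients in the large-$\lambda$ expansion of $g_2'=h_2'-\vartheta'$, so $g_2'(\lambda)=\mathcal{O}(\lambda^{-2})$ as $\lambda\to\infty$; hence $g_2(\lambda):=\int_\infty^\lambda g_2'(s)\,\dd s$ converges along any ray avoiding the cuts, equals $\mathcal{O}(\lambda^{-1})$ (the normalization), and is a possibly multivalued analytic function on $\mathbb{C}\setminus(\Sigma_{g_2}^{+}\cup\Sigma_{g_2}^{-}\cup\Sigma_g)$, the only obstruction to single-valuedness being the periods of $g_2'$ around the bands. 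To absorb these I introduce the gap arcs $\Gamma_{g_2}^{\pm}$ joining band endpoints through the gaps; across $\Gamma_{g_2}^{+}$ the function $g_2$ then acquires a constant jump equal to the relevant band period $\oint g_2'\,\dd\lambda$, which is $\lambda$-independent and furnishes $\varpi_2$, while its counterpart on $\Gamma_{g_2}^{-}$ is the complex conjugate by Schwarz symmetry. For the band relations, integrating $g_{2,+}'+g_{2,-}'=-2\vartheta'$ along $\Sigma_{g_2}^{\pm}$ starting from a branch point (where the two boundary values of $g_2$ coincide, since $R_2$ vanishes there so $g_2'$ is continuous across the cut) yields $g_{2,+}+g_{2,-}+2\vartheta=\kappa_2$; on $\Sigma_g$, where $\vartheta$ itself jumps because this arc lies on the logarithmic branch cut, one integrates instead $g_{2,+}'+g_{2,-}'+\vartheta_+'+\vartheta_-'=0$ to obtain $g_{2,+}+g_{2,-}+\vartheta_+ +\vartheta_-=l_2$. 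The constants are then read off by evaluating $h_2=g_2+\vartheta$ at a branch-point endpoint of the corresponding arc (taking account, on $\Sigma_g$, of the jump of $\vartheta$), or equivalently as explicit contour integrals of $g_2'$; this is the sense in which ``the integrated constants can also be determined''.

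The main obstacle I anticipate is showing that $\kappa_2$, $l_2$, and $\varpi_2$ are genuinely $\lambda$-independent along each arc: the value obtained by integrating the additive jump from one endpoint of a band must match the value obtained from the other endpoint, and the two independent band periods of $g_2'$ must be purely imaginary so that the construction is compatible with the Schwarz symmetry. Each of these is a single real scalar condition, and together they are precisely the two constraints on $s_1,\dots,s_6$ that complement the four in \eqref{eq:four-relation} and pin down the a priori undetermined endpoints $a_2,b_2,d_2$ of the arcs. I would establish them by the standard contour-deformation argument: the relevant period $\oint g_2'\,\dd\lambda=\oint(h_2'-\vartheta')\,\dd\lambda$ over a cycle collapses onto the poles $\pm\lambda_1,\pm\ii$ and $\lambda=\infty$, where, using the residue relations already packaged in \eqref{eq:first}--\eqref{eq:last}, it reduces to a purely imaginary quantity. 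A secondary, purely bookkeeping difficulty is tracking orientations and the placement of $\Sigma_g$ relative to the segment $[-\ii,\ii]$ and to $\pm\lambda_1$, so that the logarithmic cuts of $\vartheta$ and the cuts of $R_2$ are arranged compatibly and $g_2$ acquires no spurious jump outside $\Sigma_{g_2}^{\pm}\cup\Sigma_g\cup\Gamma_{g_2}^{\pm}$.
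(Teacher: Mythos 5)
Your proposal is correct and follows essentially the same route as the paper's (much terser) proof: removability of the singularities of $g_2'$ at $\lambda_1,\lambda_1^*,\pm\ii$, analyticity off the branch cuts of $R_2$, the relation $g'_{2,+}+g'_{2,-}=-2\vartheta'$ on the cuts, and evaluation of $\kappa_2,\varpi_2,l_2$ as explicit contour integrals of $g_2'$ and $h_2'$ (including the $r\pi/(2n)$ shift from the jump of $\vartheta$ on $\Sigma_g$). The extra point you raise — that the $\lambda$-independence/reality of the two band and gap integrals supplies the two real conditions complementing \eqref{eq:four-relation} in determining $s_1,\dots,s_6$ — is exactly what the paper imposes in the paragraph immediately following the theorem.
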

\begin{proof}
From the explicit formula $g'_{2}(\lambda)$ in Eq.\eqref{eq:dG}, we know that $g'_{2}(\lambda)$ has the same branch cuts with $R_{2}(\lambda)$, and the singularities $\lambda=\lambda_1, \lambda_1^*, \pm\ii$ can be removed,
thus it only has the jump discontinuity on the branch cuts and satisfies the jump conditions in RHP \ref{RHP:genus-2}. Choose one suitable integral path, these integrated constants can be expressed as
\begin{equation}
\begin{aligned}
\kappa_{2}&=2\int_{\infty}^{a_{2}}g'_{2}(s)ds+2\vartheta(a_{2}; \chi, \tau),\quad
\varpi_{2}=2\int_{a_{2}}^{b_{2}}h'_{2}(s)ds,\quad
l_2=\kappa_{2}+r\frac{\pi}{2n}+2\int_{b_{2}}^{d_{2}}h'_{2}(s)ds.
\end{aligned}
\end{equation}
\end{proof}
To determine the unknown parameters $s_i(i=1,2\cdots,6)$, we impose these two integrals $\int_{a_{2}}^{b_{2}}h'_{2}(s)ds$ and $\int_{b_{2}}^{d_{2}}h'_{2}(s)ds$ as real numbers. Combining the four normalization conditions in Eq.\eqref{eq:four-relation}, we numerically obtain these six unknown branch points. In Fig.\ref{fig:KM},
we give two figures by choosing different spectra $\lambda_1$ and parameters $c_1, c_2$. Both of them have five asymptotic regions.
To verify it, we will check the asymptotic solutions and the exact solutions by choosing the given $\lambda_1, c_1, c_2$ in Fig.\ref{fig:KM} and setting $(\chi, \tau)$ in the fixed regions(shown by the green dashed line in Fig.\ref{fig:KM}).
In this genus-two region, we choose $\lambda_1=2\ii, c_1=c_2=1$, which corresponds $r=1$, under this parameters setting, we give the sign chart of $\Im(h_{2}(\lambda))$ and the jump contours for the following defined $\mathbf{S}_{1}(\lambda; \chi, \tau)$ and $\mathbf{T}_{1}(\lambda; \chi, \tau)$ in Fig. \ref{fig:genus-two} by putting $\tau=\frac{1}{4}, \chi=\frac{1}{5}$.
\begin{figure}[!ht]
\centering
\includegraphics[width=0.45\textwidth]{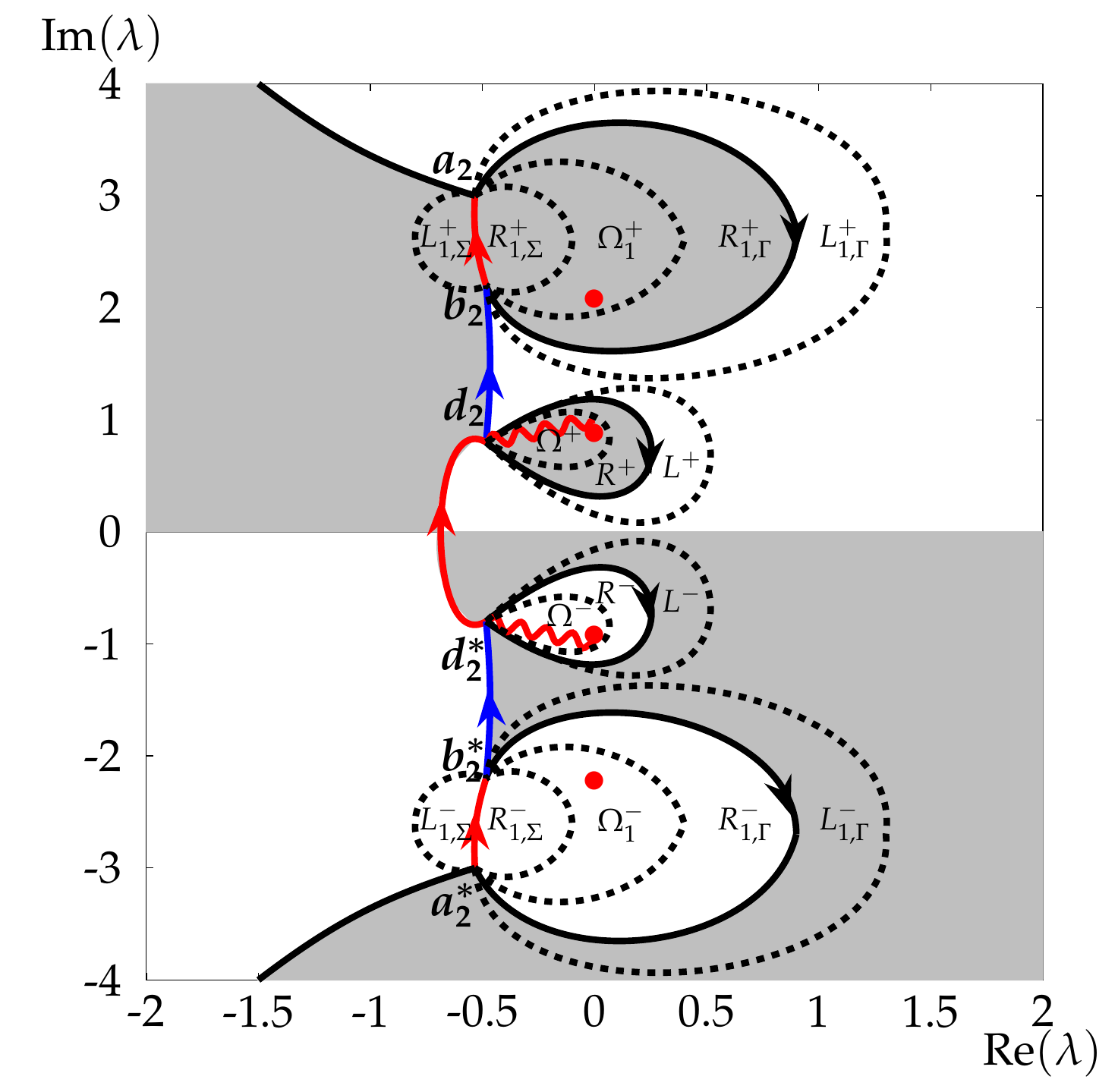}
\centering
\includegraphics[width=0.45\textwidth]{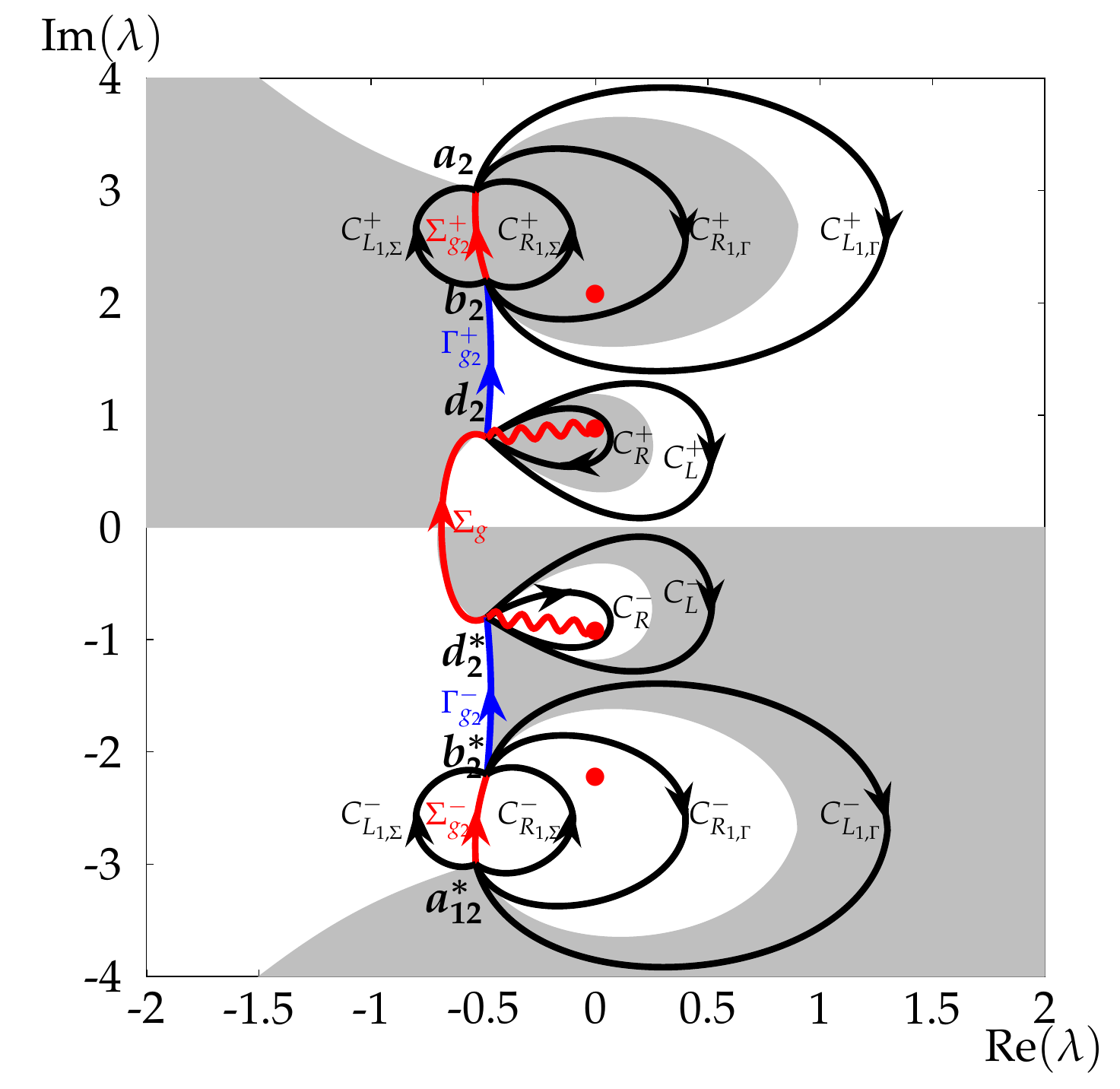}
\caption{The sign chart of ${\Im}(h_2(\lambda; \frac{1}{4}, \frac{1}{5}))$ in the genus-two region, where ${\Im}(h_2(\lambda; \frac{1}{4}, \frac{1}{5}))>0$(unshaded) and ${\Im}(h_2(\lambda; \frac{1}{4}, \frac{1}{5}))<0$(shaded). It should be noted that for this region,
$h_2'(\lambda)$ has no real roots, it is no longer possible to choose the branch cut $\Sigma_{g}$ as the curve of $\Im(h_{2}(\lambda))$. We set $\Sigma_g$ as arbitrary segments connecting the branch points of $R_{2}(\lambda)$.
The left one gives the jump contour for $\mathbf{S}_{1}(\lambda; \chi, \tau)$ and the right panel is the corresponding jump contour for $\mathbf{T}_{1}(\lambda; \chi, \tau)$.}
\label{fig:genus-two}
\end{figure}

Next, we begin to deform this RHP \ref{rhp:reform}. Set
\begin{equation}\label{S1-genus-two}
\mathbf{S}_{1}(\lambda; \chi, \tau):=\left\{\begin{aligned}&\mathbf{M}^{[n]}(\lambda; \chi, \tau)\ee^{-\ii n
\vartheta(\lambda; \chi, \tau)\sigma_3}\mathbf{Q}^{-1}_d
\ee^{\ii n\vartheta(\lambda; \chi, \tau)\sigma_3},\quad &\lambda\in D_{0}\cap\left(D_{1}^{+}\cup D_{1}^{-}\right)^{c},\\
&\mathbf{M}^{[n]}(\lambda; \chi, \tau),\quad &\text{otherwise},
\end{aligned}\right.
\end{equation}
where $D_{1}^{\pm}=R_{1,\Sigma}^{\pm}\cup\Omega_1^{\pm}\cup R_{1,\Gamma}^{\pm}\cup R^{\pm}\cup \Omega^{\pm}, $ 
then the jump of $\mathbf{S}_{1}(\lambda; \chi, \tau)$ transfers to $\partial D_{1}^{\pm}$ and $\Sigma_g$.
Since $\vartheta_{+}(\lambda; \chi, \tau)-\vartheta_{-}(\lambda; \chi, \tau)=-r\frac{\pi}{2n}$ for $\lambda\in\Sigma_g$, the jumps of $\mathbf{S}_1(\lambda; \chi, \tau)$ become
\begin{equation}\label{jump-S1}
\begin{split}
\mathbf{S}_{1,+}(\lambda; \chi, \tau)&=\mathbf{S}_{1,-}(\lambda; \chi, \tau)\ee^{-\ii n \vartheta_{-}(\lambda; \chi, \tau)\sigma_3}\begin{bmatrix}0&\ii r\\\ii r &0
	\end{bmatrix}\ee^{\ii n\vartheta_{+}(\lambda; \chi, \tau)\sigma_3},\quad \lambda\in\Sigma_{g},\\
\mathbf{S}_{1,+}(\lambda; \chi, \tau)&=\mathbf{S}_{1,-}(\lambda; \chi, \tau)\ee^{-\ii n\vartheta(\lambda; \chi, \tau)}\mathbf{Q}_{d}^{-1}\ee^{\ii n\vartheta(\lambda; \chi, \tau)},\qquad \qquad\lambda\in\partial D_{1}^{\pm}.
\end{split}
\end{equation}
With the theory of nonlinear steepest-descent method, we continue to define the sectional analytic matrices with the sign of $\Im(h_{2}(\lambda))$.
Define
\begin{equation}\label{eq:S-genus-two}
	\begin{aligned}
		\mathbf{T}_{1}(\lambda; \chi, \tau):&=\mathbf{S}_{1}(\lambda; \chi, \tau)\ee^{-\ii n\vartheta(\lambda; \chi, \tau)\sigma_3}
		\left(\mathbf{Q}_{R}^{[2]}\right)^{-1}\ee^{\ii n\vartheta(\lambda; \chi, \tau)\sigma_3}\ee^{\ii n g_{2}(\lambda)\sigma_3},\quad &\lambda\in L_{1,\Gamma}^{+}\cup L^{+},\\
		\mathbf{T}_{1}(\lambda; \chi, \tau):&=\mathbf{S}_{1}(\lambda; \chi, \tau)\mathbf{Q}_{L}^{[2]}\ee^{-\ii n\vartheta(\lambda; \chi, \tau)\sigma_3}
		\mathbf{Q}_{C}^{[2]}\ee^{\ii n\vartheta(\lambda; \chi, \tau)\sigma_3}\ee^{\ii n g_{2}(\lambda)\sigma_3},\quad &\lambda\in R_{1,\Gamma}^{+}\cup R^{+},\\
		\mathbf{T}_{1}(\lambda; \chi, \tau):&=\mathbf{S}_{1}(\lambda; \chi, \tau)\mathbf{Q}_{L}^{[2]}\ee^{\ii n g_{2}(\lambda)\sigma_3},\quad &\lambda\in \Omega_{1}^{+}\cup\Omega,\\
		\mathbf{T}_{1}(\lambda; \chi, \tau):&=\mathbf{S}_{1}(\lambda; \chi, \tau)\mathbf{Q}_{L}^{[2]}\ee^{-\ii n\vartheta(\lambda; \chi, \tau)\sigma_3}\mathbf{Q}_{L}^{[3]}\ee^{\ii n\vartheta(\lambda; \chi, \tau)\sigma_3}\ee^{\ii n g_{2}(\lambda)\sigma_3},\quad &\lambda\in R_{1,\Sigma}^{+},\\
		\mathbf{T}_{1}(\lambda; \chi, \tau):&=\mathbf{S}_{1}(\lambda; \chi, \tau)\ee^{-\ii n\vartheta(\lambda; \chi, \tau)\sigma_3}
		\left(\mathbf{Q}_{R}^{[3]}\right)^{-1}\ee^{\ii n\vartheta(\lambda; \chi, \tau)\sigma_3}\ee^{\ii n g_{2}(\lambda)\sigma_3},\quad &\lambda\in L_{1,\Sigma}^{+},\\
		\mathbf{T}_{1}(\lambda; \chi, \tau):&=\mathbf{S}_{1}(\lambda; \chi, \tau)\ee^{-\ii n\vartheta(\lambda; \chi, \tau)\sigma_3}
		\left(\mathbf{Q}_{R}^{[1]}\right)^{-1}\ee^{\ii n\vartheta(\lambda; \chi, \tau)\sigma_3}\ee^{\ii n g_{2}(\lambda)\sigma_3},\quad &\lambda\in L_{1,\Gamma}^{-}\cup L^{-},\\
		\mathbf{T}_{1}(\lambda; \chi, \tau):&=\mathbf{S}_{1}(\lambda; \chi, \tau)\mathbf{Q}_{L}^{[1]}\ee^{-\ii n\vartheta(\lambda; \chi, \tau)\sigma_3}
		\mathbf{Q}_{C}^{[1]}
		\ee^{\ii n\vartheta(\lambda; \chi, \tau)\sigma_3}\ee^{\ii n g_{2}(\lambda)\sigma_3},\quad &\lambda\in R_{1,\Gamma}^{-}\cup R^{-},\\
		\mathbf{T}_{1}(\lambda; \chi, \tau):&=\mathbf{S}_{1}(\lambda; \chi, \tau)\mathbf{Q}_{L}^{[1]}\ee^{\ii n g_{2}(\lambda)\sigma_3},\quad &\lambda\in \Omega_{1}^{-}\cup\Omega^{-},\\
		\mathbf{T}_{1}(\lambda; \chi, \tau):&=\mathbf{S}_{1}(\lambda; \chi, \tau)\mathbf{Q}_{L}^{[1]}\ee^{-\ii n\vartheta(\lambda; \chi, \tau)\sigma_3}
		\mathbf{Q}_{L}^{[4]}\ee^{\ii n\vartheta(\lambda; \chi, \tau)\sigma_3}\ee^{\ii n g_{2}(\lambda)\sigma_3},\quad &\lambda\in R_{1,\Sigma}^{-},\\
		\mathbf{T}_{1}(\lambda; \chi, \tau):&=\mathbf{S}_{1}(\lambda; \chi, \tau)\ee^{-\ii n\vartheta(\lambda; \chi, \tau)\sigma_3}
		\left(\mathbf{Q}_{R}^{[4]}\right)^{-1}\ee^{\ii n\vartheta(\lambda; \chi, \tau)\sigma_3}\ee^{\ii n g_{2}(\lambda)\sigma_3},\quad &\lambda\in L_{1,\Sigma}^{-},\\
\mathbf{T}_{1}(\lambda; \chi, \tau):&=\mathbf{S}_{1}(\lambda; \chi, \tau)\ee^{\ii n g_{2}(\lambda)\sigma_3},\quad &\text{otherwise}.
	\end{aligned}
\end{equation}
Then the jump conditions of $\mathbf{T}_{1}(\lambda; \chi, \tau)$ change into
\begin{equation}\label{eq:jump-genus-two}
	\begin{aligned}
		\mathbf{T}_{1, +}(\lambda; \chi, \tau)&=\mathbf{T}_{1, -}(\lambda; \chi, \tau)\ee^{-\ii nh_{2}(\lambda)\sigma_3}
		\mathbf{Q}_{R}^{[2]}\ee^{\ii nh_{2}(\lambda)\sigma_3},\quad &\lambda\in C_{L_{1,\Gamma}}^{+}\cup C_{L}^{+},\\
		\mathbf{T}_{1, +}(\lambda; \chi, \tau)&=\mathbf{T}_{1, -}(\lambda; \chi, \tau)\ee^{-\ii nh_{2}(\lambda)\sigma_3}\mathbf{Q}_{C}^{[2]}\ee^{\ii nh_{2}(\lambda)\sigma_3},\quad &\lambda\in C_{R_{1,\Gamma}}^{+}\cup C_{R}^{+},\\
		\mathbf{T}_{1, +}(\lambda; \chi, \tau)&=\mathbf{T}_{1, -}(\lambda; \chi, \tau)\ee^{-\ii nh_{2}(\lambda)\sigma_3}\mathbf{Q}_{L}^{[3]}\ee^{\ii nh_{2}(\lambda)\sigma_3},\quad &\lambda\in C_{R_{1,\Sigma}}^{+},\\
		\mathbf{T}_{1, +}(\lambda; \chi, \tau)&=\mathbf{T}_{1, -}(\lambda; \chi, \tau)\ee^{-\ii nh_{2}(\lambda)\sigma_3}\mathbf{Q}_{R}^{[3]}\ee^{\ii nh_{2}(\lambda)\sigma_3},\quad &\lambda\in C_{L_{1,\Sigma}}^{+},\\
		\mathbf{T}_{1, +}(\lambda; \chi, \tau)&=\mathbf{T}_{1, -}(\lambda; \chi, \tau)\ee^{-\ii nh_{2}(\lambda)\sigma_3}\mathbf{Q}_{R}^{[1]}\ee^{\ii nh_{2}(\lambda)\sigma_3},\quad &\lambda\in C_{L_{1,\Gamma}}^{-}\cup C_{L}^{-},\\
		\mathbf{T}_{1, +}(\lambda; \chi, \tau)&=\mathbf{T}_{1, -}(\lambda; \chi, \tau)\ee^{-\ii nh_{2}(\lambda)\sigma_3}\mathbf{Q}_{C}^{[1]}\ee^{\ii nh_{2}(\lambda)\sigma_3},\quad &\lambda\in C_{R_{1,\Gamma}}^{-}\cup C_{R}^{-},\\
		\mathbf{T}_{1, +}(\lambda; \chi, \tau)&=\mathbf{T}_{1, -}(\lambda; \chi, \tau)\ee^{-\ii nh_{2}(\lambda)\sigma_3}\mathbf{Q}_{L}^{[4]}\ee^{\ii nh_{2}(\lambda)\sigma_3},\quad &\lambda\in C_{R_{1,\Sigma}}^{-}\cup C_{R}^{-},\\
		\mathbf{T}_{1, +}(\lambda; \chi, \tau)&=\mathbf{T}_{1, -}(\lambda; \chi, \tau)\ee^{-\ii nh_{2}(\lambda)\sigma_3}\mathbf{Q}_{R}^{[4]}\ee^{\ii nh_{2}(\lambda)\sigma_3},\quad &\lambda\in C_{L_{1,\Sigma}}^{-},\\
		\mathbf{T}_{1, +}(\lambda; \chi, \tau)&=\mathbf{T}_{1, -}(\lambda; \chi, \tau)\begin{bmatrix}0&\ee^{-\ii n\kappa_2}\\
			-\ee^{\ii n\kappa_2}&0
		\end{bmatrix},&\lambda\in \Sigma_{g_{2}}^{\pm},\\
	\mathbf{T}_{1, +}(\lambda; \chi, \tau)&=\mathbf{T}_{1, -}(\lambda; \chi, \tau)\begin{bmatrix}0&\ee^{-\ii nl_2+\ii r\frac{\pi}{2}}\\
		-\ee^{\ii nl_2-\ii r\frac{\pi}{2}}&0
	\end{bmatrix},&\lambda\in \Sigma_{g},\\
		\mathbf{T}_{1, +}(\lambda; \chi, \tau)&=\mathbf{T}_{1, -}(\lambda; \chi, \tau)\begin{bmatrix}\ee^{\ii n\varpi_2}&0\\
			0&\ee^{-\ii n\varpi_2}
		\end{bmatrix},&\lambda\in \Gamma_{g_{2}}^{\pm}.
	\end{aligned}
\end{equation}
From the sign chart of $\Im(h_{2}(\lambda))$ in Fig. \ref{fig:genus-two} and the definition in Eq.\eqref{remark:decom}, when $n$ is large, the jump matrices will converge to the identity matrix exponentially except for the contours $\Sigma_{g_{2}}^{\pm}\cup \Sigma_{g}\cup \Gamma_{g_{2}}^{\pm}$.
Next, we will construct the parametrix to give the asymptotic analysis in the genus-two region.

\subsection{Parametrix construction for $\mathbf{T}_{1}(\lambda; \chi, \tau)$}
From the jump conditions in Eq.\eqref{eq:jump-genus-two}, we construct the outer parametrix  $\dot{\mathbf{T}}_{1}^{\rm out}(\lambda; \chi, \tau)$ satisfying the following RHP.
\begin{rhp}
	(RHP for the outer parametrix $\dot{\mathbf{T}}^{\rm out}_{1}(\lambda; \chi, \tau)$) Find a $2\times 2$ matrix $\dot{\mathbf{T}}_{1}^{\rm out}(\lambda; \chi, \tau)$ satisfying the following conditions.
	\begin{itemize}
		\item {\bf Analyticity:} $\dot{\mathbf{T}}_{1}^{\rm out}(\lambda; \chi, \tau)$ is analytic in $\lambda\in \mathbb{C}\setminus\left(\Sigma_{g_{2}}^{\pm}\cup \Sigma_{g}\cup \Gamma_{g_{2}}^{\pm}\right)$.
		\item {\bf Jump condition:} The boundary values on the contours $\left(\Sigma_{g_{2}}^{\pm}\cup \Sigma_{g}\cup \Gamma_{g_{2}}^{\pm}\right)$ are related by
$\dot{\mathbf{T}}_{1,+}^{\rm out}(\lambda; \chi, \tau)=\dot{\mathbf{T}}_{1,-}^{\rm out}(\lambda; \chi, \tau)\mathbf{V}_{\dot{\mathbf{T}}_{1}^{\rm out}}(\lambda; \chi, \tau)$, where $\mathbf{V}_{\dot{\mathbf{T}}_{1}^{\rm out}}(\lambda; \chi, \tau)$ is
		\begin{equation}
			\mathbf{V}_{\dot{\mathbf{T}}_{1}^{\rm out}}(\lambda; \chi, \tau)=\left\{\begin{aligned}&\begin{bmatrix}0&\ee^{-\ii n\kappa_2}\\
					-\ee^{\ii n\kappa_2}&0
				\end{bmatrix},&\lambda\in \Sigma_{g_{2}}^{\pm},\\
			&\begin{bmatrix}0&\ee^{-\ii nl_2+\ii r\frac{\pi}{2}}\\
				-\ee^{\ii nl_2-\ii r\frac{\pi}{2}}&0
			\end{bmatrix},&\lambda\in \Sigma_{g},\\
				&\begin{bmatrix}\ee^{\ii n\varpi_2}&0\\
					0&\ee^{-\ii n\varpi_2}
				\end{bmatrix},&\lambda\in \Gamma_{g_{2}}^{\pm}.
			\end{aligned}\right.
		\end{equation}
		\item {\bf Normalization:}  $\dot{\mathbf{T}}_{1}^{\rm out}(\lambda; \chi, \tau)\to \mathbb{I}$ as $\lambda\to\infty$.
	\end{itemize}
\end{rhp}
To solve this RHP, we introduce a scalar function $F(\lambda; \chi, \tau)$ with the following conditions,
\begin{equation}
	\begin{aligned}
		&F_{+}(\lambda; \chi, \tau)+F_{-}(\lambda; \chi, \tau)=\ii n\kappa_2,\quad&\lambda\in \Sigma_{g_{2}}^{\pm},\\
		&F_{+}(\lambda; \chi, \tau)+F_{-}(\lambda; \chi, \tau)=\ii nl_2-\ii r\frac{\pi}{2},\quad&\lambda\in \Sigma_{g},\\
		&F_{+}(\lambda; \chi, \tau)-F_{-}(\lambda; \chi, \tau)=\ii n\varpi_2, \quad&\lambda\in\Gamma_{g_{2}}^{\pm}.
	\end{aligned}
\end{equation}
With the Plemelj formula, $F(\lambda; \chi, \tau)$ can be expressed into an integral form,
\begin{equation}\label{eq:F4}
	F(\lambda;\chi,\tau)=\frac{R_{2}(\lambda)}{2\pi\ii}\Bigg[\int_{\Sigma_{g_{2}}^{\pm}}\frac{\ii n\kappa_2}{R_{2}(\xi)(\xi-\lambda)}d\xi
+\int_{\Sigma_{g}}\frac{\ii nl_2-\ii r\frac{\pi}{2}}{R_{2}(\xi)(\xi-\lambda)}d\xi+\int_{\Gamma_{g_{2}}^{\pm}}\frac{\ii n\varpi_2}{R_{2}(\xi)(\xi-\lambda)}d\xi\Bigg].
\end{equation}
When $\lambda\to\infty$, we easily get the following expansion formula,
\begin{equation}
	F(\lambda; \chi, \tau)=F_{2}\lambda^2+F_{1}\lambda+F_{0}+\mathcal{O}(\lambda^{-1}),
\end{equation}
where
\begin{equation}\label{eq:F4342}
	\begin{aligned}
		F_{2}&=-\frac{1}{2\pi\ii}\left(\int_{\Sigma_{g_{2}}^{\pm}}\frac{\ii n\kappa_2}{R_{2}(\xi)}d\xi+\int_{\Sigma_{g}}\frac{\ii nl_2-\ii r\frac{\pi}{2}}{R_{2}(\xi)}d\xi
+\int_{\Gamma_{g_{2}}^{\pm}}\frac{\ii n\varpi_2}{R_{2}(\xi)}d\xi\right),\\
		F_{1}&=-\frac{1}{2\pi\ii}\left(\int_{\Sigma_{g_{2}}^{\pm}}\frac{\ii n\kappa_2}{R_{2}(\xi)}\xi d\xi
{+}\int_{\Sigma_{g}}\frac{\ii nl_2{-}\ii r\frac{\pi}{2}}{R_{2}(\xi)}\xi d\xi{+}\int_{\Gamma_{g_{2}}^{\pm}}\frac{\ii n\varpi_2}{R_{2}(\xi)}\xi d\xi\right)-\frac{s_1}{2}F_{2},\\
		F_{0}&=-\frac{1}{2\pi\ii}\left(\int_{\Sigma_{g_{2}}^{\pm}}\frac{\ii n\kappa_2}{R_{2}(\xi)}\xi^2d\xi{+}
\int_{\Sigma_{g}}\frac{\ii nl_2{-}\ii r\frac{\pi}{2}}{R_{2}(\xi)}\xi^2d\xi{+}\int_{\Gamma_{g_{2}}^{\pm}}\frac{\ii n\varpi_2}{R_{2}(\xi)}\xi^2d\xi\right)-\frac{s_1}{2}F_{1}+\left(\frac{s_2}{2}-\frac{3s_1^2}{8}\right)F_{2}.
	\end{aligned}
\end{equation}
Based on the definition of this scalar function $F(\lambda; \chi, \tau)$, we redefine a new matrix $\mathbf{O}_1(\lambda; \chi, \tau)$,
\begin{equation}
	\mathbf{O}_1(\lambda; \chi, \tau)={\rm diag}\left(\ee^{F_{0}}, \ee^{-F_{0}}\right)\dot{\mathbf{T}}_{1}^{\rm out}(\lambda; \chi, \tau){\rm diag}\left(\ee^{-F(\lambda; \chi, \tau)}, \ee^{F(\lambda; \chi, \tau)}\right).
\end{equation}
It is clear that $\mathbf{O}_{1}(\lambda; \chi, \tau)$ satisfies a simple constant jump condition at $\lambda\in\Sigma_{g_{2}}^{\pm}\cup\Sigma_{g}$,
\begin{equation}
	\mathbf{O}_{1,+}(\lambda; \chi, \tau)=\mathbf{O}_{1,-}(\lambda; \chi, \tau)(\ii\sigma_2),\quad \lambda\in\Sigma_{g_{2}}^{\pm}\cup\Sigma_{g}.
\end{equation}
When $\lambda\to\infty$, $\mathbf{O}_{1}(\lambda; \chi, \tau)$ has the following normalization condition,
\begin{equation}\label{eq:O-genus-two}
	\mathbf{O}_{1}(\lambda; \chi, \tau){\rm{diag}}\left(\ee^{F_1\lambda+F_2\lambda^2}, \ee^{-F_1\lambda-F_2\lambda^2}\right)\to\mathbb{I} \quad \text{as}\quad \lambda\to\infty.
\end{equation}
Before solving this RHP, we give the definition of the Riemann-Theta function.
\begin{definition}\label{prop:theta}
	The $\Theta(u)$ function is defined as \cite{belokolos1994algebro}
	\begin{equation}
		\begin{aligned}
			&\Theta(u)\equiv\Theta(u; \mathbf{B}):=\sum\limits_{\mathbf{m}\in\mathbb{Z}^g}\ee^{\frac{1}{2}\langle \mathbf{m}, \mathbf{B}\mathbf{m}\rangle+\langle\mathbf{m},u\rangle},
		\end{aligned}
	\end{equation}
where $\mathbf{B}$ is a period matrix, and $\Theta(u)$ function has the following periodic properties,
\begin{equation}
	\Theta(u+2\pi\ii\mathbf{e}_j)=\Theta(u),\qquad \Theta(u+\mathbf{B}\mathbf{e}_j)=\ee^{-\frac{1}{2}B_{jj}-u_j}\Theta(u),
	\end{equation}
	where $\mathbf{e}_j$s are the unit basis vectors in $\mathbb{C}^{g}$ with the coordinates $(\mathbf{e}_{j})_{k}=\delta_{jk}$,
and $\mathbf{B}\mathbf{e}_{j}$s indicate the $j{-}th$ column of the period matrix $\mathbf{B}$.
\end{definition}

The square root function $R_{2}(\lambda)$ in Eq.\eqref{eq:R1} is related to a genus-two Riemann surface, and we give a homology basis for it in Fig.\ref{circle:genus-two}.
				\begin{figure}[ht]
						\centering
						\includegraphics[width=0.3\textwidth,angle=-90]{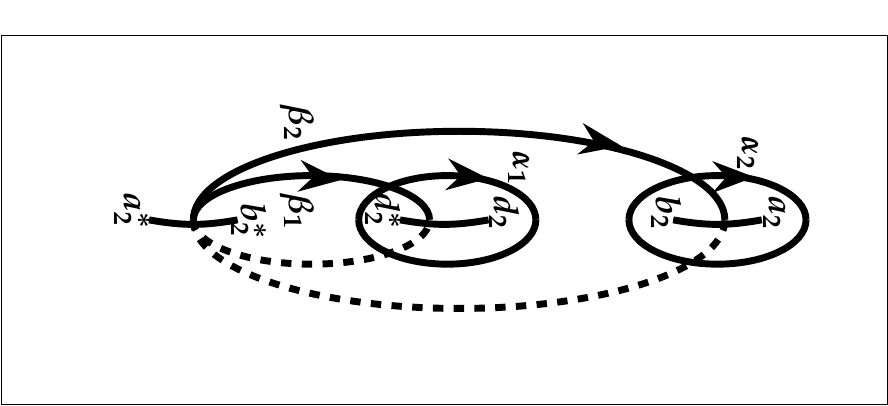}
						\caption{Homology basis for the Riemann surface of genus-two. The solid paths indicate the first sheet and the dashed lines lie in the second sheet.}
						\label{circle:genus-two}
					\end{figure}

Now, we introduce the Abel integrals for the genus-two region,
\begin{equation}
	\omega_{j}(\lambda)=\int_{a_{2}^*}^{\lambda}\psi_{j}(\xi)d\xi,\quad j=1,2,\quad \psi_{j}(\xi):=\frac{\sum\limits_{i=1}^{2}c_{ji}\xi^{2-i}}{R_{2}(\xi)}.
\end{equation}
The coefficients $c_{ji}$s can be uniquely determined by the following conditions,
\begin{equation}
	\int_{\alpha_{l}}d\omega_{j}(\mathcal{P})=2\pi\ii\delta_{jl},\quad (j,l=1,2),
\end{equation}
where $d\omega_{j}(\mathcal{P})$ is a holomorphic differential on the Riemann surface. On the basis of $\omega_{j}(\lambda)(j=1,2)$, we can calculate the period $\mathbf{B}$ matrix,
\begin{equation}
	B_{jl}=\int_{\beta_{l}}d\omega_{j}(\mathcal{P}),\quad (j,l=1,2),
\end{equation}
which is a symmetric matrix, and its real part is negative definite.
Then we can define the Abel mapping from the Riemann surface $\pmb{\chi}$ to its Jacobian variety $Jac\{\pmb{\chi}\}=\mathbb{C}^{2}/\Lambda$, $\mathbf{A}:\pmb{\chi}\to Jac\{\pmb{\chi}\}$,
\begin{equation}\label{eq:abel-map}
	A_{j}(\mathcal{P})=\int_{\mathcal{P}_{0}}^{\mathcal{P}}d\omega_j(\mathcal{Q}),\quad j=1,2,
\end{equation}
where $\Lambda$ is the lattice defined by
\begin{equation}
	\Lambda=\{2\pi\ii N+\mathbf{B}M, N, M\in\mathbb{Z}^{2}\},
\end{equation}
and the point  $\mathcal{P}_{0}$ is given from the base point $a_2^*$ satisfying the condition $\pi(\mathcal{P}_{0})=a_2^*$ and $\mathcal{Q}$ is the integration variable.
With the definition of Abel mapping $\mathbf{A}$, for the integral divisors $\mathcal{D}=\mathcal{P}_{1}+\mathcal{P}_{2}$, we have the following identity relationship
\begin{equation}
	\mathbf{A}(\mathcal{D})=\mathbf{A}(\mathcal{P}_1)+\mathbf{A}(\mathcal{P}_2).
\end{equation}
For $\lambda$ in the branch cuts or the gaps (shown in the blue and red lines in Fig.\ref{fig:genus-two}),
the Abel integrals $\mathbf{A}(\lambda)$ satisfies the following conditions,
\begin{equation}
	\begin{aligned}
		\mathbf{A}_{+}(\lambda)-\mathbf{A}_{-}(\lambda)&=0\quad \mod 2\pi\ii \mathbb{Z}^2, \quad \lambda\in \left(b_2^*, d_2^*\right)\cup\left(b_2, d_2\right),\\
		\mathbf{A}_{+}(\lambda)+\mathbf{A}_{-}(\lambda)&=\mathbf{B}\mathbf{e}_1\mod 2\pi\ii \mathbb{Z}^2, \quad \lambda\in \left(d_2^*, d_2\right),\\
		\mathbf{A}_{+}(\lambda)+\mathbf{A}_{-}(\lambda)&=\mathbf{B}\mathbf{e}_2\mod 2\pi\ii \mathbb{Z}^3, \quad \lambda\in \left(b_2, a_2\right).
	\end{aligned}
\end{equation}
Next, we will introduce another Abel integrals with the singularities at the point $P_{\infty^{+}}$,
\begin{equation}
	\begin{aligned}
		\Omega_j(\lambda)=\int_{a_2^*}^{\lambda}\Psi_j(\xi)d\xi,\quad j=1,2,\quad 	\Psi_j(\xi)=\frac{\sum\limits_{i=1}^{5}s_{ji}\xi^{5-i}}{R_{2}(\xi)},
	\end{aligned}
\end{equation}
these Abel integrals satisfy the following normalization condition,
\begin{equation}
	\begin{aligned}
		&\Omega_1(\lambda)\to \lambda+\mathcal{O}(1),\quad \Omega_2(\lambda)\to\lambda^2+\mathcal{O}(1),\quad P\to P_{\infty^+},\\
		&\int_{\alpha_{l}}d\Omega_{j}(\mathcal{P})=0,\quad j,l=1,2,
	\end{aligned}
\end{equation}
which can determine the unknown coefficients $s_{ji}$s uniquely. For these Abel integrals, the corresponding ``$\mathbf{B}$" matrix by integrating around the $\beta$ circles can be given as,
\begin{equation}\label{eq:UVW}
	\mathcal{U}_{j}=\int_{\beta_{j}}d\Omega_{1}(\mathcal{P}), \quad \mathcal{V}_{j}=\int_{\beta_{j}}d\Omega_{2}(\mathcal{P}), \quad j=1,2.
\end{equation}
Based on the normalization conditions of $\Omega_{j}(\lambda)(j=1,2)$, we get some important properties when $\lambda\to\infty$, one useful property for us is that the limits $J_{j}\,(j=1,2)$ defined as
\begin{equation}\label{eq:J-cons}
	J_{1}:=\lim\limits_{\lambda\to\infty}\int_{a_2^*}^{\lambda}d\Omega_{1}(\mathcal{P})-\lambda,\quad J_{2}:=\lim\limits_{\lambda\to\infty}\int_{a_2^*}^{\lambda}d\Omega_{2}(\mathcal{P})-\lambda^2
\end{equation}
are existent.

Before solving the RHP for the matrix $\mathbf{O}_1(\lambda; \chi, \tau)$, we now introduce an auxiliary matrix $\mathbf{P}_1(\lambda; \chi, \tau)$ defined as
\begin{equation}
	\mathbf{P}_1(\lambda; \chi, \tau):=\begin{bmatrix}\frac{\Theta\left(\mathbf{A}(\lambda)+\mathbf{d}-\pmb{\mathcal{U}}F_{1}-\pmb{\mathcal{V}}F_{2}\right)}{\Theta\left(\mathbf{A}(\lambda)+\mathbf{d}\right)}&
		 \frac{\Theta\left(\mathbf{A}(\lambda)-\mathbf{d}+\pmb{\mathcal{U}}F_{1}+\pmb{\mathcal{V}}F_{2}\right)}{\Theta\left(\mathbf{A}(\lambda)-\mathbf{d}\right)}\\
		 \frac{\Theta\left(\mathbf{A}(\lambda)-\mathbf{d}-\pmb{\mathcal{U}}F_{1}-\pmb{\mathcal{V}}F_{2}\right)}{\Theta\left(\mathbf{A}(\lambda)-\mathbf{d}\right)}&
		 \frac{\Theta\left(\mathbf{A}(\lambda)+\mathbf{d}+\pmb{\mathcal{U}}F_{1}+\pmb{\mathcal{V}}F_{2}\right)}{\Theta\left(\mathbf{A}(\lambda)+\mathbf{d}\right)}\end{bmatrix}
	\ee^{-\left(\Omega_1(\lambda)F_{1}+\Omega_2(\lambda)F_{2}\right)\sigma_3}.
\end{equation}
With the properties of Theta function and the Abel integrals, the matrix function $\mathbf{P}_{1}(\lambda; \chi, \tau)$ satisfies the following jump conditions,
\begin{equation}
	\mathbf{P}_{1,+}(\lambda; \chi, \tau)=\mathbf{P}_{1,-}(\lambda; \chi, \tau)\begin{bmatrix}0&1\\
		1&0
	\end{bmatrix},\quad\lambda\in\Sigma_{g_{2}}^{\pm}\cup\Sigma_{g}.
\end{equation}
By using $\mathbf{P}_{1}(\lambda; \chi, \tau)$, we can construct the solutions to $\mathbf{O}_{1}(\lambda; \chi, \tau)$, that is
\begin{equation}
	\mathbf{O}_{1}(\lambda; \chi, \tau):=\frac{1}{2}{\rm diag}\left(C_{1}, C_{2}\right)\begin{bmatrix}\left(\gamma(\lambda)+\frac{1}{\gamma(\lambda)}\right)\mathbf{P}_{1}(\lambda; \chi, \tau)_{11}&
\ii \left(\gamma(\lambda)-\frac{1}{\gamma(\lambda)}\right)\mathbf{P}_{1}(\lambda; \chi, \tau)_{12}\\
		-\ii \left(\gamma(\lambda)-\frac{1}{\gamma(\lambda)}\right)\mathbf{P}_{1}(\lambda; \chi, \tau)_{21}&\left(\gamma(\lambda)+\frac{1}{\gamma(\lambda)}\right)\mathbf{P}_{1}(\lambda; \chi, \tau)_{22}
	\end{bmatrix},
\end{equation}
where
$C_{1}$ and $C_{2}$ are two constants determined from the normalization condition in Eq.\eqref{eq:O-genus-two},
\begin{equation}
	\begin{aligned}
		 C_{1}&=\frac{\Theta\left(\mathbf{A}(\infty)+\mathbf{d}\right)}{\Theta\left(\mathbf{A}(\infty)+\mathbf{d}-\pmb{\mathcal{U}}F_{1}-\pmb{\mathcal{V}}F_{2}\right)}\ee^{J_{1}F_{1}+J_{2}F_{2}},\\
		 C_{2}&=\frac{\Theta\left(\mathbf{A}(\infty)+\mathbf{d}\right)}{\Theta\left(\mathbf{A}(\infty)+\mathbf{d}+\pmb{\mathcal{U}}F_{1}+\pmb{\mathcal{V}}F_{2}\right)}\ee^{-J_{1}F_{1}-J_{2}F_{2}},\\
	\end{aligned}
\end{equation}
and $\gamma(\lambda)=\left(\frac{(\lambda-a_2^*)(\lambda-b_2)(\lambda-d_2^*)}{(\lambda-a_2)(\lambda-b_2^*)(\lambda-d_2)}\right)^{\frac{1}{4}}$
satisfies $\gamma_{+}=-\ii\gamma_{-}$. Suppose $\gamma-\frac{1}{\gamma}$ has two zeros $\mathcal{P}_{1}, \mathcal{P}_{2}$
at the first sheet Riemann surface. Then the constant matrix $\mathbf{d}$ can be given by the following formula,
\begin{equation}\label{eq:d}
	\mathbf{d}=\mathbf{A}\left(\mathcal{D}\right)+\mathbf{K},
\end{equation}
where $\mathbf{K}$ is the Riemann-Theta constant vector\cite{belokolos1994algebro,kotlyarov2017planar}, defined as
\begin{equation}
	K_{j}=\frac{2\pi\ii+B_{jj}}{2}-\frac{1}{2\pi\ii}\sum\limits_{l=1,l\neq j}^{2}\int_{\alpha_{l}}\left(\int_{\mathcal{P}_{0}}^{\mathcal{Q}}\omega_{j}\right)\omega_l (j=1,2).
\end{equation}
The integral formula seems much more complicated, but in our hyperelliptic case, the entries $K_js$ are also equal to another simple formula,
\begin{equation}
	K_j=\frac{1}{2}\sum\limits_{l=1}^{2}B_{lj}+\pi\ii\left(j-2\right).
\end{equation}
Then the outer parametrix $\dot{\mathbf{T}}_{1}^{\rm out}(\lambda; \chi, \tau)$ is constructed completely. We hope that the outer parametrix can match $\mathbf{T}_{1}(\lambda; \chi, \tau)$ very well,
but unfortunately, the outer parametrix has singularities at the branch points $a_{2}, b_{2}, d_{2}, a_{2}^{*}, b_{2}^{*}, d_{2}^{*}$.
Thus we should construct the inner parametrices at the neighbourhood of these points. In Refs.\cite{Bilman-JDE-2021,Bilman-arxiv-2021} and our previous article\cite{ling2022large}, there is a detailed calculation for
constructing the inner parametrices, and the results show that these inner parametrices are related to the Airy function, and the error is $\mathcal{O}(n^{-1})$. In this paper, we omit the details and only give some notations.
Then the global parametrix for $\mathbf{T}_{1}(\lambda; \chi, \tau)$ is written as
\begin{equation}
	\dot{\mathbf{T}}_{1}(\lambda; \chi, \tau):=\left\{\begin{aligned}&\dot{\mathbf{T}}_{1}^{a_2}(\lambda; \chi, \tau),\quad\lambda\in D_{a_2}(\delta),\\&\dot{\mathbf{T}}_{1}^{a_2^*}(\lambda; \chi, \tau),\quad\lambda\in D_{a_2^*}(\delta),
\\&\dot{\mathbf{T}}_{1}^{b_2}(\lambda; \chi, \tau),\quad\lambda\in D_{b_2}(\delta),\\
		&\dot{\mathbf{T}}_{1}^{b_2^*}(\lambda; \chi, \tau),\quad\lambda\in D_{b_2^*}(\delta),\\
		&\dot{\mathbf{T}}_{1}^{d_2}(\lambda; \chi, \tau),\quad\lambda\in D_{d_2}(\delta),\\&\dot{\mathbf{T}}_{1}^{d_2^*}(\lambda; \chi, \tau),\quad\lambda\in D_{d_2^*}(\delta),\\
		&\dot{\mathbf{T}}_{1}^{\rm out}(\lambda; \chi, \tau),\quad\lambda\in\mathbb{C}\setminus\left(\overline{D_{a_2,a_2^*,b_2,b_2^*,d_2,d_2^*}(\delta)}\cup \Sigma_{g_{2}}^{\pm}\cup \Sigma_{g}\cup \Gamma_{g_{2}}^{\pm}\right).
	\end{aligned}\right.
\end{equation}
Then we will analyze the error between $\mathbf{T}_{1}(\lambda; \chi, \tau)$ and its parametrix $\dot{\mathbf{T}}_{1}(\lambda; \chi, \tau)$ in the next subsection.
\subsection{Error analysis}
To study the error, we set the error function $\mathcal{E}_1(\lambda; \chi, \tau)$ between $\mathbf{T}_{1}(\lambda; \chi, \tau)$ and $\dot{\mathbf{T}}_{1}(\lambda; \chi, \tau)$ as
\begin{equation}
	\mathcal{E}_{1}(\lambda; \chi, \tau):=\mathbf{T}_{1}(\lambda; \chi, \tau)\left(\dot{\mathbf{T}}_{1}(\lambda; \chi, \tau)\right)^{-1}.
\end{equation}
Set the jump matrix for the error function $\mathcal{E}_1(\lambda; \chi, \tau)$ as $\mathbf{V}_{\mathcal{E}_{1}}(\lambda; \chi, \tau)$.
In our previous work \cite{ling2022large}, we have given a detailed analysis for the error estimation. Following that result, we present the order of the error estimation,
\begin{equation}
	\begin{aligned}
		\|\mathbf{V}_{\mathcal{E}_{1}}(\lambda; \chi, \tau)-\mathbb{I}\|&=\mathcal{O}\left(\ee^{-\mu_1 n}\right)\,\,(\mu_1>0),\, \lambda\in C_{L_{1,\Sigma}}^{\pm}\cup C_{R_{1,\Sigma}}^{\pm}\cup C_{R_{1,\Gamma}}^{\pm}\cup C_{L_{1,\Gamma}}^{\pm}\cup C_{R}^{\pm}\cup C_{L}^{\pm},\\
		\|\mathbf{V}_{\mathcal{E}_{1}}(\lambda; \chi, \tau)-\mathbb{I}\|&=\mathcal{O}(n^{-1}),\quad \lambda\in\partial D_{a_2,a_2^*,b_2,b_2^*,d_2,d_2^*}(\delta).\\
	\end{aligned}
\end{equation}
Finally, we can recover the potential function $q^{[n]}(n\chi, n\tau)$ from $\mathbf{T}_{1}(\lambda; \chi, \tau)$
\begin{equation}\label{eq:qn-genus-2}
	\begin{aligned}
		q^{[n]}(n\chi, n\tau)&=2\ii r\lim\limits_{\lambda\to\infty}\lambda\mathbf{T}_{1}(\lambda; \chi, \tau)_{12}\\
		&=2\ii r\lim\limits_{\lambda\to\infty}\lambda\left(\mathcal{E}_{1}(\lambda; \chi, \tau)\dot{\mathbf{T}}^{\rm out}_{1}(\lambda; \chi, \tau)\right)_{12}\\
		&=2\ii r\lim\limits_{\lambda\to\infty}\lambda\left(\mathcal{E}_{1,11}(\lambda; \chi, \tau)\dot{\mathbf{T}}_{1,12}^{\rm out}(\lambda; \chi, \tau)+\mathcal{E}_{1,12}(\lambda; \chi, \tau)\dot{\mathbf{T}}_{1,22}^{\rm out}(\lambda; \chi, \tau)\right)\\
		&=2\ii r\lim\limits_{\lambda\to\infty}\lambda\dot{\mathbf{T}}_{1,12}^{\rm out}(\lambda; \chi, \tau)+\mathcal{O}(n^{-1}).
	\end{aligned}
\end{equation}
Substituting $\dot{\mathbf{T}}_1^{\rm out}(\lambda; \chi, \tau)$ into Eq.\eqref{eq:qn-genus-2}, then the asymptotic expression for the genus-two region is,
\begin{multline}\label{eq:qn-genus-3-1}
	q^{[n]}(n\chi, n\tau)=r\frac{\Theta\left(\mathbf{A}(\infty)+\mathbf{d}\right)}{\Theta\left(\mathbf{A}(\infty)+\mathbf{d}-\pmb{\mathcal{U}}F_{1}-\pmb{\mathcal{V}}F_{2}\right)}
	\frac{\Theta\left(\mathbf{A}(\infty)-\mathbf{d}+\pmb{\mathcal{U}}F_{1}+\pmb{\mathcal{V}}F_{2}\right)}{\Theta\left(\mathbf{A}(\infty)-\mathbf{d}\right)}\\
	\times \ii \left(\Im(b_2)-\Im(a_2)-\Im(d_2)\right)\ee^{2F_{1}J_{1}+2F_{2}J_{2}-2F_{0}}+\mathcal{O}(n^{-1}).
\end{multline}
With this expression, we check the asymptotic solution and the exact solution by choosing $\tau=\frac{1}{4}, c_1=c_2=1$ in Fig.\ref{fig:genus2}.
\begin{figure}[!ht]
\centering
\includegraphics[width=1\textwidth]{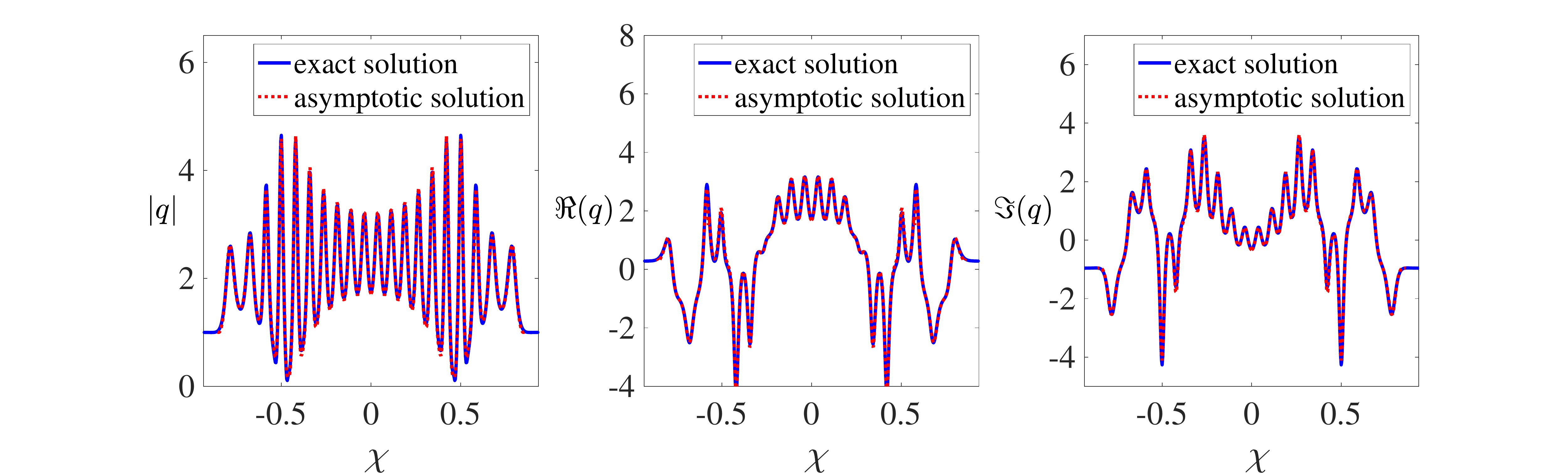}
\caption{The comparison between the exact solution ($20$-th order KMBs) and the asymptotics in genus-two region by choosing $\tau=\frac{1}{4}$ (as shown in the green dashed line in the right panel in Fig. \ref{fig:KM}), $c_1=c_2=1$. The left one is the modulus $q^{[n]}(n\chi, n\tau)$,
the middle is the real part, and the right is the imaginary part. It is seen that they are fitting very well.}
\label{fig:genus2}
\end{figure}

\section{Genus-zero-infinity and genus-zero-up regions}\label{sec:exp-dec}
In the last section, we have obtained the asymptotic expression in the genus-two region. And we check the asymptotics by choosing a group of parameters $\lambda_1=2\ii, c_1=c_2=1, \tau=\frac{1}{4}$.
For the rest of regions, we will choose another group of parameters $\lambda_1=\frac{3}{2}\ii$, $c_1=-c_2=1$, to verify their asymptotics numerically.
In this section, we first analyze the asymptotics in the genus-zero-infinity region($g_0^\infty$ in Fig.\ref{fig:KM}). For the large-order asymptotics of high-order solitons in the exponent-decay region \cite{Bilman-JDE-2021}, the leading order term exponentially decays to the zero background, in which the leading order term can be derived directly.  However, the leading order term in the genus-zero-infinity region will approach to the background wave $q^{[n]}(n\chi, n\tau)\to\ee^{\ii n\tau}$. Thus the corresponding asymptotic analysis for both two kinds of regions are different. For the high-order KMBs, there appears a new factor $\frac{\ii r}{4n}\left(\frac{\lambda-\ii}{\lambda+\ii}\right)^{1/4}$ in the phase term $\vartheta(\lambda; \chi, \tau)$, which brings a new branch cut in the vertical segment $\Sigma_c=[-\ii, \ii]$. Thus, the previous method in studying the large-order asymptotics of solitons can not apply to KMBs, and we need new skills to deal with this branch cut. A frequently-used way is to introduce a proper $g$-function. To this end, we give a RHP for the $g_{1}(\lambda; \chi, \tau)$-function in this section.
\begin{rhp}\label{RHP:EXP}
Let $(\chi, \tau)\in\mathbb{R}^2$, we can find a  $g_{1}(\lambda):=g_{1}(\lambda; \chi, \tau)$-function satisfying the following conditions:
\begin{itemize}
\item
{\bf Analyticity}: $g_{1}(\lambda)$ is analytic in $\mathbb{C}\setminus \Sigma_{g_{1}}$, where $\Sigma_{g_{1}}$ is a branch cut to be determined,
and it takes the continuous boundary conditions from the left and right sides of $\Sigma_{g_{1}}$.
\item
{\bf Jump Condition}: The boundary values on the jump contour $\Sigma_{g_{1}}$ are related by
\begin{equation}\label{eq:jump-genus-zero-1}
g_{1,+}(\lambda)+g_{1,-}(\lambda)+\vartheta_{+}(\lambda; \chi, \tau)+\vartheta_{-}(\lambda; \chi, \tau)=\kappa_{1},\qquad\lambda\in\Sigma_{g_{1}}.
\end{equation}
\item
{\bf Normalization}: As $\lambda\to\infty$, $g_{1}(\lambda)$ satisfies
\begin{equation}
g_{1}(\lambda)\to \mathcal{O}(\lambda^{-1}).
\end{equation}
\item
{\bf Symmetry}: $g_{1}(\lambda)$ satisfies the Schwartz symmetric condition:
\begin{equation}
g_{1}(\lambda)=g_{1}^*(\lambda^*).
\end{equation}
\end{itemize}
\end{rhp}
For this case, we still analyze its derivative to $\lambda$ and eliminate these two logarithmic terms and the integral constant $\kappa_1$. When $\lambda\in\Sigma_{g_{1}}$, we have
\begin{equation}
g'_{1,+}(\lambda)+g'_{1,-}(\lambda)=-2\chi-4\lambda\tau-\frac{\ii }{\lambda-\lambda_1}+\frac{\ii}{\lambda-\lambda_1^*}-\frac{1}{2n}\frac{\ii r}{\lambda-\ii}+\frac{1}{2n}\frac{\ii r}{\lambda+\ii}.
\end{equation}

Similarly, we can introduce a square root function $R_{1}(\lambda)\equiv R_{1}(\lambda; \chi, \tau)$ with the definition
\begin{equation}\label{eq:Re}
R_{1}(\lambda):=\sqrt{(\lambda-a_{1})(\lambda-a_{1}^*)},\quad a_1\equiv a_1(\chi, \tau),
\end{equation}
then $\frac{g'_{1}(\lambda)}{R_{1}(\lambda)}$ satisfies the relation
\begin{equation}
\left(\frac{g'_{1}(\lambda)}{R_{1}(\lambda)}\right)_+-\left(\frac{g'_{1}(\lambda)}{R_{1}(\lambda)}\right)_-=\frac{-2\chi-4\lambda\tau-\frac{\ii}{\lambda-\lambda_1}
+\frac{\ii}{\lambda-\lambda_1^*}-\frac{1}{2n}\frac{\ii r}{\lambda-\ii}+\frac{1}{2n}\frac{\ii r}{\lambda+\ii}}{R_{1}(\lambda)_+},
\end{equation}
which can also be solved by the Plemelj formula,
\begin{equation}
g'_{1}(\lambda)=\frac{R_{1}(\lambda)}{2\pi\ii}\int_{\Sigma_{g_1}}\frac{-2\chi-4s\tau-\frac{\ii}{s-\lambda_1}+\frac{\ii}{s-\lambda_1^*}
-\frac{1}{2n}\frac{\ii r}{s-\ii}+\frac{1}{2n}\frac{\ii r}{s+\ii}}{R_{1}(s)(s-\lambda)}ds.
\end{equation}
With the generalized residue theorem, $g'_{1}(\lambda)$ can be given as an explicit formula:
\begin{equation}
\begin{aligned}
g'_{1}(\lambda)&=R_{1}(\lambda)\left(\mathop{\rm Res}\limits_{s=\lambda}+\mathop{\rm Res}\limits_{s=\lambda_1}+\mathop{\rm Res}\limits_{s=\lambda_1^*}{+}\mathop{\rm Res}
\limits_{s=\infty}\right)\left(\frac{-\chi-2 s\tau-\frac{\ii}{2(s-\lambda_1)}+\frac{\ii}{2(s-\lambda_1^*)}-\frac{1}{4n}\frac{\ii r}{(s-\ii)}+\frac{1}{4n}\frac{\ii r}{(s+\ii)}}{R_{1}(s)\left(s-\lambda\right)}\right)\\
&+R_{1}(\lambda)\left(\mathop{\rm Res}\limits_{s=\ii}+\mathop{\rm Res}\limits_{s=-\ii}\right)\left(\frac{-\chi-2 s\tau-\frac{\ii}{2(s-\lambda_1)}+\frac{\ii}{2(s-\lambda_1^*)}-\frac{1}{4n}\frac{\ii r}{(s-\ii)}+\frac{1}{4n}\frac{\ii r}{(s+\ii)}}{R_{1}(s)\left(s-\lambda\right)}\right)\\
&=R_{1}(\lambda)\left[{-}\frac{\ii}{2R_{1}(\lambda_1)(\lambda_1-\lambda)}{+}\frac{\ii}{2R_{1}(\lambda_1^*)(\lambda_1^*-\lambda)}{-}\frac{\ii r}{4nR_{1}(\ii)(\ii-\lambda)}{+}\frac{\ii r}{4nR_{1}(-\ii)(-\ii-\lambda)}{+}2 \tau \right]\\
&-\chi-2\lambda\tau-\frac{\ii}{2}\frac{1}{\lambda-\lambda_1}+\frac{\ii}{2}\frac{1}{\lambda-\lambda_1^*}-\frac{\ii r}{4n}\frac{1}{\lambda-\ii}-\frac{\ii r}{4n}\frac{1}{\lambda+\ii}.
\end{aligned}
\end{equation}
Moreover, the phase term will be modified as $h_1(\lambda)\equiv h_1(\lambda; \chi, \tau):=g_1(\lambda)+\vartheta(\lambda; \chi, \tau)$, thus we have
\begin{equation}
h_1'(\lambda)=R_{1}(\lambda)\left[{-}\frac{\ii}{2R_{1}(\lambda_1)(\lambda_1-\lambda)}{+}\frac{\ii}{2R_{1}(\lambda_1^*)(\lambda_1^*-\lambda)}{-}\frac{\ii r}{4nR_{1}(\ii)(\ii-\lambda)}{+}\frac{\ii r}{4nR_{1}(-\ii)(-\ii-\lambda)}{+}2 \tau \right].
\end{equation}
Compared to the formula of $g_2'(\lambda)$ in Eq.\eqref{eq:dG}, $g_1'(\lambda)$ seems similar by replacing $R_2(\lambda)$ with $R_{1}(\lambda)$, which only adds an additional factor $2R_{1}(\lambda)\tau$. But the asymptotics for these two regions are different due to the difference between $R_{1}(\lambda)$ and $R_{2}(\lambda)$. In the genus-two region, the normalization condition of $g(\lambda)$ at $\lambda=\infty$ produces four conditions to the parameters $s_i(i=1,\cdots,6)$. While in the genus-zero region, we have only two parameters to de determined, the real and imaginary parts of $a_1$.
The normalization condition of $g'_{1}(\lambda)$ in the large-$\lambda$ expansion can derive two relations about these two parameters, and we do not need other conditions anymore,
\begin{equation}
\begin{aligned}
&\mathcal{O}(1): \chi+2\tau \Re(a_{1})-\frac{\ii}{2R_{1}(\lambda_1)}+\frac{\ii}{2R_{1}(\lambda_1^*)}-\frac{\ii r}{4nR_{1}(\ii)}+\frac{\ii r}{4nR_{1}(\ii)}=0,
\\&\mathcal{O}(\lambda^{-1}): \frac{\ii\lambda_1^*}{2R_1(\lambda_1^*)}-\frac{\ii\lambda_1}{2R_1(\lambda_1)}+\frac{r}{4nR_1(\ii)}+\frac{r}{4nR_1(-\ii)}+2\Re(a_1)^2\tau-\Im(a_1)^2\tau+\Re(a_1)\chi=0.
\end{aligned}
\end{equation}
Then we can calculate the unknown parameters numerically. In this region, $h'_{1}(\lambda)$ has real root. By choosing proper $(\chi, \tau)$ in this region,
we give the sign chart of $\Im(h_1(\lambda))$ in Fig. \ref{fig:contour-exp-1}.

\begin{figure}[!ht]
\centering
\includegraphics[width=0.45\textwidth]{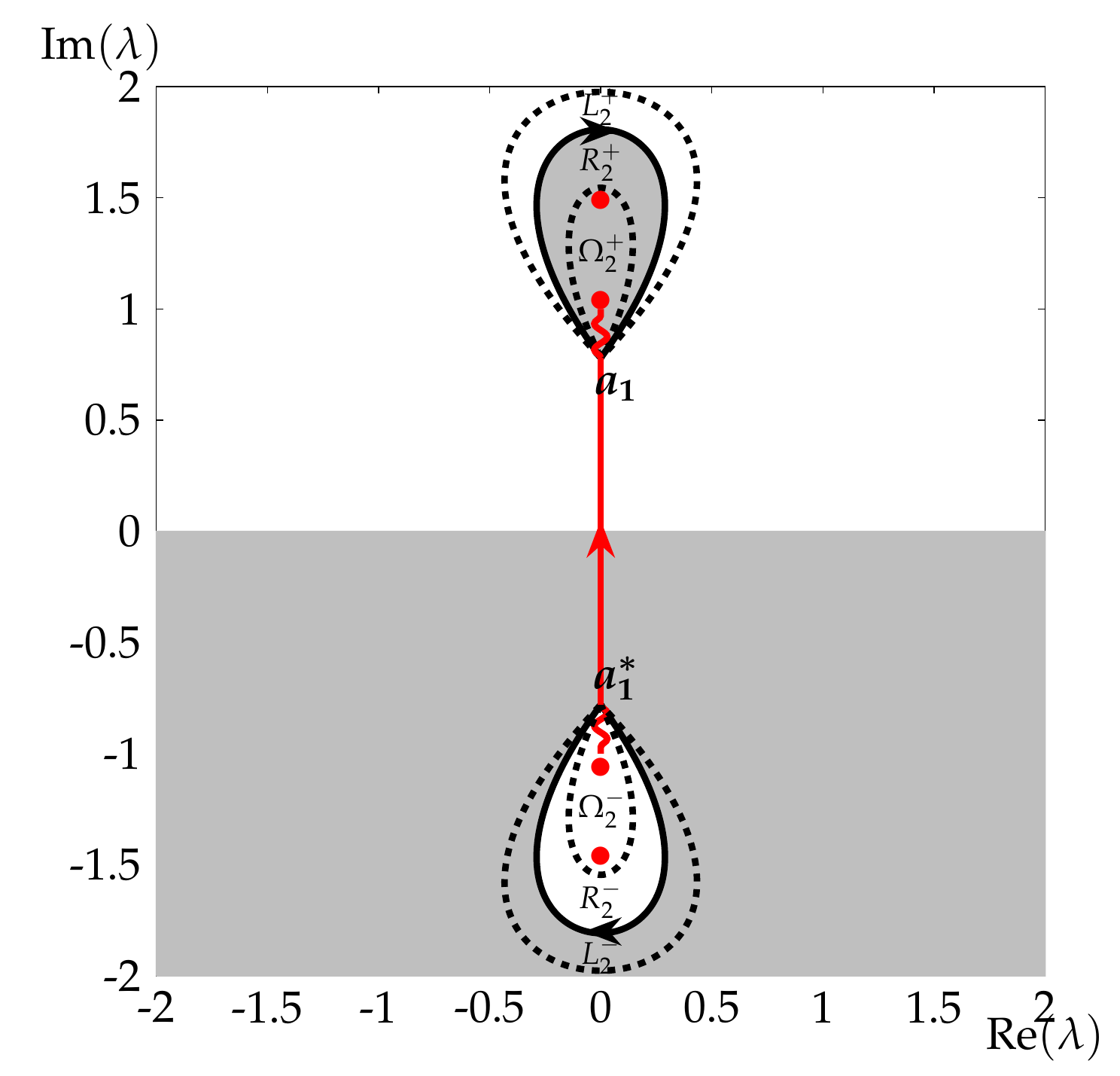}
\centering
\includegraphics[width=0.45\textwidth]{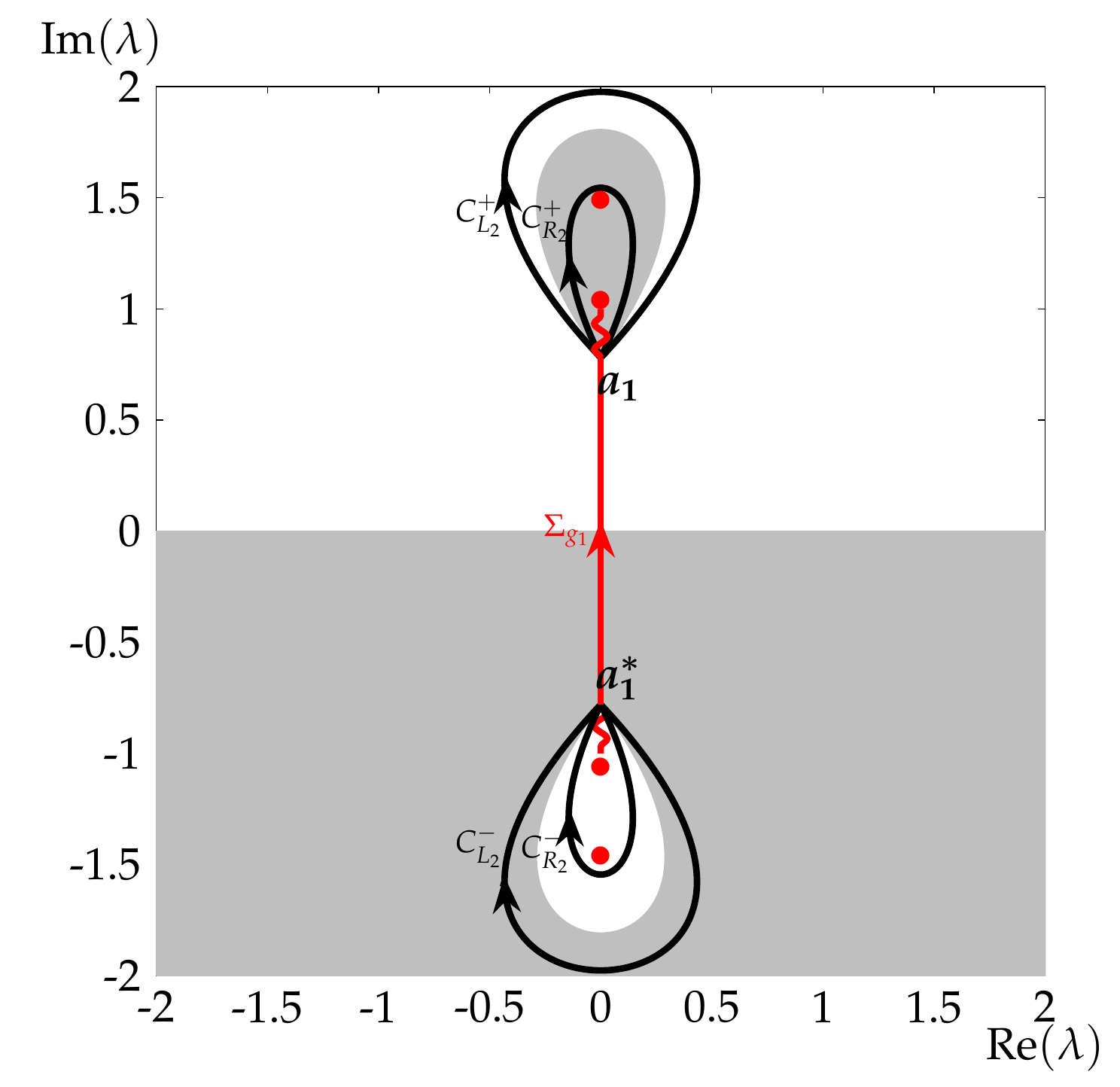}
\caption{The sign chart of ${\Im}(h_{1}(\lambda; \frac{7}{10}, \frac{1}{200}))$ in the genus-zero-infinity region, where ${\Im}(h_{1}(\lambda; \frac{7}{10}, \frac{1}{200}))>0$(unshaded) and ${\Im}(h_{1}(\lambda; \frac{7}{10}, \frac{1}{200}))<0$(shaded).
 The left one gives the original contour for $\mathbf{S}_{2}(\lambda; \chi, \tau)$, and the right panel is the corresponding jump contour for $\mathbf{T}_{2}(\lambda; \chi, \tau)$.}
\label{fig:contour-exp-1}
\end{figure}

For $(\chi, \tau)$ in the genus-zero-up region, the definition of $g_1(\lambda)$-function is similar with the genus-zero-infinity region. But in the genus-zero-up region, $h_1'(\lambda)$ has no real roots, thus the jump contour of $\Im(h_{1}(\lambda))$ has a slight difference with the genus-zero-infinity region. We omit the details for the genus-zero-up region and only give the sign chart of $\Im(h_1(\lambda))$ by choosing one fixed $(\chi, \tau)$ in Fig.\ref{fig:contour-exp-2}.
\begin{figure}[!ht]
	\centering
	\includegraphics[width=0.45\textwidth]{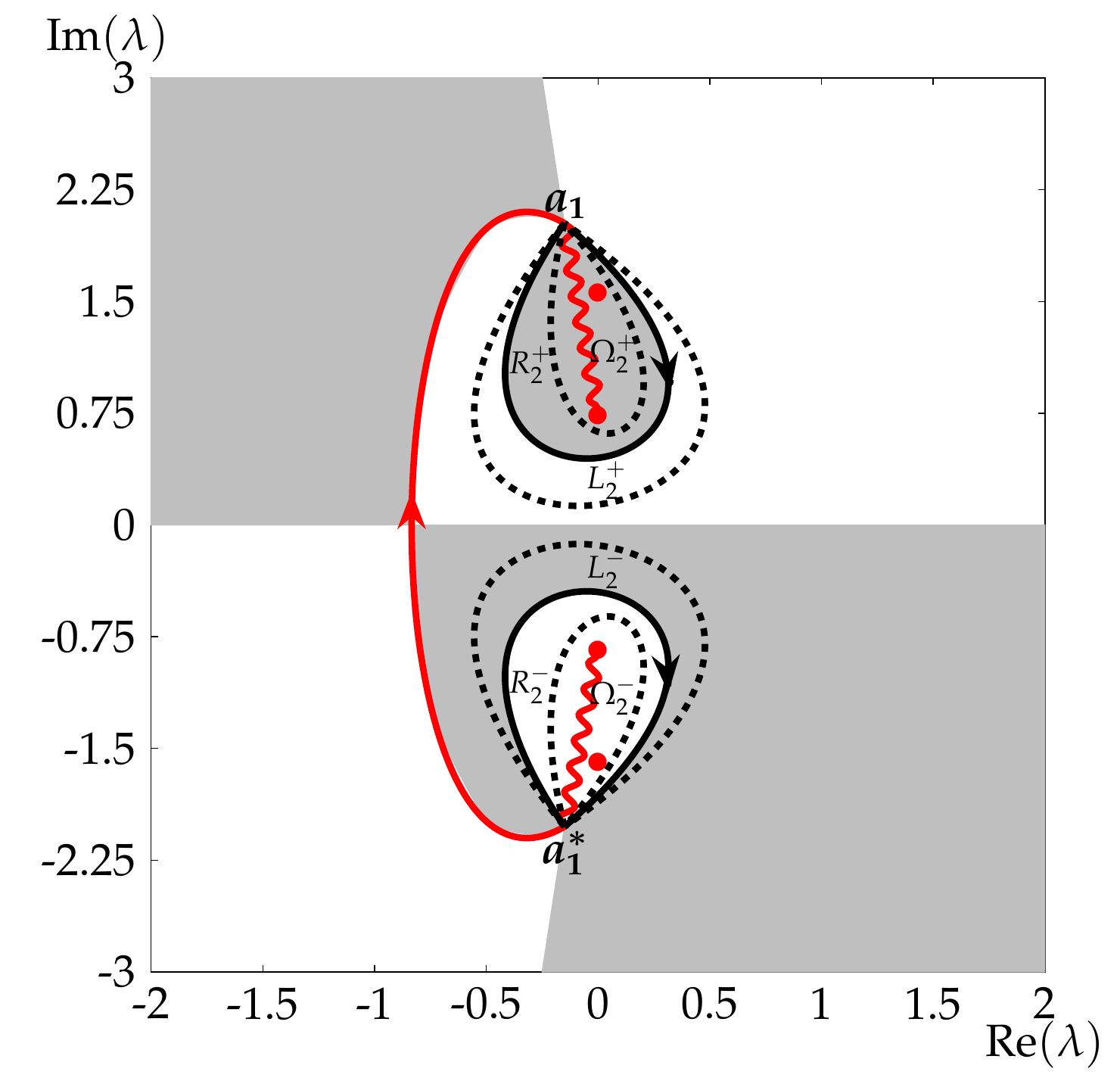}
	\centering
	\includegraphics[width=0.45\textwidth]{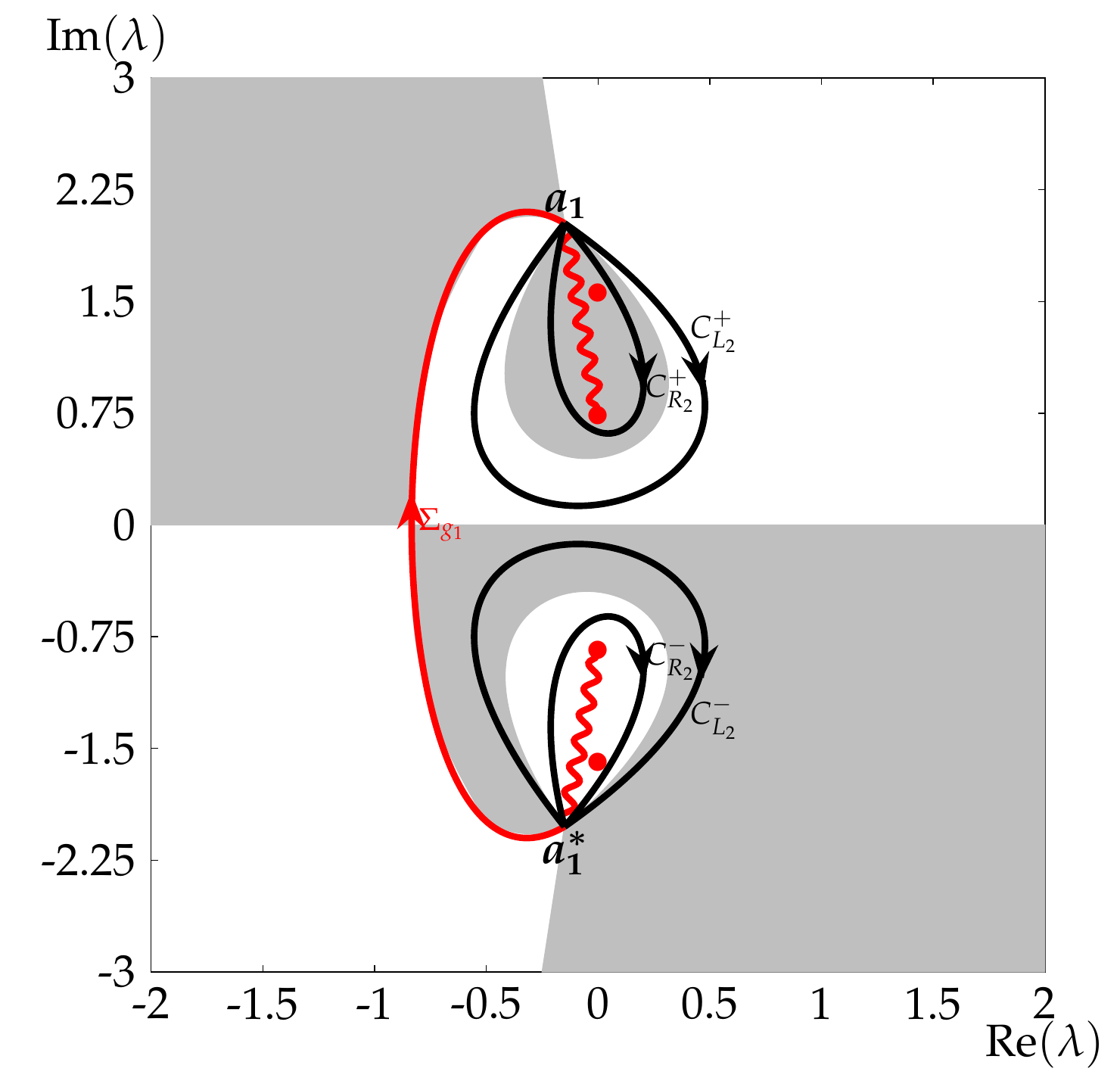}
	\caption{The sign chart of ${\Im}(h_{1}(\lambda; \frac{1}{5}, \frac{28}{100}))$ in the genus-zero-up region, where ${\Im}(h_{1}(\lambda; \frac{1}{5}, \frac{28}{100}))>0$(unshaded)
and ${\Im}(h_{1}(\lambda; \frac{1}{5}, \frac{28}{100}))<0$(shaded). The cut $\Sigma_{g_1}$ is chosen as arbitrary segments connecting the branch points of $R_1(\lambda)$.
The left one gives the jump contour for $\mathbf{S}_{2}(\lambda; \chi, \tau)$, and the right panel is the corresponding jump contour for $\mathbf{T}_{2}(\lambda; \chi, \tau)$.}
	\label{fig:contour-exp-2}
\end{figure}

With the sign of $\Im(h_{1}(\lambda))$ in Fig.\ref{fig:contour-exp-1} and Fig.\ref{fig:contour-exp-2}, we can analyze the asymptotics for the genus-zero-infinity region and genus-zero-up region together via the nonlinear steepest-descent method. Similarly, set
\begin{equation}
\mathbf{S}_{2}(\lambda; \chi, \tau):=\left\{\begin{aligned}&\mathbf{M}^{[n]}(\lambda; \chi, \tau)\ee^{-\ii n\vartheta(\lambda; \chi, \tau)\sigma_3}\mathbf{Q}^{-1}_d
\ee^{\ii n\vartheta(\lambda; \chi, \tau)\sigma_3},\quad &\lambda\in D_{0}\cap\left(D_{2}^{+}\cup D_{2}^{-}\right)^{c},\\
&\mathbf{M}^{[n]}(\lambda; \chi, \tau),\quad &\text{otherwise},
\end{aligned}\right.
\end{equation}
where $D_{2}^{\pm}=\Omega_{2}^{\pm}\cup R_{2}^{\pm}$. Then the jump curves of $\mathbf{S}_{2}(\lambda; \chi, \tau)$ transfer into $\partial D_{2}^{\pm}$ and the contour $\Sigma_{g_1}$. That is
\begin{equation}\label{jump-S2}
\begin{split}
\mathbf{S}_{2,+}(\lambda; \chi, \tau)&=\mathbf{S}_{2,-}(\lambda; \chi, \tau)\ee^{-\ii n \vartheta_{-}(\lambda; \chi, \tau)\sigma_3}\begin{bmatrix}0&\ii r\\\ii r &0
	\end{bmatrix}\ee^{\ii n\vartheta_{+}(\lambda; \chi, \tau)\sigma_3},\quad \lambda\in\Sigma_{g_1},\\
\mathbf{S}_{2,+}(\lambda; \chi, \tau)&=\mathbf{S}_{2,-}(\lambda; \chi, \tau)\ee^{-\ii n\vartheta(\lambda; \chi, \tau)}\mathbf{Q}_{d}^{-1}\ee^{\ii n\vartheta(\lambda; \chi, \tau)},\qquad \qquad\lambda\in\partial D_{2}^{\pm}.
\end{split}
\end{equation}
In the regions $\Omega_{2}^{\pm}, R_{2}^{\pm}$ and $L_{2}^{\pm}$, we define a similar matrix $\mathbf{T}_{2}(\lambda; \chi, \tau)$ as $\mathbf{T}_{1}(\lambda; \chi, \tau)$
in the regions $\Omega^{\pm}, R^{\pm}$ and $L^{\pm}$ in Eq.\eqref{eq:S-genus-two} by replacing $g_{2}(\lambda)$ with $g_1(\lambda)$ respectively. As a result, when $n$ is large, the primary jump condition of $\mathbf{T}_{2}(\lambda; \chi, \tau)$ changes into
\begin{equation}
\begin{aligned}
\mathbf{T}_{2, +}(\lambda; \chi, \tau)&=\mathbf{T}_{2,-}(\lambda; \chi, \tau)\begin{bmatrix}0&\ii r\ee^{-\ii n\kappa_{1}}\\
\ii r\ee^{\ii n\kappa_{1}}&0
\end{bmatrix},\quad&\lambda\in\Sigma_{g_{1}}.\\
\end{aligned}
\end{equation}
And other jump conditions will decay to the identity matrix exponentially.
Next, we will give the parametrix construction for $\mathbf{T}_{2}(\lambda; \chi, \tau)$ in the following subsection.
\subsection{Parametrix construction}\label{sec:para-exp}
Similar to the analysis in the genus-two region, we first give an outer parametrix $\dot{\mathbf{T}}_{2}^{\rm out}(\lambda; \chi, \tau)$ satisfying the
same jump conditions for $\lambda\in\Sigma_{g_{1}}$. By the Plemelj formula, the outer parametrix can be given as
\begin{equation}
\dot{\mathbf{T}}_{2}^{\rm out}(\lambda; \chi, \tau)=\ee^{\frac{-\ii n \kappa_{1}}{2}\sigma_3}
\mathbf{Q}_d\left(\frac{\lambda-a_{1}}{\lambda-a_{1}^*}\right)^{\frac{1}{4}r\sigma_3}\mathbf{Q}_d^{-1}\ee^{\frac{\ii n \kappa_{1}}{2}\sigma_3},\quad \lambda\in\Sigma_{g_{1}}.
\end{equation}
It can be seen that the outer parametrix has two singularities at $\lambda=a_{1}, \lambda=a_{1}^*$, thus we should consider the local analysis at these two points. Set the inner parametrices as
\begin{equation}
\begin{aligned}
\dot{\mathbf{T}}_{2}^{a_{1}}(\lambda; \chi, \tau),\quad \lambda\in D_{a_{1}}(\delta),\\
\dot{\mathbf{T}}_{2}^{a_{1}^*}(\lambda; \chi, \tau),\quad \lambda\in D_{a_{1}^*}(\delta),
\end{aligned}
\end{equation}
based on the result in \cite{Bilman-arxiv-2021}, in the neighbourhood of $\lambda=a_{1}$ and $\lambda=a_{1}^*$, the inner parametrices $\dot{\mathbf{T}}_{2}^{a_{1}}(\lambda; \chi, \tau)$
and $\dot{\mathbf{T}}_{2}^{a_{1}^*}(\lambda; \chi, \tau)$ are related to the Airy function, and we have an error estimation of
$\dot{\mathbf{T}}_{2}^{a_{1}}(\lambda; \chi, \tau), $ $\dot{\mathbf{T}}_{2}^{a_{1}^*}(\lambda; \chi, \tau)$ and $\dot{\mathbf{T}}_{2}^{\rm out}(\lambda; \chi, \tau)$
\begin{equation}
\begin{aligned}
\|\dot{\mathbf{T}}_{2}^{a_{1}}(\lambda; \chi, \tau)\left(\dot{\mathbf{T}}_{2}^{\rm out}(\lambda; \chi, \tau)\right)^{-1}\|=\mathcal{O}(n^{-1}),\\
\|\dot{\mathbf{T}}_{2}^{a_{1}^*}(\lambda; \chi, \tau)\left(\dot{\mathbf{T}}_{2}^{\rm out}(\lambda; \chi, \tau)\right)^{-1}\|=\mathcal{O}(n^{-1}).
\end{aligned}
\end{equation}
Then the global parametrix can be defined by
\begin{equation}
\dot{\mathbf{T}}_{2}(\lambda; \chi, \tau):=\left\{\begin{aligned}&\dot{\mathbf{T}}_{2}^{a_{1}}(\lambda; \chi, \tau),\quad &\lambda\in D_{a_{1}}(\delta),\\
&\dot{\mathbf{T}}_{2}^{a_{1}^*}(\lambda; \chi, \tau),\quad &\lambda\in D_{a_{1}^*}(\delta),\\
&\dot{\mathbf{T}}_{2}^{\rm out}(\lambda; \chi, \tau),\quad &\lambda\in \mathbb{C}\setminus\left(\overline{D_{a_{1}}(\delta)}\cup\overline{D_{a_{1}^*}(\delta)}\cup\Sigma_{g_1}\right).
\end{aligned}\right.
\end{equation}
Next, we can give the error analysis between $\mathbf{T}_2(\lambda; \chi, \tau)$ and its parametrix $\dot{\mathbf{T}}_{2}(\lambda; \chi, \tau)$, define
\begin{equation}
\mathcal{E}_{2}(\lambda; \chi, \tau):=\mathbf{T}_{2}(\lambda; \chi, \tau)\left(\dot{\mathbf{T}}_{2}(\lambda; \chi, \tau)\right)^{-1},
\end{equation}
then the solution $q^{[n]}(n\chi, n\tau)$ can be given in Eq.\eqref{eq:q-exp}
\begin{equation}\label{eq:q-exp}
	\begin{aligned}
		q^{[n]}(n\chi, n\tau)
		&=\Im (a_{1})\ee^{-\ii n\kappa_1}+\mathcal{O}(n^{-1}).
	\end{aligned}
\end{equation}
\begin{remark}
In the genus-zero-infinity and genus-zero-up regions, if the branch point $a_1$ is close to $\ii$, the asymptotic expression Eq.\eqref{eq:q-exp} will tend to
the background solution $q=\ee^{\ii t}$. Indeed, this result can also be obtained from the jump condition \eqref{eq:jump-genus-zero-1} in the RHP\ref{RHP:EXP}. With the Plemelj formula, we have
\begin{equation}
g_1(\lambda)=\frac{R_{1}(\lambda)}{2\pi\ii}\int_{\Sigma_{g_1}}\frac{\kappa_1-\vartheta_{+}(\xi; \chi, \tau)-\vartheta_{-}(\xi; \chi, \tau)}{R_{1}(\xi)(\xi-\lambda)}d\xi.
\end{equation}
From the normalization of $g_{1}(\lambda)$ as $\lambda\to\infty$, we get an identity,
\begin{equation}
\int_{\Sigma_{g_1}}\frac{\kappa_1}{R_{1}(\xi)}d\xi=\int_{\Sigma_{g_1}}\frac{\vartheta_+(\xi; \chi, \tau)+\vartheta_-(\xi; \chi, \tau)}{R_{1}(\xi)}d\xi.
\end{equation}
With the aid of the generalized residue theorem, the integration constant $\kappa_1$ can be represented by
\begin{equation}\label{eq:kappa1}
\kappa_1=(2\Re(a_1)^2-\Im(a_1)^2)\tau+2\Re(a_1)\chi+\ii\int_{-2\ii}^{a_1^*}\frac{1}{R_{1}(\xi)}d\xi+\ii\int_{a_1}^{2\ii}\frac{1}{R_1(\xi)}d\xi.
\end{equation}
If $a_1\to\ii,$ the last two integrals in Eq.\eqref{eq:kappa1} vanish, then $\kappa_1\to-\tau$. Plugging it into the asymptotic expression \eqref{eq:q-exp}, we get $q^{[n]}(n\chi, n\tau)\to \ee^{\ii n\tau}= \ee^{\ii t}$.
\end{remark}
By choosing proper parameters in these two regions, we compare the exact and the asymptotic solutions in Fig.\ref{fig:al} and Fig.\ref{fig:exp}.
\begin{figure}[ht]
	\centering
	\includegraphics[width=1\textwidth,height=0.25\textwidth]{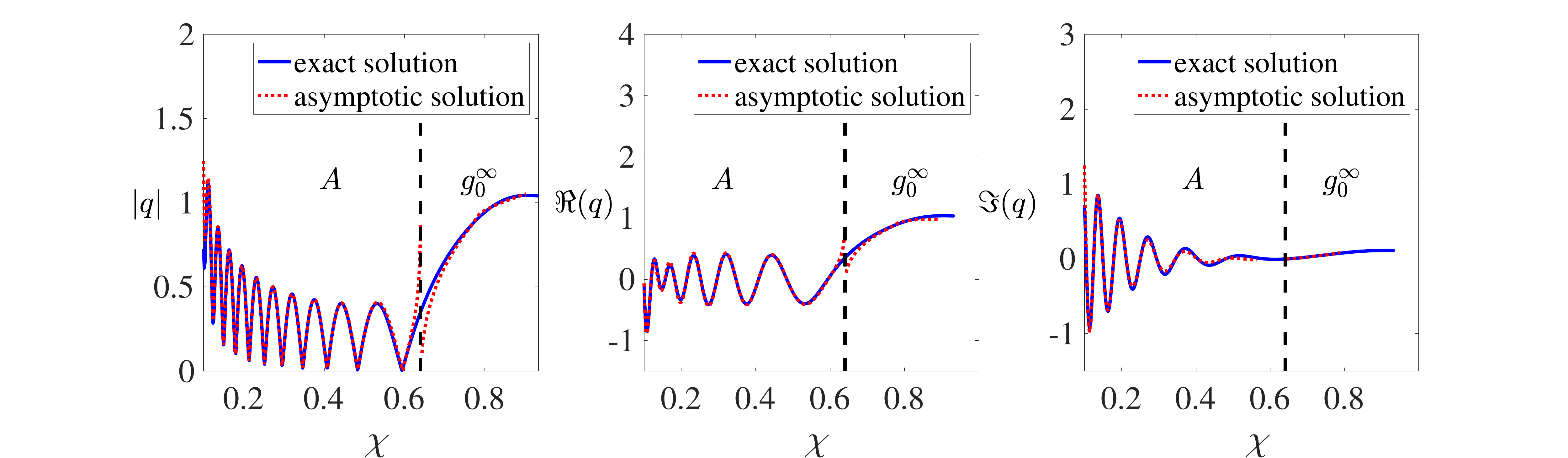}
	\caption{The comparison between the exact solution ($20$-th order KMBs) and its asymptotic solution in the algebraic-decay region(as show by the alphabet $A$) and the genus-zero-infinity region(as shown by $g_0^\infty$) by choosing $\tau=\frac{1}{200}$
 (as shown by the green dashed line in the middle panel in Fig. \ref{fig:KM}), $c_1=-c_2=1$. The left one is the modulus of $q^{[n]}(n\chi, n\tau)$, the middle and right panels show the real and imaginary parts of $q^{[n]}(n\chi, n\tau)$ respectively.}
	\label{fig:al}
\end{figure}
\begin{figure}[!ht]
	\centering
	\includegraphics[width=1\textwidth]{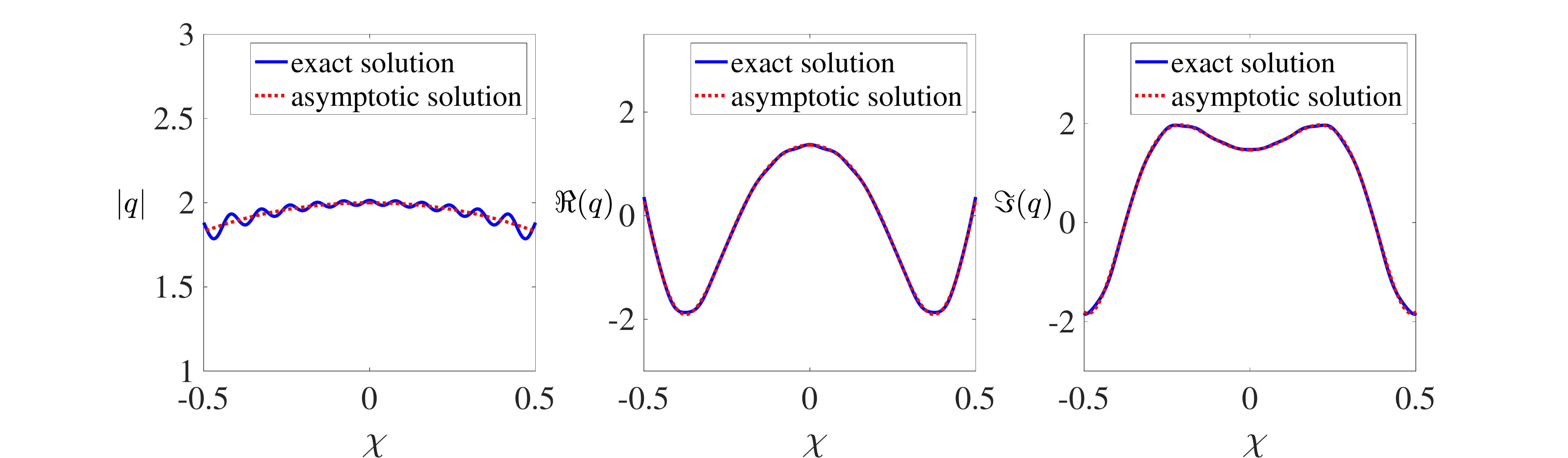}
	\caption{The comparison between the exact solution ($20$-th order KMBs) and the asymptotics in the genus-zero-up region by choosing $\tau=\frac{28}{100}$ (as shown by the green dashed line
in the middle panel in Fig. \ref{fig:KM}), $c_1=-c_2=1$. The left one is the modulus of $q^{[n]}(n\chi, n\tau)$, the middle and right panels show the real and imaginary parts of $q^{[n]}(n\chi, n\tau)$ respectively.}
	\label{fig:exp}
\end{figure}
\section{Genus-zero-down region}
In this section, we continue to study the asymptotics in the genus-zero-down region. For the asymptotics of KMBs, we present three types of genus-zero regions, the genus-zero-infinity region, the genus-zero-up region and the genus-zero-down region. In the last section, we have studied two of them.
Now we give the analysis for the genus-zero-down region.
Similar with the asymptotics in the other two genus-zero regions, we also need an auxiliary $g$-function(this $g$-function is set as $g_{0}(\lambda):=g_{0}(\lambda; \chi, \tau)$),
which has a similar formula with $g_{1}(\lambda)$ in RHP \ref{RHP:EXP}.
But the jump contour of $\Im(h_{0}(\lambda)\equiv h_0(\lambda; \chi, \tau):=g_{0}(\lambda)+\vartheta(\lambda; \chi, \tau))$ in this region is very different. In the above discussion, the branch cut $[-\ii, \ii]$ contains two parts, one coincides with the cut of $g_{1}(\lambda)$-function,
and the other one connects the branch point $a_{1}$ and $\ii$ as well as $a_{1}^{*}$ and $-\ii$. But in this region, the branch cut $[-\ii, \ii]$ is in a
closed region and the cut for the corresponding $g_{0}(\lambda)$-function is independent of the cut $[-\ii, \ii]$. Thus the RHP for $g_{0}(\lambda)$-function is the same as $g_{1}(\lambda)$-function
by replacing $\vartheta_{\pm}(\lambda; \chi, \tau)$ with $\vartheta(\lambda; \chi, \tau)$, the integration constant $\kappa_1$ is replaced with $\kappa_{0}$. Moreover, we introduce the root function $R_{0}(\lambda)\equiv R_{0}(\lambda; \chi, \tau):=\sqrt{(\lambda-a_{0})(\lambda-a_{0}^*)}$ to replace $R_{1}(\lambda)$.
By choosing one proper $\chi$ and $\tau$ in this region, we give the sign chart of $\Im(h_{0}(\lambda))$ in Fig.\ref{fig:contour-genus-0}.
\begin{figure}[!ht]
	\centering
	\includegraphics[width=0.45\textwidth]{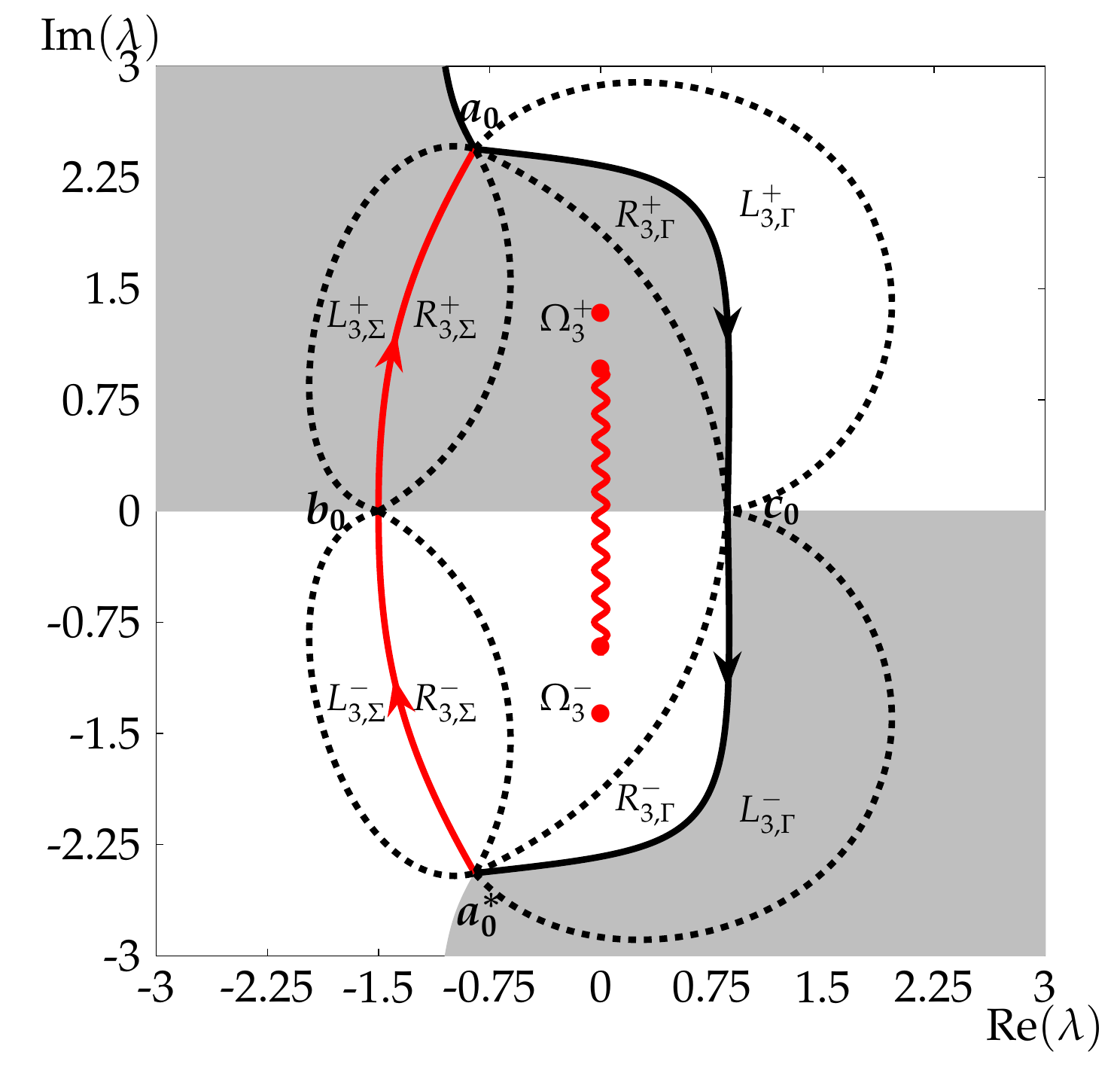}
	\centering
	\includegraphics[width=0.45\textwidth]{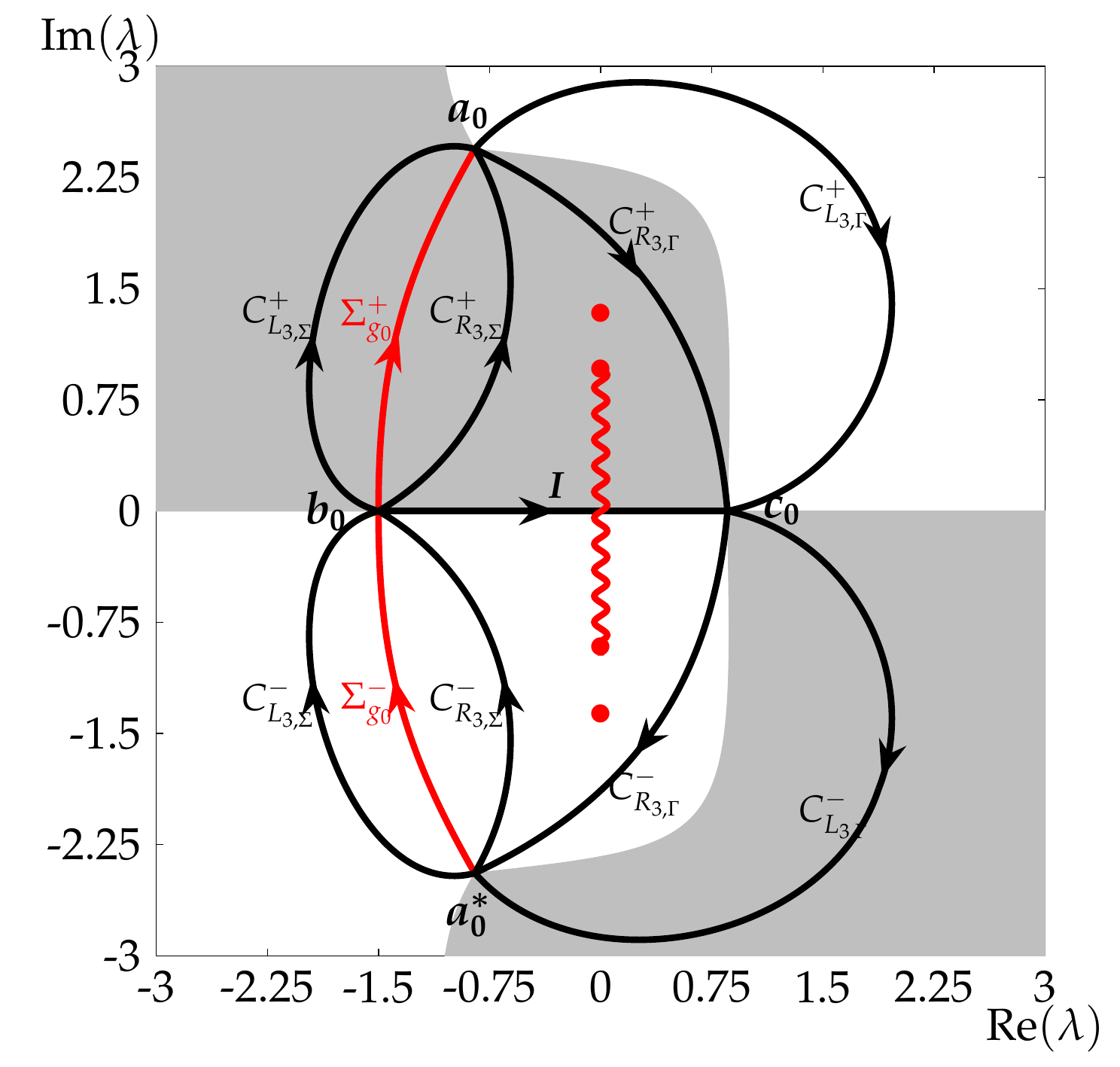}
	\caption{The sign chart of ${\Im}(h_{0}(\lambda; \frac{1}{5}, \frac{1}{10}))$ in the genus-zero-down region, where ${\Im}(h_{0}(\lambda; \frac{1}{5}, \frac{1}{10}))>0$(unshaded) and ${\Im}(h_{0}(\lambda; \frac{1}{5}, \frac{1}{10}))<0$(shaded).
  The left one gives the original jump contour for $\mathbf{S}_{3}(\lambda; \chi, \tau)$, and the right one is the corresponding jump contour after deformation.}
	\label{fig:contour-genus-0}
\end{figure}

Then we can define a similar sectional holomorphic function $\mathbf{S}_{3}(\lambda; \chi, \tau)$ as Eq.\eqref{S1-genus-two},
\begin{equation}\label{S3-genus-zero}
	\mathbf{S}_{3}(\lambda; \chi, \tau):=\left\{\begin{aligned}&\mathbf{M}^{[n]}(\lambda; \chi, \tau)\ee^{-\ii n
			\vartheta(\lambda; \chi, \tau)\sigma_3}\mathbf{Q}^{-1}_d
		\ee^{\ii n\vartheta(\lambda; \chi, \tau)\sigma_3},\quad &\lambda\in D_{0}\cap\left(D_{3}^{+}\cup D_{3}^{-}\right)^{c},\\
		&\mathbf{M}^{[n]}(\lambda; \chi, \tau),\quad &\text{otherwise},
	\end{aligned}\right.
\end{equation}
where $D_{3}^{\pm}=R_{3,\Sigma}^{\pm}\cup\Omega_3^{\pm}\cup R_{3,\Gamma}^{\pm}.$ Since $\vartheta(\lambda)$ has no cut at $\Sigma_{g_0}^{\pm}$, the jump conditions between
$\mathbf{S}_{3}(\lambda; \chi, \tau)$ and $\mathbf{S}_{2}(\lambda; \chi, \tau)$ have a little difference, in this case, we have
\begin{equation}
\mathbf{S}_{3,+}(\lambda; \chi, \tau)=\mathbf{S}_{3,-}(\lambda; \chi, \tau)\ee^{-\ii n\vartheta(\lambda; \chi, \tau)}\mathbf{Q}_{d}^{-1}\ee^{\ii n \vartheta(\lambda; \chi, \tau)},\quad\lambda\in\partial D_{3}^{\pm}.
\end{equation}
Next, in the regions $L_{3,\Sigma}^{\pm}, R_{3,\Sigma}^{\pm}, \Omega_{3}^{\pm}, R_{3,\Gamma}^{\pm}, L_{3,\Gamma}^{\pm}$,
define a similar matrix $\mathbf{T}_{3}(\lambda; \chi, \tau)$ as $\mathbf{T}_{1}(\lambda; \chi, \tau)$ in Eq.\eqref{eq:S-genus-two} in the corresponding regions $L_{1,\Sigma}^{\pm}, R_{1,\Sigma}^{\pm}, \Omega_{1}^{\pm}, R_{1,\Gamma}^{\pm}, L_{1,\Gamma}^{\pm}$
by replacing $g_{2}(\lambda)$ with $g_{0}(\lambda)$.
Then the primary jump conditions for $\mathbf{T}_{3}(\lambda; \chi, \tau)$ are,
\begin{equation}\label{eq:jump-genus-zero}
\begin{aligned}
		\mathbf{T}_{3, +}(\lambda; \chi, \tau)&=\mathbf{T}_{3, -}(\lambda; \chi, \tau)\begin{bmatrix}0&\ee^{-\ii n\kappa_0}\\
		-\ee^{\ii n\kappa_0}&0
	\end{bmatrix},&\lambda\in \Sigma_{g_0}^{\pm},\\
\mathbf{T}_{3,+}(\lambda; \chi, \tau)&=\mathbf{T}_{3,-}(\lambda; \chi, \tau)2^{\sigma_3},&\lambda\in I.
\end{aligned}
\end{equation}
Next we will give the parametrix construction for $\mathbf{T}_{3}(\lambda; \chi, \tau)$.
\subsection{Parametrix construction for $\mathbf{T}_{3}(\lambda; \chi, \tau)$}
In our previous work \cite{ling2022large}, we analyzed the large-order asymptotics of breathers for the NLS equation, which is constructed from two solitons on the vanishing background with the same velocity.
The phase terms between these two breathers are different, but for this genus-zero region, the jump contour of the $\Im(h_0(\lambda))$ is very similar because the singularities appearing in the phase terms are all in a closed contour.
As a result, after the deformation of contours, the jump conditions given in Eq. \eqref{eq:jump-genus-zero} are similar to the last jump conditions in the reference \cite{ling2022large} Eq.(77). Therefore, the parametrix construction will also be similar.
In the work \cite{ling2022large}, we presented a detailed analysis for the parametrix construction. Thus we only give a brief statement in this work,.
From the constant jump matrices when $\lambda\in \Sigma_{g_0}^{\pm}$ and $\lambda\in I$, the outer parametrix $	\dot{\mathbf{T}}_{3}^{\rm out}(\lambda; \chi, \tau)$ can be given as
\begin{equation}\label{eq:T_2out}
	\dot{\mathbf{T}}_{3}^{\rm out}(\lambda):=\mathbf{K}_{3}(\lambda)\left(\frac{\lambda-b_0}{\lambda-c_0}\right)^{\ii p\sigma_3},
\quad p=\frac{\log\left(2\right)}{2\pi},\quad \lambda\in\mathbb{C}\setminus\left(\Sigma_{g_0}^{\pm}\cup I\right),
\end{equation}
where $\mathbf{K}_{3}(\lambda)\equiv\mathbf{K}_{3}(\lambda; \chi, \tau)$ equals to
\begin{equation}
		\mathbf{K}_{3}(\lambda)=\ee^{-\frac{\ii \pi}{4}\sigma_3}\ee^{\frac{2\ii k_{3}(\infty)-\ii n\kappa_0}{2}\sigma_3}\mathbf{Q}_d\left(\frac{\lambda-a_0}{\lambda-a_0^*}\right)^{\frac{1}{4}\sigma_3}\mathbf{Q}_d^{-1}
\ee^{\frac{\ii n\kappa_0-2\ii k_{3}(\infty)}{2}\sigma_3}\ee^{\frac{\ii \pi}{4}\sigma_3}\ee^{-(\ii k_3(\lambda)-\ii k_{3}(\infty))\sigma_3},
\end{equation}
and $k_3(\lambda)$ is defined as
\begin{equation}
	k_3(\lambda)=pR_{0}(\lambda)\int_{b_0}^{c_0}\frac{1}{R_{0}(\xi)(\xi-\lambda)}d\xi+p\log\left(\frac{\lambda-b_0}{\lambda-c_0}\right),
\end{equation}
then $k_{3}(\infty)$ can be calculated directly,
\begin{equation}\label{eq:mu-genus-zero}
	k_{3}(\infty)=\lim\limits_{\lambda\to\infty}k_3(\lambda)=-p\int_{b_0}^{c_0}\frac{1}{R_{0}(\xi)}d\xi.
\end{equation}

It is easy to see that the outer parametrix $\dot{\mathbf{T}}_{3}^{\rm out}(\lambda; \chi, \tau)$ has singularities at the points $\lambda=b_0, \lambda=c_0, \lambda=a_0, \lambda=a_{0}^{*}$.
Thus we should consider the inner parametrices at the neighbourhood of these points. Similar as \cite{Bilman-arxiv-2021,ling2022large},
the inner parametrices at the neighbourhood of these points $\lambda=b_0$ and $\lambda=c_0$ can be defined as
\begin{multline}
	\dot{\mathbf{T}}_{3}^{c_0}(\lambda; \chi, \tau)=\mathbf{K}_{3}(\lambda)n^{\ii p\sigma_3/2}\ee^{-\ii nh_0(c_0)\sigma_3}\mathbf{H}_{c_0}(\lambda)\mathbf{U}_{c_0}(\zeta_{c_0})\ee^{\ii nh_0(c_0)\sigma_3},\quad \lambda\in D_{c_0}(\delta),\\
	\mathbf{H}_{c_0}(\lambda)\equiv\mathbf{H}_{c_0}(\lambda; \chi, \tau):=(\lambda-b_0)^{\ii p\sigma_3}\left(\frac{f_{c_0}(\lambda)}{\lambda-c_0}\right)^{\ii p\sigma_3},
\end{multline}
\begin{equation}
	\dot{\mathbf{T}}_{3}^{b_0}(\lambda; \chi, \tau){:=}\left\{\begin{aligned}\mathbf{K}_{3}(\lambda)\mathbf{H}_{b_0}(\lambda)&n^{\ii p\sigma_3/2}\ee^{\ii nh_{0}(b_0)\sigma_3}\ii^{-\sigma_3}
\mathbf{U}_{b_0}(\zeta_{b_0})(-\ii\sigma_2)\ii^{-\sigma_3}\ee^{\ii n h_{0}(b_0)\sigma_3},\,\lambda\in D_{b_0,-}(\delta),\\
		\mathbf{K}_{3}(\lambda)\mathbf{H}_{b_0}(\lambda)&n^{\ii p\sigma_3/2}\ee^{\ii nh_{0}(b_0)\sigma_3}\ii^{-\sigma_3}\mathbf{U}_{b_0}(\zeta_{b_0})(-\ii\sigma_2)\ii^{-\sigma_3}\ee^{\ii n h_{0}(b_0)\sigma_3}\\
&\times\ee^{\frac{-\ii n\kappa_0}{2}\sigma_3}(\ii\sigma_2)\ee^{\frac{\ii n\kappa_0}{2}\sigma_3},\,\lambda\in D_{b_0,+}(\delta),\\
		\mathbf{H}_{b_0}(\lambda)&\equiv\mathbf{H}_{b_0}(\lambda; \chi, \tau):=\left(\frac{b_0-\lambda}{f_{b_0}(\lambda)}\right)^{\ii p\sigma_3}\left(c_0-\lambda\right)^{-\ii p\sigma_3}(\ii\sigma_2),
	\end{aligned}\right.
\end{equation}
where $f_{c_0}(\lambda):=f_{c_0}(\lambda; \chi, \tau), f_{b_0}(\lambda):=f_{b_0}(\lambda; \chi, \tau)$ are two conformal mappings defined at the neighbourhood of $\lambda=b_0$ and $\lambda=c_0$ respectively,
\begin{equation}
	f_{b_0}^2(\lambda)=2\left(h_0(b_0)-h_0(\lambda)\right), \quad f_{c_0}^2(\lambda)=2\left(h_0(\lambda)-h_0(c_0)\right).
\end{equation}
For convenience, we suppose $f'_{b_0}(b_0)=-\sqrt{-h_{0}''(b_0)}<0, f'_{c_0}(c_0)=\sqrt{h_0''(c_0)}>0$. It should be noticed that $h_0(\lambda)$ is discontinuous at the point $\lambda=b_0$,
and in this case, we choose the right value of the cut $\Sigma_{g_0}^{\pm}$ in the later analysis, that is  $h_0(b_0):=h_{0,-}(b_0)$.
The variables $\zeta_{b_0}, \zeta_{c_0}$ are defined as $\zeta_{b_0}=n^{1/2}f_{b_0}(\lambda), \zeta_{c_0}=n^{1/2}f_{c_0}(\lambda)$,
and the solution to $\mathbf{U}(\zeta)$ can be given with the parabolic cylinder function. The large-$\zeta$ asymptotics is
\begin{equation}\label{eq:U-genus-zero}
	\mathbf{U}(\zeta)\zeta^{\ii p\sigma_3}=\mathbb{I}+\frac{1}{2\ii\zeta}\begin{bmatrix}0&\alpha\\
		-\beta&0
	\end{bmatrix}+\begin{bmatrix}\mathcal{O}(\zeta^{-2})&\mathcal{O}(\zeta^{-3})\\
		\mathcal{O}(\zeta^{-3})&\mathcal{O}(\zeta^{-2})
	\end{bmatrix},\quad\zeta\to\infty,
\end{equation}
where
\begin{equation}\label{eq:alphabeta}
	\alpha=2^{\frac{3}{4}}\sqrt{2\pi}\Gamma\left(\frac{\ii\ln(2)}{2\pi}\right)^{-1}\ee^{\ii\pi/4}\ee^{\ii(\ln(2))^2/(2\pi)},\quad \beta=-\alpha^*.
\end{equation}
The inner parametrices at the local points $\lambda=a_0$ and $\lambda=a_0^*$ are related to the Airy function \cite{Bilman-arxiv-2021,ling2022large}, which can be defined as
$\dot{\mathbf{T}}_{3}^{a_0}(\lambda; \chi, \tau)$ and $\dot{\mathbf{T}}_{3}^{a_0^*}(\lambda; \chi, \tau)$ respectively.
Then the global parametrix for $\mathbf{T}_{3}(\lambda; \chi, \tau)$ can be defined by
\begin{equation}
	\dot{\mathbf{T}}_{3}(\lambda; \chi, \tau):=\left\{\begin{aligned}&\dot{\mathbf{T}}_{3}^{b_0}(\lambda; \chi, \tau), &\quad\lambda\in D_{b_0}(\delta),\\
		&\dot{\mathbf{T}}_{3}^{c_0}(\lambda; \chi, \tau),&\quad\lambda\in D_{c_0}(\delta),\\
		&\dot{\mathbf{T}}_{3}^{a_0}(\lambda; \chi, \tau),&\quad\lambda\in D_{a_0}(\delta),\\
		&\dot{\mathbf{T}}_{3}^{a_0^*}(\lambda; \chi, \tau)&\quad\lambda\in D_{a_0^*}(\delta),\\
		&\dot{\mathbf{T}}_{3}^{\rm out}(\lambda; \chi, \tau),&\quad\lambda\in \mathbb{C}\setminus\left(\overline{D_{b_0}(\delta)\cup D_{c_0}(\delta)\cup D_{a_0}(\delta)\cup D_{a_0^*}(\delta)}\cup \Sigma_{g_0}^{\pm}\cup I\right).\end{aligned}\right.
\end{equation}
Next,we will analyze the error between $\mathbf{T}_{3}(\lambda; \chi, \tau)$ and $\dot{\mathbf{T}}_{3}(\lambda; \chi, \tau)$.
\subsection{Error analysis}
Set the error function between $\mathbf{T}_{3}(\lambda; \chi, \tau)$ and $\dot{\mathbf{T}}_{3}(\lambda; \chi, \tau)$ as
\begin{equation}
	\mathcal{E}_{3}(\lambda; \chi, \tau):=\mathbf{T}_{3}(\lambda; \chi, \tau)\left(\dot{\mathbf{T}}_{3}(\lambda; \chi, \tau)\right)^{-1}.
\end{equation}
For convenience, denote $\mathbf{V}_{\mathcal{E}_{3}}(\lambda; \chi, \tau)$ as the jump matrix for $\mathcal{E}_{3}(\lambda; \chi, \tau)$ and $\Sigma_{\mathcal{E}_{3}}$ as the jump contours.
From the definition of $\dot{\mathbf{T}}_{3}(\lambda; \chi, \tau)$,
the jump matrices $\mathbf{V}_{\mathcal{E}_{3}}(\lambda; \chi, \tau)$ at the boundary of $D_{b_0}(\delta), D_{c_0}(\delta), D_{a_0}(\delta), D_{a_0^*}(\delta)$ equal to
\begin{equation}
	\begin{aligned}
		\mathbf{V}_{\mathcal{E}_{3}}(\lambda; \chi, \tau)&=\dot{\mathbf{T}}_{3}^{b_{0}, c_{0} , a_0, a_0^*}(\lambda; \chi, \tau)\left(\dot{\mathbf{T}}_{3}^{\rm out}(\lambda; \chi, \tau)\right)^{-1},\quad\lambda\in\partial D_{b_{0},c_{0}, a_0, a_0^*}(\delta).
	\end{aligned}
\end{equation}
If $\lambda\in\partial D_{b_0}(\delta)$ and $\lambda\in\partial D_{c_0}(\delta)$, $\mathbf{V}_{\mathcal{E}_{3}}(\lambda; \chi, \tau) $ are written as
\begin{multline}\label{eq:E2alpha2}
	\mathbf{V}_{\mathcal{E}_{3}}(\lambda; \chi, \tau)=\pmb{\mathcal{H}}_{b_0}(\lambda; \chi, \tau)\mathbf{U}_{b_0}(\zeta_{b_0})\zeta_{b_0}^{\ii p\sigma_3}\pmb{\mathcal{H}}_{b_0}(\lambda; \chi, \tau)^{-1},\quad\lambda\in\partial D_{b_0}(\delta),\\
	\pmb{\mathcal{H}}_{b_0}(\lambda; \chi, \tau):=\mathbf{K}_{3}(\lambda)\mathbf{H}_{b_0}(\lambda)n^{\ii p\sigma_3/2}\ee^{\ii nh_{0}(b_0)\sigma_3}\ii^{-\sigma_3},
\end{multline}
and
\begin{multline}\label{eq:E2beta2}
	\mathbf{V}_{\mathcal{E}_{3}}(\lambda; \chi, \tau)=\pmb{\mathcal{H}}_{c_0}(\lambda; \chi, \tau)\mathbf{U}_{c_0}(\zeta_{c_0})\zeta_{c_0}^{\ii p\sigma_3}\pmb{\mathcal{H}}_{c_0}(\lambda; \chi, \tau)^{-1},\quad\lambda\in\partial D_{c_0}(\delta),\\
	\pmb{\mathcal{H}}_{c_0}(\lambda; \chi, \tau):=\mathbf{K}_{3}(\lambda)\mathbf{H}_{c_0}(\lambda)n^{\ii p\sigma_3/2}\ee^{-\ii nh_{0}(c_0)\sigma_3}.
\end{multline}
If $\lambda\in\partial D_{a_0}(\delta)\cup\partial D_{a_0^*}(\delta)$, $\mathbf{V}_{\mathcal{E}_{3}}(\lambda; \chi, \tau)$ can be given in a similar formula with our previous article(Eq.(101) in \cite{ling2022large}).
From the asymptotic expression of $\mathbf{U}(\zeta)$ in Eq.\eqref{eq:U-genus-zero} and the estimation for $\lambda\in\partial D_{a_0}(\delta)\cup\partial D_{a_0^*}(\delta)$ in \cite{ling2022large}, the jump matrices $\mathbf{V}_{\mathcal{E}_{3}}(\lambda; \chi, \tau)$ satisfies the following estimation,
\begin{equation}\label{eq:error-est-genus-zero}
	\begin{aligned}
		\|\mathbf{V}_{\mathcal{E}_{3}}(\lambda; \chi, \tau)-\mathbb{I}\|&=\mathcal{O}\left(\ee^{-\mu_3 n}\right)\,\,(\mu_3>0),\, \lambda\in C_{L_{3,\Sigma}}^{\pm}\cup C_{R_{3,\Sigma}}^{\pm}\cup C_{R_{3,\Gamma}}^{\pm}\cup C_{L_{3,\Gamma}}^{\pm},\\
		\|\mathbf{V}_{\mathcal{E}_{3}}(\lambda; \chi, \tau)-\mathbb{I}\|&=\mathcal{O}(n^{-1}),\quad \lambda\in\partial D_{a_0,a_0^*}(\delta),\\
		\|\mathbf{V}_{\mathcal{E}_{3}}(\lambda; \chi, \tau)-\mathbb{I}\|&=\mathcal{O}(n^{-1/2}),\quad \lambda\in\partial D_{b_{0}, c_{0}}(\delta).\\
	\end{aligned}
\end{equation}
Under this case, the solution $q^{[n]}(n\chi, n\tau)$ can be recovered by
\begin{equation}\label{eq:qn-no-1}
	\begin{aligned}
		q^{[n]}(n\chi, n\tau)&=2\ii r\lim\limits_{\lambda\to\infty}\lambda\mathbf{T}_{3}(\lambda; \chi, \tau)_{12}\\
		&=2\ii r\lim\limits_{\lambda\to\infty}\lambda\left(\mathcal{E}_{3,11}(\lambda; \chi, \tau)\dot{\mathbf{T}}_{3,12}^{\rm out}(\lambda; \chi, \tau)+\mathcal{E}_{3,12}(\lambda; \chi, \tau)\dot{\mathbf{T}}_{3,22}^{\rm out}(\lambda; \chi, \tau)\right).
	\end{aligned}
\end{equation}
Moreover, we can simplify Eq.\eqref{eq:qn-no-1} into
\begin{equation}
	q^{[n]}(n\chi, n\tau)=2\ii r\lim\limits_{\lambda\to\infty}\lambda\left(\dot{\mathbf{T}}_{3,12}^{\rm out}(\lambda; \chi, \tau)+\mathcal{E}_{3,12}(\lambda; \chi, \tau)\right).
\end{equation}

Next, we will calculate the entry of $\mathcal{E}_{3,12}(\lambda; \chi, \tau)$ and then give the leading-order term for $q^{[n]}(n\chi, n\tau)$. From the error estimations in Eq.\eqref{eq:error-est-genus-zero}, we only calculate it for $\lambda\in\partial D_{b_{0}}(\delta)$ and $\lambda\in\partial D_{c_{0}}(\delta)$. In other contours, we omit the calculations.
When $\lambda\in\partial D_{b_0}(\delta)\cup\partial D_{c_0}(\delta)$, with the Plemelj formula, the solution to $\mathcal{E}_{3}(\lambda; \chi, \tau)$ is
\begin{equation}\label{eq:E2}
	\mathcal{E}_{3}(\lambda; \chi, \tau)=\mathbb{I}+\frac{1}{2\pi\ii}\int_{\partial D_{b_0}(\delta)\cup \partial D_{c_0}(\delta)}\frac{\mathcal{E}_{3,-}(\xi; \chi, \tau)(\mathbf{V}_{\mathcal{E}_{3}}(\xi; \chi, \tau)-\mathbb{I})}{\xi-\lambda}d\xi+\mathcal{O}\left(n^{-1}\right).
\end{equation}
When $\lambda\to\infty$, the asymptotic expansion is given by,
\begin{equation}
	\mathcal{E}_{3}(\lambda; \chi, \tau)=\mathbb{I}-\frac{1}{2\pi\ii}\sum_{j=1}^{\infty}\lambda^{-j}\int_{\partial D_{b_0}(\delta)\cup \partial D_{c_0}(\delta)}\mathcal{E}_{3,-}(\xi; \chi, \tau)(\mathbf{V}_{\mathcal{E}_{3}}(\xi; \chi, \tau)-\mathbb{I})\xi^{j-1}d\xi+\mathcal{O}\left(n^{-1}\right),\quad |\lambda|\to\infty.
\end{equation}
Then we further have
\begin{multline}
	\lim \limits_{\lambda\to\infty}\lambda\mathcal{E}_{3,12}(\lambda; \chi, \tau)=-\frac{1}{2\pi\ii}\Big[\int_{\partial D_{b_0}(\delta)\cup\partial D_{c_0}(\delta)}\mathcal{E}_{3,11,-}(\xi; \chi, \tau)\mathbf{V}_{\mathcal{E}_{3},12}(\xi; \chi, \tau)d\xi\\
+\int_{\partial D_{b_0}(\delta)\cup\partial D_{c_0}(\delta)}\mathcal{E}_{3,12,-}(\xi; \chi, \tau)\left(\mathbf{V}_{\mathcal{E}_{3},22}(\xi; \chi, \tau)-1\right)d\xi\Big]+\mathcal{O}\left(n^{-1}\right).
\end{multline}
From the definition of $\mathbf{V}_{\mathcal{E}_3}(\lambda; \chi, \tau)$ in Eq.\eqref{eq:E2alpha2} and Eq.\eqref{eq:E2beta2}, we simplify the potential $q^{[n]}(n\chi, n\tau)$ as
\begin{equation}
	q^{[n]}(n\chi, n\tau)=2\ii r\lim\limits_{\lambda\to\infty}\lambda\dot{\mathbf{T}}_{3,12}^{\rm out}(\lambda; \chi, \tau)-\frac{r}{\pi}\int_{\partial D_{b_0}(\delta)\cup\partial D_{c_0}(\delta)}\mathbf{V}_{\mathcal{E}_{3},12}(\xi; \chi, \tau)d\xi+\mathcal{O}\left(n^{-1}\right).
\end{equation}
For $\lambda\in\partial D_{b_0}(\delta)\cup\partial D_{c_0}(\delta),$ $\mathbf{V}_{\mathcal{E}_3,12}(\lambda; \chi, \tau)$ equals to
\begin{multline}
	\mathbf{V}_{\mathcal{E}_3,12}(\lambda; \chi, \tau)=-\ii \frac{n^{\ii p}\left(H_{c_0, 11}(\lambda)\right)^2\left(K_{3,11}(\lambda)\right)^2\ee^{-2\ii n h_0(c_0)}\alpha}{2n^{1/2}f_{c_0}(\lambda)}\\
-\ii \frac{n^{-\ii p}\left(H_{c_0, 22}(\lambda)\right)^2\left(K_{3,12}(\lambda)\right)^2\ee^{2\ii n h_0(c_0)}\beta}{2n^{1/2}f_{c_0}(\lambda)}+\mathcal{O}(n^{-1}),\quad \lambda\in \partial D_{c_0}(\delta),
\end{multline}
and
\begin{multline}
	\mathbf{V}_{\mathcal{E}_3,12}(\lambda; \chi, \tau)=\ii \frac{n^{\ii p}\left(H_{b_0,12,-}(\lambda)\right)^{-2}\left(K_{3,12, -}(\lambda)\right)^2\ee^{2\ii n h_{0,-}(b_0)}\alpha}{2n^{1/2}f_{b_0,-}(\lambda)}\\+\ii \frac{n^{-\ii p}
\left(H_{b_0, 12,-}(\lambda)\right)^2\left(K_{3,11,-}(\lambda)\right)^2\ee^{-2\ii n h_{0,-}(b_0)}\beta}{2n^{1/2}f_{b_0,-}(\lambda)}+\mathcal{O}(n^{-1}),\quad \lambda\in \partial D_{b_0}(\delta),
\end{multline}
where $\alpha$ and $\beta$ are defined as in Eq. \eqref{eq:alphabeta}.
By the residue theorem, we have
\begin{multline}\label{eq:V-res}
	-\frac{1}{\pi}\int_{\partial D_{b_0}(\delta)\cup\partial D_{c_0}(\delta)}\mathbf{V}_{\mathcal{E}_{3},12}(\xi; \chi, \tau)d\xi\\=\frac{n^{\ii p}\left(H_{b_0,12,-}(b_0)\right)^{-2}\left(K_{3,12, -}(b_0)\right)^2
\ee^{2\ii n h_{0,-}(b_0)}\alpha}{n^{1/2}\sqrt{-h''_{0, -}(b_0)}}+ \frac{n^{-\ii p}\left(H_{b_0,12,-}(b_0)\right)^2\left(K_{3,-,11}(b_0)\right)^2\ee^{-2\ii n h_{0,-}(b_0)}\beta}{n^{1/2}\sqrt{-h''_{0,-}(b_0)}}\\
+\frac{n^{\ii p}\left(H_{c_0,11}(c_0)\right)^2\left(K_{3,11}(c_0)\right)^2\ee^{-2\ii n h_{0}(c_0)}\alpha}{n^{1/2}\sqrt{h''_{0}(c_0)}}+ \frac{n^{-\ii p}\left(H_{c_0,22}(c_0)\right)^2\left(K_{3,12}(c_0)\right)^2\ee^{2\ii n h_{0}(c_0)}\beta}{n^{1/2}\sqrt{h_{0}''(c_0)}}+\mathcal{O}(n^{-1}).
\end{multline}
Substituting the entries of $\mathbf{H}_{b_0, c_0}(\lambda)$ and  $\mathbf{K}_{3}(\lambda)$ into Eq.\eqref{eq:V-res}, we can get the asymptotic expression in the genus-zero-down region as Eq.\eqref{eq:q-gen-zero},
\begin{multline}\label{eq:q-gen-zero}
	q^{[n]}(n\chi, n\tau)=r\ee^{2\ii k_{3}(\infty)-\ii n\kappa_0}\Bigg[\frac{\sqrt{2p}}{n^{1/2}\sqrt{-h''_{0,-}(b_0)}}\left(m_{-}^{b_0}\ee^{\ii\phi_{b_0}}-m_{+}^{b_0}\ee^{-\ii\phi_{b_0}}\right)\\
	+\frac{\sqrt{2p}}{n^{1/2}\sqrt{h''_{0}(c_0)}}\left(m_{+}^{c_0}\ee^{\ii\phi_{c_0}}-m_{-}^{c_0}\ee^{-\ii\phi_{c_0}}\right)-\ii {\Im (a_0)}\Bigg]+\mathcal{O}(n^{-1}),
\end{multline}
where
\begin{equation}\label{eq:para-genus-zero}
	\begin{aligned}
		 \phi_{b_0}&{=}\frac{\pi}{4}{+}\log(2)p{-}\arg\left(\Gamma\left(\ii p\right)\right){+}2k_{3,-}(b_0){+}2nh_{0,-}(b_0){+}p\log\left({-}nh_{0,-}''(b_0)(c_0-b_0)^2\right){-}n\kappa_0,\\
		 \phi_{c_0}&{=}\frac{\pi}{4}{+}\log(2)p{-}\arg\left(\Gamma\left(\ii p\right)\right){-}2k_3(c_0){-}2nh_{0}(c_0){+}p\log\left(nh_{0}''(c_0)(c_0-b_0)^2\right){+}n\kappa_0,\\
		m_{+}^{b_0}&{=}\frac{1}{2}{+}\frac{1}{4}\left(\sqrt{\frac{b_0{-}a_0}{b_0{-}a^*_0}}{+}\left(\sqrt{\frac{b_0{-}a_0}{b_0{-}a^*_0}}\right)^{-1}\right),\quad
		m_{-}^{b_0}{=}\frac{1}{2}{-}\frac{1}{4}\left(\sqrt{\frac{b_0{-}a_0}{b_0{-}a^*_0}}{+}\left(\sqrt{\frac{b_0{-}a_0}{b_0{-}a^*_0}}\right)^{-1}\right),\\
		m_{+}^{c_0}&{=}\frac{1}{2}{+}\frac{1}{4}\left(\sqrt{\frac{c_0{-}a_0}{c_0{-}a^*_0}}{+}\left(\sqrt{\frac{c_0{-}a_0}{c_0{-}a^*_0}}\right)^{-1}\right),\quad
		m_{-}^{c_0}{=}\frac{1}{2}{-}\frac{1}{4}\left(\sqrt{\frac{c_0{-}a_0}{c_0{-}a^*_0}}{+}\left(\sqrt{\frac{c_0{-}a_0}{c_0{-}a^*_0}}\right)^{-1}\right).
	\end{aligned}
\end{equation}
Similarly, by choosing one fixed $\tau$ in this region, we numerically verify the consistency between the exact solution and the asymptotic solution in Fig.\ref{fig:genus-zero}.
\begin{figure}[ht]
	\centering
	\includegraphics[width=1\textwidth]{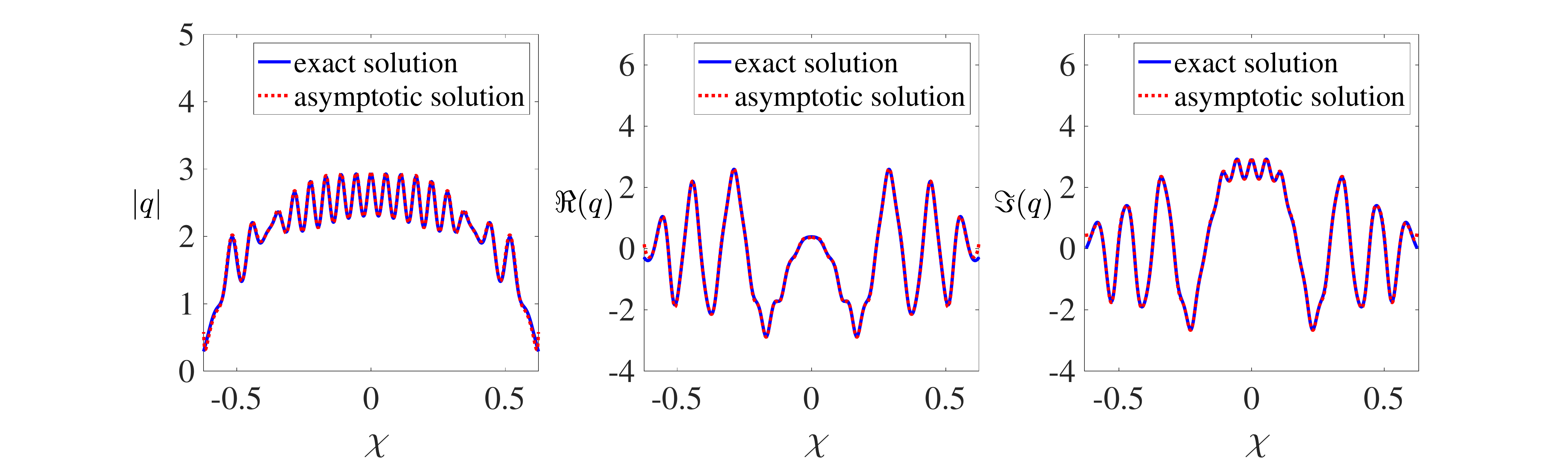}
	\caption{The comparison between the exact solution ($20$-th order KMBs) and its asymptotic solution in the genus-zero-down region by choosing $\tau=\frac{1}{10}$ (as shown by the green dashed line
in the middle panel in Fig. \ref{fig:KM}), $c_1=-c_2=1$. The left one is the modulus of $q^{[n]}(n\chi, n\tau)$, and the middle and right panels are the real and imaginary parts of $q^{[n]}(n\chi, n\tau)$ respectively.}
	\label{fig:genus-zero}
\end{figure}

In the next section, we will analyze the asymptotics in the algebraic-decay region.
\section{The algebraic-decay region}
\label{sec:al-decay}
In the above discussion, we studied the large-order asymptotics for four different regions. To give the leading-order term, we introduce four kinds of $g$-functions and modify the original phase term $\vartheta(\lambda; \chi, \tau)$ into a new one. In the algebraic-decay region, the original phase term $\vartheta(\lambda; \chi, \tau)$ has three real critical points, which can be used as the controlling phase term.
By choosing one fixed $\chi$ and $\tau$, we give the sign chart of $\Im(\vartheta(\lambda; \chi, \tau))$ in Fig.\ref{fig:al-decay}.
\begin{figure}[ht]
	\centering
	\includegraphics[width=0.45\textwidth]{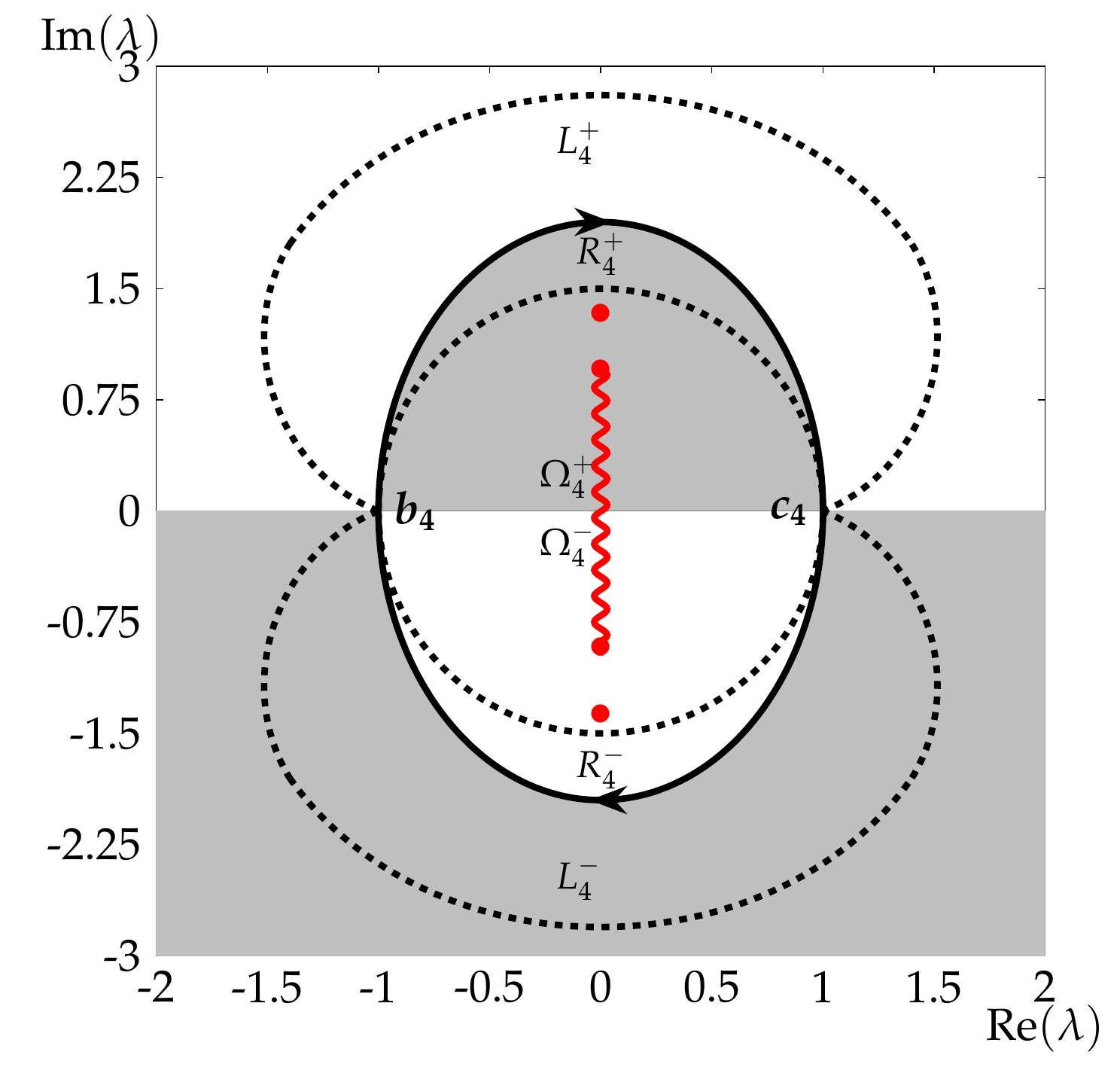}
	\centering
	\includegraphics[width=0.45\textwidth]{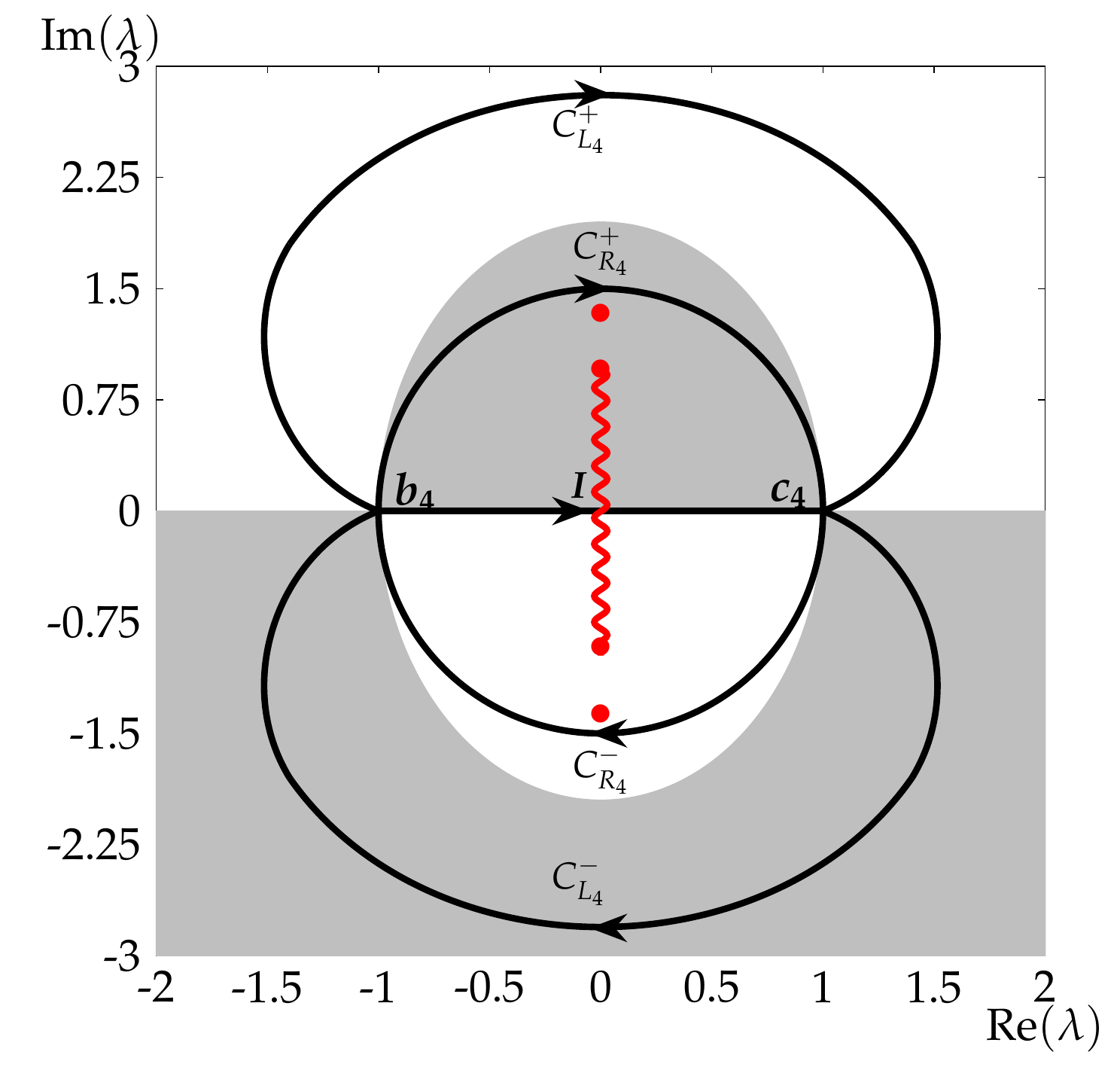}
	\caption{The sign chart of ${\Im}\left(\vartheta\left(\lambda; \frac{1}{2}, \frac{1}{200}\right)\right)$ in the algebraic-decay region, where ${\Im}\left(\vartheta\left(\lambda; \frac{1}{2}, 0\right)\right)<0$
(shaded) and ${\Im}\left(\vartheta\left(\lambda; \frac{1}{2}, \frac{1}{200}\right)\right)>0$ (unshaded).
 The left one gives the original jump contour for $\mathbf{S}_{4}(\lambda; \chi, \tau)$, and the right one is the corresponding jump contour after deformation.}
	\label{fig:al-decay}
\end{figure}

Similarly, introduce the matrix $\mathbf{S}_{4}(\lambda; \chi, \tau)$ defined by
\begin{equation}
	\mathbf{S}_{4}(\lambda; \chi, \tau):=\left\{\begin{aligned}&\mathbf{M}^{[n]}(\lambda; \chi, \tau)\ee^{-\ii n\vartheta(\lambda; \chi, \tau)\sigma_3}\mathbf{Q}_{d}^{-1}\ee^{\ii n\vartheta(\lambda; \chi, \tau)\sigma_3},\quad\lambda\in D_{0}\cap\left(D_{4}^{+}\cup D_{4}^{-}\right)^{c},\\
		&\mathbf{M}^{[n]}(\lambda; \chi, \tau),\quad\quad\quad\quad\quad\quad\quad\quad\quad\quad\quad\quad\quad\quad\quad\quad\quad {\rm otherwise},\end{aligned}\right.
\end{equation}
where $D_{4}^{\pm}=\Omega_4^{\pm}\cup R_4^{\pm}$. Next, set the sectional analytic matrices $\mathbf{T}_{4}(\lambda; \chi, \tau)$ as follows,
\begin{equation}
	\begin{aligned}
		\mathbf{T}_{4}(\lambda; \chi, \tau):&=\mathbf{S}_{4}(\lambda; \chi, \tau)\ee^{-\ii n\vartheta(\lambda; \chi, \tau)\sigma_3}
		\left(\mathbf{Q}_{R}^{[2]}\right)^{-1}\ee^{\ii n\vartheta(\lambda; \chi, \tau)\sigma_3},\quad &\lambda\in L_4^{+},\\
		\mathbf{T}_{4}(\lambda; \chi, \tau):&=\mathbf{S}_{4}(\lambda; \chi, \tau)\mathbf{Q}_{L}^{[2]}\ee^{-\ii n\vartheta(\lambda; \chi, \tau)\sigma_3}
		\mathbf{Q}_{C}^{[2]}\ee^{\ii n\vartheta(\lambda; \chi, \tau)\sigma_3},\quad &\lambda\in R_{4}^{+},\\
		\mathbf{T}_{4}(\lambda; \chi, \tau):&=\mathbf{S}_{4}(\lambda; \chi, \tau)\mathbf{Q}_{L}^{[2]},\quad &\lambda\in \Omega^{+}_{4},\\
		\mathbf{T}_{4}(\lambda; \chi, \tau):&=\mathbf{S}_{4}(\lambda; \chi, \tau)\ee^{-\ii n\vartheta(\lambda; \chi, \tau)\sigma_3}
		\left(\mathbf{Q}_{R}^{[1]}\right)^{-1}\ee^{\ii n\vartheta(\lambda; \chi, \tau)\sigma_3},\quad &\lambda\in L_{4}^{-},\\
		\mathbf{T}_{4}(\lambda; \chi, \tau):&=\mathbf{S}_{4}(\lambda; \chi, \tau)\mathbf{Q}_{L}^{[1]}\ee^{-\ii n\vartheta(\lambda; \chi, \tau)\sigma_3}
		\mathbf{Q}_{C}^{[1]}
		\ee^{\ii n\vartheta(\lambda; \chi, \tau)\sigma_3},\quad &\lambda\in R_{4}^{-},\\
		\mathbf{T}_{4}(\lambda; \chi, \tau):&=\mathbf{S}_{4}(\lambda; \chi, \tau)\mathbf{Q}_{L}^{[1]},\quad &\lambda\in \Omega_{4}^{-},\\
\mathbf{T}_{4}(\lambda; \chi, \tau):&=\mathbf{S}_{4}(\lambda; \chi, \tau),\quad &\text{otherwise}.
	\end{aligned}
\end{equation}
By a direct calculation, the jump conditions of $\mathbf{T}_{4}(\lambda; \chi, \tau)$ change into,
\begin{equation}
	\begin{aligned}
		\mathbf{T}_{4,+}(\lambda; \chi, \tau)&=\mathbf{T}_{4,-}(\lambda; \chi, \tau)\ee^{-\ii n\vartheta(\lambda; \chi, \tau)\sigma_3}\mathbf{Q}_{R}^{[2]}\ee^{\ii n\vartheta(\lambda; \chi, \tau)\sigma_3}, \quad&\lambda\in C_{L_4}^{+},\\
		\mathbf{T}_{4,+}(\lambda; \chi, \tau)&=\mathbf{T}_{4,-}(\lambda; \chi, \tau)\ee^{-\ii n\vartheta(\lambda; \chi, \tau)\sigma_3}\mathbf{Q}_{C}^{[2]}\ee^{\ii n\vartheta(\lambda; \chi, \tau)\sigma_3}, \quad&\lambda\in C_{R_4}^{+},\\
		\mathbf{T}_{4,+}(\lambda; \chi, \tau)&=\mathbf{T}_{4,-}(\lambda; \chi, \tau)\ee^{-\ii n\vartheta(\lambda; \chi, \tau)\sigma_3}\mathbf{Q}_{R}^{[1]}\ee^{\ii n\vartheta(\lambda; \chi, \tau)\sigma_3}, \quad&\lambda\in C_{L_4}^{-},\\
		\mathbf{T}_{4,+}(\lambda; \chi, \tau)&=\mathbf{T}_{4,-}(\lambda; \chi, \tau)\ee^{-\ii n\vartheta(\lambda; \chi, \tau)\sigma_3}\mathbf{Q}_{C}^{[1]}\ee^{\ii n\vartheta(\lambda; \chi, \tau)\sigma_3}, \quad&\lambda\in C_{R_4}^{-},\\
		\mathbf{T}_{4,+}(\lambda; \chi, \tau)&=\mathbf{T}_{4,-}(\lambda; \chi, \tau)2^{\sigma_{3}},\quad&\lambda\in I.
	\end{aligned}
\end{equation}
From the sign chart of $\Im(\vartheta(\lambda; \chi, \tau))$ in Fig.\ref{fig:al-decay}, when $n$ is large, the jump conditions will exponentially decay into the identity matrix except for the contour $I=\left[b_4, c_4\right]$. In the next subsection, we will construct the parametrix for $\mathbf{T}_{4}(\lambda; \chi, \tau)$.
\subsection{Parametrix construction for $\mathbf{T}_{4}(\lambda; \chi, \tau)$} From the jump matrix in the contour $I=\left[b_4, c_4\right]$, we directly give the outer parametrix $\dot{\mathbf{T}}_{4}^{\rm out}(\lambda; \chi, \tau)$ as
\begin{equation}\label{eq:T1out}
	\dot{\mathbf{T}}_{4}^{\rm out}(\lambda; \chi, \tau)=\left(\frac{\lambda-b_4}{\lambda-c_4}\right)^{\ii p\sigma_3}, \quad p=\frac{\log\left(2\right)}{2\pi}, \quad \lambda\in\mathbb{C}\setminus I.
\end{equation}
Following the calculation in the genus-zero-down region, the inner parametrices at the neighborhood of $b_{4}$ and $c_{4}$ can be constructed as
\begin{equation}
	\begin{aligned}
		\dot{\mathbf{T}}_{4}^{b_4}(\lambda; \chi, \tau):&{=}n^{{-}\ii p\sigma_3/2}\ee^{{-}\ii n\vartheta(b_4; \chi, \tau)\sigma_3}\mathbf{H}_{b_4}(\lambda; \chi, \tau)\mathbf{U}_{b_4}(\zeta_{b_4})({-}\ii\sigma_2)\ee^{\ii n\vartheta(b_4; \chi, \tau)\sigma_3},\,\lambda\in D_{b_4}(\delta),\\
		\dot{\mathbf{T}}_{4}^{c_4}(\lambda; \chi, \tau):&=n^{\ii p\sigma_3/2}\ee^{-\ii n\vartheta(c_4; \chi, \tau)\sigma_3}\mathbf{H}_{c_4}(\lambda; \chi, \tau)\mathbf{U}_{c_4}(\zeta_{c_4})\ee^{\ii n\vartheta(c_4; \chi, \tau)\sigma_3},\, \lambda\in D_{c_4}(\delta),
	\end{aligned}
\end{equation}
where
\begin{equation}
	\mathbf{H}_{b_4}(\lambda; \chi, \tau):=\left(\frac{b_4-\lambda}{f_{b_4}(\lambda; \chi, \tau)}\right)^{\ii p\sigma_3}(c_4-\lambda)^{-\ii p\sigma_3}\left(\ii\sigma_2\right),\quad \mathbf{H}_{c_4}(\lambda; \chi, \tau):=(\lambda-b_4)^{\ii p\sigma_3}
\left(\frac{f_{c_4}(\lambda; \chi, \tau)}{\lambda-c_4}\right)^{\ii p\sigma_3},
\end{equation}
and $f_{b_4}(\lambda; \chi, \tau)$ and $f_{c_4}(\lambda; \chi, \tau)$ are two conformal mappings defined as
\begin{equation}\label{eq:con-map}
	f_{b_4}(\lambda; \chi, \tau)^2=2\left[\vartheta(b_4, \chi, \tau)-\vartheta(\lambda; \chi, \tau)\right],\quad f_{c_4}(\lambda; \chi, \tau)^2=2\left[\vartheta(\lambda; \chi, \tau)-\vartheta(c_4; \chi, \tau)\right].
\end{equation}
In this case, we still choose the root such that $f'_{b_4}(b_4; \chi, \tau)=-\sqrt{-\vartheta''(b_4; \chi, \tau)}<0, f'_{c_4}(c_4; \chi, \tau)=\sqrt{\vartheta''(c_4; \chi, \tau)}>0.$  $\zeta_{b_4}$ and $\zeta_{c_4}$ are two variables defined by
$\zeta_{b_4}=n^{1/2}f_{b_4}(\lambda; \chi, \tau), \zeta_{c_4}=n^{1/2}f_{c_4}(\lambda; \chi, \tau)$.

Here $\mathbf{U}(\zeta)$ is the same as the definition in the genus-zero-down region, and its asymptotics also satisfies Eq.\eqref{eq:U-genus-zero}.

Then the global parametrix of $\mathbf{T}_{4}(\lambda; \chi, \tau)$ is
\begin{equation}
	\dot{\mathbf{T}}_{4}(\lambda; \chi, \tau):=\left\{\begin{aligned}&\dot{\mathbf{T}}_{4}^{b_4}(\lambda; \chi, \tau),&\quad\lambda\in D_{b_4}(\delta),\\
		&\dot{\mathbf{T}}_{4}^{c_4}(\lambda; \chi, \tau),&\quad\lambda\in D_{c_4}(\delta),\\
		&\dot{\mathbf{T}}_{4}^{\rm out}(\lambda; \chi, \tau),&\quad\lambda\in \mathbb{C}\setminus\left(I\cup\overline{D_{b_4}(\delta)}\cup \overline{D_{c_4}(\delta)}\right).\\
	\end{aligned}\right.
\end{equation}
Next we will analyze the error between $\mathbf{T}_{4}(\lambda; \chi, \tau)$ and the parametrix $\dot{\mathbf{T}}_{4}(\lambda; \chi, \tau)$.
\subsection{Error analysis}
Without loss of generality, set the error function between $\mathbf{T}_{4}(\lambda; \chi, \tau)$ and $\dot{\mathbf{T}}_{4}(\lambda; \chi, \tau)$ as $\mathcal{E}_4(\lambda; \chi, \tau)$, that is
\begin{equation}
	\mathcal{E}_4(\lambda; \chi, \tau):=\mathbf{T}_{4}(\lambda; \chi, \tau)\left(\dot{\mathbf{T}}_{4}(\lambda; \chi, \tau)\right)^{-1},
\end{equation}
the jump matrix of $\mathcal{E}_4(\lambda; \chi, \tau)$ can be set as $\mathbf{V}_{\mathcal{E}_4}(\lambda; \chi, \tau)$, the corresponding contours are set as $\Sigma_{\mathcal{E}_4}$.
From the definition of $\dot{\mathbf{T}}_{4}(\lambda; \chi, \tau)$, we have the following estimation for the jump matrix $\mathbf{V}_{\mathcal{E}_4}(\lambda; \chi, \tau)$,
\begin{equation}
\begin{aligned}
	\|\mathbf{V}_{\mathcal{E}_4}(\lambda; \chi, \tau)-\mathbb{I}\|&=\mathcal{O}(\ee^{-\mu_4n})(\mu_4>0),\quad \lambda\in \left(C_{L_4}^{\pm}\cup C_{R_4}^{\pm}\right)\cap \Sigma_{\mathcal{E}_4},\\
\|\mathbf{V}_{\mathcal{E}_4}(\lambda; \chi, \tau)-\mathbb{I}\|&=\mathcal{O}(n^{-1/2}),\quad \lambda\in \partial D_{b_4}(\delta)\cup \partial D_{c_4}(\delta)  .
\end{aligned}
\end{equation}

Similar to the genus-zero-down region, the potential $q^{[n]}(n\chi, n\tau)$ can be recovered from $\mathbf{T}_{4}(\lambda; \chi, \tau)$ by the following formula,
\begin{equation}\label{eq:qn-al-1}
	\begin{aligned}
		q^{[n]}(n\chi, n\tau)&=2\ii r\lim\limits_{\lambda\to\infty}\lambda \mathbf{T}_{4}(\lambda; \chi, \tau)_{12}
		\\&=2\ii r\lim\limits_{\lambda\to\infty}\lambda\left(\mathcal{E}_{4,11}(\lambda; \chi, \tau)\dot{\mathbf{T}}_{4,12}^{\rm out}(\lambda; \chi, \tau)+\mathcal{E}_{4,12}(\lambda; \chi, \tau)\dot{\mathbf{T}}_{4,22}^{\rm out}(\lambda; \chi, \tau)\right).
	\end{aligned}
\end{equation}
Moreover, we have
\begin{equation}
	q^{[n]}(n\chi, n\tau)=-r\frac{1}{\pi}\int_{\partial D_{b_4}(\delta)\cup\partial D_{c_4}(\delta)}\mathbf{V}_{\mathcal{E}_4,12}(\xi; \chi, \tau)d\xi+\mathcal{O}(n^{-3/2}).
\end{equation}
Through a similar calculation with the genus-zero-down region, the asymptotic expression in the algebraic-decay region is given by,
\begin{multline}\label{eq:al}
	q^{[n]}(n\chi, n\tau){=}r\frac{\sqrt{2p}}{n^{1/2}}\left[\frac{\ee^{{-}2\ii n\vartheta(b_4; \chi, \tau)}\left({-}\vartheta''(b_4; \chi, \tau)\right)^{{-}\ii p}}{\sqrt{{-}\vartheta''(b_4; \chi, \tau)}}\ee^{\ii\varphi(\chi, \tau)}{+}\frac{\ee^{{-}2\ii n\vartheta(c_4; \chi, \tau)}
\vartheta''(c_4; \chi, \tau)^{\ii p}}{\sqrt{\vartheta''(c_4; \chi, \tau)}}\ee^{{-}\ii\varphi(\chi, \tau)}\right]\\+\mathcal{O}(n^{-3/2}),
\end{multline}
where
$$\varphi(\chi, \tau)=-p\ln(n)-2p\ln\left(c_4-b_4\right)-p\log(2)-\frac{1}{4}\pi+\arg\left(\Gamma\left(\ii p\right)\right).$$
In this case, we choose $\tau=\frac{1}{200}$ and give the comparison between the exact solution and the asymptotic
solution with the genus-zero-infinity region together, which is shown
in Fig. \ref{fig:al}. It is seen that they are fitting very well.

\section{Conclusions and Discussions}
In this paper, we analyze the large-order asymptotics for KMBs of the NLS equation under two constraints to the vector constant $\mathbf{c}=[c_1, c_2]^{\T}$, one of which is $c_1=c_2$ and the other case is $c_1=-c_2$. In the far-field regime, the $(\chi, \tau)$ space-time plane can be partitioned into five distinct regions. Compared to the large-order asymptotics of solitons, the phase term appearing in the RHP for the KMBs involves an additional factor $\frac{\ii r}{4n}\log\left(\frac{\lambda-\ii}{\lambda+\ii}\right)$, which produces a new cut on $[-\ii, \ii]$ and brings new difficulties to study the asymptotics. Due to this new term, a genus-two region appears under the large-order asymptotics, which was not reported in the previous studies of high-order solitons and rogue waves \cite{Bilman-JDE-2021,Bilman-arxiv-2021,ling2022large}.

Up to now, through the known results in the literature \cite{Bilman-JDE-2021,Bilman-arxiv-2021,ling2022large}, we can obtain uniform insights for large-order localized waves in both the zero and non-zero backgrounds. In the far-field regime, under the zero background, the single high-order solitons have four distinct asymptotic regions and the high-order breathers have five asymptotic regions, where the new additional region is the genus-three. Similarly, under the non-zero background, compared to the high-order rogue waves there appears a new genus-two region for the large-order KMBs. In our previous studies \cite{ling2022large}, we conjectured that, under the zero background, if there are $l$ spectral parameters with the same real part, we will get a genus-$2l-1$ region under the large-order limit. We guess that for the high-order KMBs, if we construct the general high-order KMBs with two distinct spectra, namely, the phase term $\vartheta(\lambda; \chi, \tau)$ will be modified as follows,
\begin{equation}\label{eq:hatvar}
\hat{\vartheta}(\lambda; \chi, \tau)=\lambda\chi+\lambda^2\tau+\frac{\ii}{2}\log\left(\frac{\lambda-\lambda_1}{\lambda-\lambda_1^*}\right)
+\frac{\ii}{2}\log\left(\frac{\lambda-\lambda_2}{\lambda-\lambda_2^*}\right)
+\frac{\ii}{4n}\log\left(\frac{\lambda-\ii}{\lambda+\ii}\right),
\end{equation}
where $\lambda_1=\alpha_1\ii, \lambda_2=\alpha_2\ii, \alpha_1, \alpha_2>1, \alpha_1\neq \alpha_2$, and then we can get six asymptotic regions which involve a genus-four region. 

In future work, we would like to give a detailed asymptotic analysis of the corresponding RHP with the modified phase term Eq.\eqref{eq:hatvar}. Moreover, we will generalize the analysis to the general $l$ spectral parameters for the high-order KMBs. As we conjectured, a genus-$2l$ region maybe appear in the center part of asymptotic regions.

\section*{Acknowledgements}

Liming Ling is supported by the National Natural Science Foundation of China (Grant No. 12122105); Xiaoen Zhang is supported by the National Natural Science Foundation of China (Grant No.12101246),
the China Postdoctoral Science Foundation (Grant No. 2020M682692), the Guangzhou Science and Technology Program of China(Grant No. 202102020783).
\bibliographystyle{siam}
\bibliography{reference}

\end{document}